\newcommand*{\sepsign}{\;{\footnotesize $\blacklozenge$}\; }
\newcommand*{\oneto}[1]{ \{1, \dots, #1\} }
\DeclareMathOperator{\dcup}{\dot{\cup}}
\newcommand*{\R}{\mathbf{R}}
\newcommand*{\Z}{\mathbf{Z}}
\newcommand*{\mS}{\mathcal{S}}
\newcommand*{\mC}{\mathcal{C}}
\newcommand*{\formatmathnames}[1]{\textnormal{\small #1}}
\newcommand*{\pfp}{\formatmathnames{PFP}} 
\newcommand*{\rpfp}{\formatmathnames{RPFP}} 
\newcommand*{\snpp}{\formatmathnames{SNPP}} 
\newcommand*{\rap}{\formatmathnames{RAP}} 
\newcommand*{\urap}{\formatmathnames{card-RAP}} 
\newcommand*{\threesat}{\formatmathnames{3-SAT}}
\newcommand*{\setcover}{\formatmathnames{SCP}}
\newcommand*{\vcthree}{\formatmathnames{VC\,3}}
\newcommand*{\opt}{\formatmathnames{OPT}}
\newcommand*{\val}{\formatmathnames{VAL}}
\newcommand*{\alg}{\formatmathnames{ALG}}
\newcommand*{\p}{\formatmathnames{P}}
\newcommand*{\np}{\formatmathnames{NP}}
\newcommand*{\dtime}{\formatmathnames{DTIME}}
\newcommand*{\lp}{\formatmathnames{LP}}
\newcommand*{\ilp}{\formatmathnames{ILP}}
\newcommand*{\uset}{\mathcal{F}} 
\newcommand*{\rset}{X} 
\newcommand*{\fset}{f} 
\newcommand*{\Fset}{F} 
\newcommand*{\inst}{\mathcal{I}} 
\theoremstyle{plain}
\newtheorem{theorem}{Theorem}
\newtheorem*{theorem*}{Theorem}
\newtheorem{lemma}[theorem]{Lemma}
\newtheorem{proposition}[theorem]{Proposition}
\theoremstyle{definition}
\newtheorem{definition}[theorem]{Definition}
\newtheorem{example}[theorem]{Example}
\newtheorem*{problem*}{Problem}
\theoremstyle{remark}
\newtheorem{remark}[theorem]{Remark}
\author{%
	\texorpdfstring		
	{
	  David Adjiashvili$^{\,\textrm{a}}$
	  \and
	  Viktor Bindewald$^{\,\textrm{b}}$
	  \and
	  Dennis Michaels$^{\,\textrm{b}}$}
	{David Adjiashvili, Viktor Bindewald, Dennis Michaels}
}
\title{\Large%
  Robust Assignments via Ear Decompositions and Randomized
  Rounding\footnote{An extended abstract of this manuscript appeared
    in I. Chatzigiannakis, M. Mitzenmacher, Y. Rabani, D. Sangiorgi
    (Eds.), Proceedings of the 43rd International Colloquium on
    Automata, Languages, and Programming (ICALP 2016), July  12-15, 2016, Rome (Italy).}
}
\date{}
\begin{document} 
   
\maketitle

\begin{abstract}
Many real-life planning problems require making a priori decisions before all parameters of the problem have been revealed. 
An important special case of such problem arises in scheduling problems, where a set of tasks 
needs to be assigned to the available set of machines or personnel (resources), in a way 
that all tasks have assigned resources, and no two tasks share the same resource. 
In its nominal form, the resulting computational problem becomes the 
\emph{assignment problem} on general bipartite graphs. 
  
This paper deals with a robust variant of  the assignment problem modeling 
situations where certain edges in the corresponding graph are 
\emph{vulnerable} and may become unavailable after a solution has been chosen. 
The goal is to choose a minimum-cost collection of edges such that if any 
vulnerable edge becomes unavailable, the remaining part of the solution contains 
an assignment of all tasks.  

We present approximation results and hardness proofs 
for this type of problems, and establish several connections to well-known 
concepts from matching theory, robust optimization and LP-based techniques.
\end{abstract}

\textbf{Keywords.\quad}
robust optimization \sepsign matching theory \sepsign ear decomposition \sepsign randomized rounding \sepsign approximation algorithm

\section{Introduction}
\label{sec:intro}

The need for incorporating system reliability into decision making has sprung wide-spread interest in optimization models which incorporate data uncertainty in the last decades. 
The latter trend has lead to the development of several new theories including the popular field of \emph{Robust Optimization}.
In robust optimization the nominal optimization problem is equipped with a set of \emph{scenarios}, representing various possible states of nature that may occur after the solution to the problem is chosen. 
The goal is to determine a solution that performs well, in terms of feasibility or cost, in a \emph{worst case} realization of the state of nature. 

The \emph{Assignment Problem} is one of the most fundamental optimization problems arising in many reliability-sensitive systems.  
In its nominal form, the input consists of a collection of $n_T$ \emph{tasks}, a set of $n_R$ \emph{resources}  (with $n_T\leq n_R$), and \emph{assignment costs} $c_{i,j}$ representing the cost associated with assigning resource $i$ to task $j$.  
The set of allowed assignments can be represented by a bipartite graph $G := (R\dcup T,E)$ where each resource $i$ corresponds to a node $r_i\in R$, each 
task $j$ corresponds to a node $t_j\in T$, and the edge $\{r_i, t_j\}$ is present 
in $E$ if the $j$-th task can be assigned to resource $i$. 
The goal is to find a minimum-cost \emph{matching} $M \subseteq E$ that covers all
nodes in $T$, i.e.\ a set of non-adjacent edges that is incident to every node in $T$.
In the following, a subset $M$ satisfying that property is called \emph{an assignment}.
A thorough introduction to the assignment problem can be 
found in the book of Burkard, Dell'{}Amico, and Martello~\cite{burkard_book_12}.

This paper deals with a natural robust counterpart of the assignment problem that is defined as follows. 
An instance of \rap\ consists of a bipartite graph $G=(R\dcup T,E)$ and a non-negative cost vector $c\in\R_{\geq 0}^E$ representing a nominal assignment problem. 
Furthermore, a collection $\uset\subseteq 2^E$ of subsets of edges are given where each $\Fset\in\mathcal{F}$ induces a failure scenario that leads to a deletion of $\Fset$ from $G$, i.e.\ if scenario $\Fset\in\uset$ emerges, all edges from $F$ are deleted from $G$.
The goal is to find a \emph{redundant assignment} $\rset\subseteq E$ of minimum cost 
with the property that, for every $\Fset\in\uset$, the set $\rset\setminus\Fset$ contains 
an assignment in $G$ (i.e.\ the graph $(R\dcup T, X\setminus \Fset)$ contains an assignment).  

The robustness paradigm considered in this paper fits into the concept
of \emph{redundancy-based robustness} --  a well-motivated and widely
studied approach 
(See Bertsimas, Brown and Caramanis~\cite{BertsimasBrownCaramanis}, and Herroelen and Leus~\cite{herroelen2005project} for an overview of different robustness concepts).
Some of the problems falling into this category include the minimum
$k$-edge connected spanning subgraph problem (Cheriyan, Seb{\H{o}} and Szigeti~\cite{cheriyan_et_al_01}, Gabow et al.~\cite{GabowGoemansTardosWilliamson}) and the robust facility location
problem (Jain and Vazirany~\cite{jain2000approximation}, Swamy and Shmoys \cite{swamy2003fault}, Chechik and Peleg~\cite{chechik2010robust}).  

Adjiashvili, Stiller and Zenklusen~\cite{AdjishviliStillerZenklusen2014} introduce a robustness model called \emph{bulk-robustness}, which combines the standard redundancy based robustness approach with a non-uniform failure model. 
In its general form, a bulk-robust counterpart of a combinatorial optimization problem consists of an instance of the nominal problem, as well as a collection of scenarios, each comprising an arbitrary set of resources that may fail simultaneously. 
The goal is to determine a minimum-cost set of resources that contains a feasible solution, even when the resources in any single failure scenario become unavailable. In the language of bulk-robustness, \rap\ is the bulk-robust assignment 
problem.

It is important to remark that several other robust counterparts of
the assignment problem have been considered in the literature under the same, or similar, names. A
brief review on relevant models and works existing in the literature
is given in Section~\ref{sec:related-work}.

In the remainder of this section, a few motivating applications for \rap\ are provided, some connections to related notions in matching theory are established, and main results as well as technical contributions of this paper are briefly discussed.

\subsection{Motivation}\label{subsec:motivation}

The most natural applications of \rap, and redundancy-based robust optimization
in general, emerge in situations where resources can not be 
easily made available on demand. 
In such applications, any resource intended for deployment at 
a certain point in time must be \emph{reserved} at an earlier 
stage, and thus made available for potential deployment. 
Examples of such applications range from the construction of robust 
power transmission networks (Hajiaghayi, Immorlica, and 
Mirrokni~\cite{hajiaghayi2003power}) to supply chain management (Tang~\cite{tang2006robust}).

While we think that \rap\ can be a useful model to incorporate 
robustness in any assignment model with up-front decisions of the 
latter type,  a few, more concrete, applications are brought
hereafter.

\paragraph{Flexible Designs for Manufacturing Processes.}
Flexible designs for manufacturing processes have attracted 
significant attention in the operations research community in recent years. 
The motivation in this topic is the need of manufacturing systems 
in modern economies to be able to adapt to quickly changing demand patterns.
The flexible design is modeled as a problem of selecting a set of edges 
in a bipartite graph with sides corresponding to plants and products, respectively.
An edge between a plant and a product means that the plant can 
produce the corresponding product. While plants have fixed deterministic capacities, 
the demand for products can vary, and is assumed to either be random, or 
materialize from a known uncertainty set. The goal is to choose a cheap set 
of edges that will maximize the expected, or the worst-case flow between 
the two sides of the graph. In this view, \rap\ can be seen as a problem in 
process flexibility, in which uncertainty lies in the structure of the graph,
instead of the demand patters, and the goal is to always satisfy the full demand.
For an overview of related results we refer to the
papers of Chou et al.~\cite{chou2010design}, Simchi-Levi and
Wei~\cite{simchi2012understanding,simchi2015worst}, and 
D{\'e}sir et al.~\cite{desir2016sparse}.

\paragraph{Staff Training.}
Large companies often employ intensive training programs for their 
employees, designed to adapt the available pool of skills to their dynamic needs. 
For instance, developing new software products often requires 
improved knowledge on recent technologies that employees have to 
be trained for.  
It is natural to incorporate the incurred training costs into the task 
allocation problem which, in turn, naturally corresponds to an assignment problem. 
The cost of assigning an employee to perform a given task in the 
project corresponds to the training cost incurred by training the 
employee to perform this task. \\
In a more realistic scenario, some employee to task assignments might 
become unavailable \emph{even if the employee were trained to perform the task}. 
That type of vulnerability is very common, and can be caused, e.g.\ by 
employee dissatisfaction from his task assignment and by unexpected 
inability (due to injury or unavailability of equipment, etc.). 
\rap\, is a suitable model for deciding on robust training programs 
for the project, where skill sets of the employees allow for reassignments 
even if some employee to task assignment becomes unavailable.
  
\paragraph{Continuity of Service.}
In industries such as health care and consulting, it is often 
desirable to maintain very stable client to operator relationships.
A typical example is a nursing home where elderly people feel more 
secure and comfortable if the nurses taking care of them do not change often.  
This is called \emph{continuity of care} in the health care literature 
(Carello and Lanzarone~\cite{carello_lanzarone_14}).
In the context of consulting the benefit is the reduction of the loss
of undocumented knowledge and employees' stress levels due to
decreased number of reassignments. Indirectly this also improves the
customer's experience.\\
Again, the underlying resource allocation problem can be modeled with the help of a bipartite 
graph $G=(R \dcup T, E)$ where nodes from $R$ represent the staff 
members, while a node from $T$ corresponds to a certain \emph{task}. 
In this setting, a task can now be any combination of a patient, 
a shift, and a certain type of service.
The edges in $G$ encode whether a staff member is able to perform a 
task, and the uncertainty set $\uset \subseteq 2^E$ contains all 
edges for which the corresponding staff member has to change a shift 
with high probability for some reason.
From the continuity of service perspective, the aim is to find a 
smallest subset $\rset$ of edges in $G$ containing an assignment 
not using $\Fset$, for every $\Fset\in\uset$.
This problem gives rise to an unweighted \rap\ instance on $G$ and scenario set $\uset$.

\subsection{Overview of results and techniques}

This paper addresses the computational complexity of \rap. 
The main contributions are approximation algorithms and
hardness of approximation results. 
The study of approximation algorithms is especially 
justified by showing that \rap\ is \np-hard, even in very restricted variants.

\paragraph{Notation.}
To keep notation short, the following standard notation from graph theory is frequently used throughout the paper.
Let a graph $G:=(R\dcup T, E)$ and subsets $H\subseteq E$ and $U\subseteq R\dcup T$ be given. 
The set of all nodes in the graph $G$ covered by edge set $H$ is abbreviated by $V[H]$, and the set of all edges in $G$ having both end nodes in $U$ is denoted by $E[U]$. 
Furthermore, for a graph $G$ its node set is denoted by $V[G]$ and its edge set by $E[G]$.
For two graphs $G$ and $\bar G$ we call $\bar G$ a subgraph of $G$ if
$V[\bar G] \subseteq V[G]$ and $E[\bar G] \subseteq E[G]$.
$G[H]:=(V[H],H)$ and $G[U]:=(U,E[U])$ refer to the subgraphs of $G$ induced by $H$ and by $U$, respectively. 
$G-H$ represents the graph obtained from $G$ by deleting all edges in $H$, while $G-U$ is used for
the graph obtained from $G$ by deleting all nodes of $U$ and all edges incident to some node in $U$.
For two graphs $G^\prime:=(R^\prime\dcup T^\prime, E^\prime)$ and
$G^{\prime\prime}:=(R^{\prime\prime}\dcup T^{\prime\prime},
E^{\prime\prime})$, their union defined as $\big(\,(R^\prime \cup
R^{\prime\prime}) \dcup (T^{\prime}\cup T^{\prime\prime}),\,
E^\prime\cup E^{\prime\prime}\,\big)$ is denoted by
$G^\prime+G^{\prime\prime}$. \\
A subgraph $\bar G$ of $G$ is called \emph{spanning} if each node of $G$ is
incident to at least one edge of $\bar G$.

\subsubsection{Problem setting}

The assignment problem has a well-known interpretation as a \emph{bipartite matching problem} in the graph $G=(R\dcup T, E)$.
It is, hence, natural to view \rap\ as a robust version of the bipartite matching problem, i.e.\ to find a minimum-cost subset $M\subseteq E$ such that, for every scenario $\Fset\in \uset$, the set $M\setminus\Fset$ contains a matching covering all nodes in $T$.
Moreover, if $|R|=|T|$ additionally holds, the problem becomes a robust variant of the \emph{perfect matching} problem. 

In the following, this point of view is adopted as it facilitates a clearer exposition of results and highlights an inherent connection between \rap\ and \emph{matching-covered graphs}, a notion that is repeatedly used in this paper to develop approximation algorithms.

The next statement shows that it suffices to consider \rap\ on balanced bipartite graphs. It also implies that feasibility conditions on \rap\ can be stated in terms of perfect matchings.

\begin{proposition}\label{prop:transformation-non-balanced-to-balanced-instances}
  Any \rap\ instance can be efficiently transformed to an equivalent weighted \rap\ instance with a balanced bipartite graph such that, for all $\alpha\geq 1$, any $\alpha$-approximation for the new instance can be used to efficiently construct an $\alpha$-approximation of the original instance.
\end{proposition}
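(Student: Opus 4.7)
The plan is to pad the task side of the bipartition with dummy nodes joined to every resource by zero-cost edges that no failure scenario is allowed to hit, so that a perfect matching in the augmented graph corresponds precisely to an assignment in the original graph. Since a \rap\ instance with $|R|<|T|$ admits no assignment at all and can be rejected immediately, I may assume $|R|\geq|T|$. Setting $k:=|R|-|T|$, I would introduce fresh task nodes $T':=\{t'_1,\dots,t'_k\}$, add edges $E':=\{\{r,t'\}:r\in R,\,t'\in T'\}$ of cost zero, and form the balanced instance on $G':=(R\dcup(T\cup T'),E\cup E')$ with the cost vector that extends $c$ by zero on $E'$ and with the scenario collection $\uset\subseteq 2^E$ left unchanged, so that edges in $E'$ are never deleted. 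The construction has size $O(|E|+k|R|)$ and is clearly polynomial in the original input.

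Next I would establish a cost-preserving correspondence between feasible solutions in both directions. Given any feasible $\rset\subseteq E$ for the original instance, $\rset\cup E'$ is feasible for the balanced instance: for every $\Fset\in\uset$, an assignment $M\subseteq\rset\setminus\Fset$ covers $T$ and leaves exactly $k$ resources unmatched, which can be matched to $T'$ via intact edges of $E'$ to yield a perfect matching in $G'$. Conversely, given a feasible $\rset'\subseteq E\cup E'$ for the balanced instance, I would set $\rset:=\rset'\cap E$; for every $\Fset\in\uset$, any perfect matching of $G'$ contained in $\rset'\setminus\Fset$, after dropping its edges incident to $T'$, becomes a matching in $\rset\setminus\Fset$ covering all of $T$. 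Because the edges in $E'$ are free, both directions preserve cost exactly.

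The approximation-ratio transfer then follows immediately: the correspondence yields $\opt=\opt'$, and an $\alpha$-approximate solution $\rset'$ returned for the balanced instance is converted in linear time to $\rset:=\rset'\cap E$ with $c(\rset)=c'(\rset')\leq\alpha\,\opt'=\alpha\,\opt$. I do not foresee a genuine technical obstacle here; the only subtlety is keeping $\uset$ defined as a family of subsets of the \emph{original} edge set, which guarantees that the auxiliary edges of $E'$ are untouched by every scenario and can always be used to complete an assignment covering $T$ into a perfect matching of the balanced graph.
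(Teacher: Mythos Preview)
Your proposal is correct and follows essentially the same construction as the paper: pad the smaller side with dummy task nodes connected to all resources by zero-cost edges, keep $\uset$ unchanged, and use the correspondence $X\leftrightarrow X\cap E$ between solutions. Your write-up is in fact slightly more explicit than the paper's in establishing $\opt=\opt'$ via both directions of the cost-preserving correspondence, but the idea is identical.
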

\begin{proof}
  Consider any \rap\ instance on an unbalanced graph $G:=(R\dcup
  T,E)$ with $|T|< |R|$, uncertainty set $\uset$ and cost vector
  $c\in\R^{E}_{\geq 0}$.
  To transform $G$ into a balanced graph $G^\prime:=(R^\prime\dcup T^\prime,E^\prime)$,
  with $|R^\prime|= |T^\prime|$, a set $D$ of dummy task nodes of
  cardinality $|D|=|R|-|T|$ is introduced, and each $d \in D$ is
  connected with every resource node $r \in R$.  
  Further, let $E_D$ be the set of those newly introduced edges, and
  set $R^\prime := R$, $T^\prime := T \dcup D$, $E^\prime := E \dcup
  E_D$ and $\uset^\prime := \uset$.   
  Finally, choose the new cost vector $c^\prime\in\R^{E^\prime}_{\geq
    0}$ such that $c^\prime_e=0$, if $e \in E_D$, and $c^\prime_e=c_e$, for
  all $e\in E^\prime\setminus E_D=E$.  
  This procedure can be performed in polynomial time and leads to the
  desired instance on a balanced graph $G^\prime$.\\
  Now, let $V[E_D]$ denote all nodes from $G^\prime$ covered by $E_D$.  
  As $T \cap V[E_D] = \emptyset$, there exists a matching $M
  \subseteq E$ in $G$ that covers $T$ if and only if there is a perfect matching $M^\prime$ in $G^\prime$. 
  To see this, note that any matching $M\subseteq E$ covering $T$ can be extended to a perfect matching in $G^\prime$ by adding some new edges from $E_D$ while any perfect matching $M^\prime$ in $G^\prime$ must contain a matching $M\subseteq E$ covering $T$. 
  As the new edges from $E_D$ have zero costs, it follows that $c(M) = c^\prime(M^\prime)$. 
  This implies that the transformation preserves quality in terms of approximation.
\end{proof}
It is important to note that the transformation considered in 
Proposition~\ref{prop:transformation-non-balanced-to-balanced-instances} 
produces a \emph{weighted} instance with a balanced bipartite graph, even 
if the original instance had unit weights. We will hence not be able 
to use this result in a black-box fashion for our approximation algorithm 
for the unweighted case. We bring the details in Section~\ref{sec:cardrap-is-hard-to-approximate}.

The discussion above justifies to focus on \rap\ defined on balanced bipartite graphs.   
From now on, it is further assumed that each failure scenario $\Fset\in \uset$ is 
composed of a single edge, i.e.\ $|\Fset|=1$. We can thus henceforth assume that $\uset$ 
is simply the set of vulnerable edges, where the scenarios correspond to the failure
of any single edge of $\uset$. As we show in this paper, this special case of \rap\
is already interesting both from the application, and the algorithmic points of view.

Summing up, the specific variant of \rap\ considered in this paper is the following. 
\begin{problem*}[The Robust Assignment Problem (\rap)]
\text{}
\begin{itemize}
\item \underline{Input:} 
  Tuple $(G,\uset,c)$, where $G := (R\dcup T,E)$ is a balanced, bipartite graph, i.e.\  $|R|=|T|$, $\uset \subseteq E$ is a set of vulnerable edges, and $c\in\R_{\geq 0}^E$ is a non-negative cost vector. 
\item 
  \underline{Output:} If exists, an optimal solution for
  \begin{align}
    \tag{\rap}
    \label{eq:weighted-MRPMP}
    \begin{array}{ll}
      \min & c(\rset)  \\ 
      \textnormal{s.t.} & 
      \forall \fset\in \uset: 
      \rset \setminus \{ \fset \} \text{ contains a perfect matching
        in } G\\
      & \rset \subseteq E .
    \end{array}
  \end{align}
\end{itemize}
\end{problem*}
For a given instance of \rap\, $n$ and $m$ denote the number of nodes 
and edges of the underlying graph $G$, i.e.\ $n:=|R|+|T|$ and $m:=|E|$.

Two special cases of \rap\ are of
particular interest. The first case, denoted by \urap, is given when
the cost function of \rap\ is \emph{unweighted}, i.e.\ when the task
is to find a feasible (robust) solution of minimum cardinality. 
In the second case, every edge in the underlying graph $G=(R\dcup T,
E)$ is assumed to be vulnerable, i.e.\ $\uset = E$. In this case, the \rap\ instance is called \emph{uniform}.

Before we move on, we briefly treat the problem of deciding whether a \rap\ instance is feasible.
Observe that, for an arbitrary instance $\inst=(G,\uset,c)$ of \rap,
the feasible set is \emph{monotonic} in the sense
that any superset of a feasible solution is feasible as well. 
Thus, $\inst$ is a feasible instance if and only if the edge 
set $E$ of $G$ is a feasible solution, or equivalently, $E\setminus\{\fset\}$ 
contains a perfect matching in $G$, for every $\fset\in\uset$.  
The latter condition can be checked using any polynomial algorithm 
for finding maximum matchings in bipartite graphs.
Therefore, feasibility of \rap\ can be efficiently verified.
We will henceforth assume that any \rap\ instance considered in this 
paper is feasible.

It is worth to remark that the latter is no longer true when the uncertainty
set $\uset$ is given implicitly.
For example, consider a balanced bipartite graph $G:=(R\dcup T, E)$
and an uncertainty set $\uset:= \{\Fset \subseteq E \colon
|\Fset|=k\}$, for some $k\in\Z_{>1}$, presented by the parameter $k$. 
Then, checking for every $\Fset\in\uset$ whether $E\setminus\Fset$ contains 
a perfect matching in $G$ is equivalent to the problem of deciding if 
the so-called \emph{matching preclusion number} of $G$ is at most $k$. 
The latter problem was proved to be \np-complete for bipartite graphs by 
Dourado et al.\ in~\cite[Thm. 2]{dourado_et_al_15}.

\subsubsection{Matching-Covered Graphs}
\label{subsub_mcg}
The algorithmic results derived in this paper rely on a tight connection between 
\rap\ and \emph{matching-covered} graphs, a well-known notion 
in matching theory. Recap that a graph is \emph{matching-covered} if each of its edges appears in some perfect matching\,\footnote{The notion of matching-covered graphs is originally introduced for connected graphs. In this paper we use this term also for disconnected graphs.
Note further that some authors use synonymously the notion \emph{$1$-extendable} or, in the bipartite case, \emph{elementary} (cf.~\cite{lovasz_plummer_book_86}).}.

It turns out that inclusion-wise minimal solutions of any \rap\ instance are matching-covered as the following proposition states. 

\begin{proposition}\label{prop:matching_covered_and_RAP}
  Let $\inst:=(G,\uset,c)$ be any feasible \rap\ instance.
  Then, a subset $\rset$ of edges in $G$ 
  is an inclusion-wise minimal feasible solution to $\inst$ if and only if its induced subgraph $G[\rset]$ spans $G$ and is inclusion-wise minimal
with the properties of being matching-covered and that every isolated edge $e\in G[\rset]$ is not vulnerable, i.e.\ $e\not\in \uset$.
\end{proposition}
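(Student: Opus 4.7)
The plan is to derive both implications from two elementary structural observations that together decouple matching-coveredness from vulnerability.

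\textbf{Auxiliary facts.} (A) In any matching-covered graph $H$, every \emph{non-isolated} edge $e=\{u,v\}\in E[H]$ has the property that $H-e$ still contains a perfect matching: pick another edge $e'$ incident to $u$; by matching-coveredness $e'$ lies in some perfect matching $M$ of $H$, and since $e$ and $e'$ share $u$, $M$ avoids $e$. (B) If $G[\rset]$ spans $G$, is matching-covered, and every isolated edge of $G[\rset]$ is non-vulnerable, then $\rset$ is feasible for \rap: for any $\fset\in\uset$, if $\fset\notin \rset$ then matching-coveredness yields a perfect matching inside $\rset$; if $\fset\in \rset$, then $\fset$ is non-isolated in $G[\rset]$ by hypothesis, and fact (A) produces a perfect matching in $\rset\setminus\{\fset\}$.

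\textbf{Direction ($\Rightarrow$).} Let $\rset$ be an inclusion-wise minimal feasible solution, and assume $\uset\neq\emptyset$ (the case $\uset=\emptyset$ is vacuous). Spanning follows from the existence, for any fixed $\fset\in\uset$, of a perfect matching inside $\rset\setminus\{\fset\}$. To see $G[\rset]$ is matching-covered, suppose some $e\in \rset$ lies in no perfect matching of $G[\rset]$; then for every $\fset\in\uset$, any perfect matching $M\subseteq \rset\setminus\{\fset\}$ is contained in $\rset$ and must avoid $e$, so $M\subseteq \rset\setminus\{e,\fset\}$, contradicting minimality. If some isolated edge $e$ of $G[\rset]$ were vulnerable, $\rset\setminus\{e\}$ would leave the endpoints of $e$ uncovered, contradicting feasibility. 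Finally, any proper subgraph $G[\rset']$ of $G[\rset]$ (where $\rset'\subsetneq\rset$) that still satisfies the three listed properties is feasible by (B), again contradicting edge-minimality of $\rset$; hence $G[\rset]$ is inclusion-wise minimal with those properties.

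\textbf{Direction ($\Leftarrow$).} Assume $G[\rset]$ spans $G$, is matching-covered, has only non-vulnerable isolated edges, and is inclusion-wise minimal with these properties. Feasibility of $\rset$ is exactly assertion (B). For edge-minimality, suppose some $\rset'\subsetneq \rset$ were feasible and pick an inclusion-wise minimal feasible $\rset''\subseteq \rset'$. The forward direction then yields that $G[\rset'']$ spans $G$, is matching-covered, and has non-vulnerable isolated edges, while $\rset''\subsetneq \rset$ makes $G[\rset'']$ a proper subgraph of $G[\rset]$ with all three properties, contradicting the subgraph-minimality of $G[\rset]$. The only substantive step is fact (A); the remainder is a careful translation between edge-set minimality of $\rset$ and subgraph-minimality of $G[\rset]$, and the main subtlety is avoiding circularity: isolating (B) as a standalone consequence of (A) allows the forward direction to invoke it without depending on the backward direction, and the backward direction's minimality argument then safely quotes the already-proved forward direction.
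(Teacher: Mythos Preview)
Your proof is correct and follows essentially the same approach as the paper's. The only organisational difference is that in the backward direction you pass to an inclusion-wise minimal feasible $\rset''\subseteq\rset'$ and invoke the already-established forward direction, whereas the paper argues directly by stripping from $\bar\rset$ all edges not lying in any perfect matching to obtain a matching-covered spanning subgraph; both routes reach the same contradiction with the assumed subgraph-minimality of $G[\rset]$.
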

\begin{proof}
  \emph{``only if'' part. }
  Let $\rset$ be an inclusion-wise minimal feasible solution to $\inst$, and let $e\in\rset$ be any edge.  
 Firstly, if $e$ does not appear in any perfect matching, $\rset\setminus \{e\}$ is feasible to $\inst$.  
Thus, inclusion-wise minimality of $\rset$ implies that $G[\rset]$ is matching-covered.   
Secondly, if $f\in\uset$ is an isolated edge in $G[\rset]$ then $\rset\setminus\{\fset\}$ cannot contain a perfect matching in $G$.
This contradicts that $\rset$ is feasible.   
Thirdly, assume that $G[\rset]$ contains a proper spanning subgraph $G[\rset^\prime]$, induced by some $\rset^\prime \subsetneq \rset$, that is matching-covered and that has no isolated edges from $\uset$. 
Consider an arbitrary $\fset\in\rset^\prime\cap\uset$. As $\fset$ is not isolated in $G[\rset^\prime]$, there is an $e^\prime \in\rset^\prime$ adjacent to $\fset$. 
Since $G[\rset^\prime]$ spans $G$ and is matching-covered, there is a perfect matching in $G$ that contains $e^\prime$ and that does not contain $\fset$. 
This, however, shows that $\rset^\prime$ is also feasible to $\inst$  contradicting the inclusion-wise minimality of $\rset$.

\emph{``if'' part. }
Let $\rset$ be a subset of edges from $G$ such that its induced subgraph $G[\rset]$ spans $G$ and is inclusion-wise minimal with respect to the properties of being matching-covered and that every isolated edge in $G[\rset]$ is not vulnerable. 
Showing that $\rset$ is feasible to $\inst$ is similar to the proof of the feasibility of $\rset^\prime$ in the first direction of the proof.
Assume now that there is an $\bar\rset \subseteq \rset$, $\bar\rset \neq \rset$, that is feasible to $\inst$. This firstly implies that the induced subgraph $G[\bar\rset]$ spans $G$. Secondly, no vulnerable edge $f\in\bar\rset\cap\uset$, if exists, can form an isolated edge in $G[\bar\rset]$.
Moreover, it can be assumed that $G[\bar\rset]$ is also matching-covered, as otherwise, each $e\in \bar\rset$ not extendable to a perfect matching in $G[\bar\rset]$ can be removed from $\bar\rset$. This way, a proper subset of $\rset$ is obtained whose induced subgraph spans $G$ and is matching-covered. This concludes the proof.
\end{proof}

Proposition~\ref{prop:matching_covered_and_RAP} provides a very useful characterization of inclusion-wise minimal solutions of \rap, as it allows to make use of various results on matching-covered graphs to design algorithms for \rap. 
In particular, it allows to identify feasible subgraphs and augment them to feasible solutions for the entire instance  by adding structures that maintain the property of being matching-covered.

\subsubsection{Results for RAP}
One important contribution of this paper is to show that approximating \rap\ is as 
hard as to approximate the well-known Set Cover Problem. This is stated in the following theorem. 
\begin{theorem}\label{thm:hardness_RAP_uniform}
\rap\ admits no polynomial $d\log n$-approximation algorithm for any $d < 1$, unless 
\np\ $\not\subseteq \dtime (n^{\log \log n})$\footnote{Recap that $\np \subseteq \dtime (n^{\log \log n})$ would imply 
the existence of quasi-polynomial time algorithms for \np-hard problem.}.
This is true even for uniform \rap.
\end{theorem}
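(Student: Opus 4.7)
The plan is to reduce from the Set Cover Problem (\setcover) and invoke Feige's inapproximability theorem, which states that \setcover\ admits no polynomial-time $(1-\varepsilon)\ln N$-approximation (where $N$ is the universe size) unless $\np\subseteq\dtime(N^{\log\log N})$. Since the statement to be proved concerns \emph{uniform} \rap\ (where $\uset=E$), the reduction must produce an instance in which every edge is vulnerable, and the graph size must be polynomial in $N$ so that $\log n$ and $\ln N$ are within a constant factor and the logarithmic inapproximability gap transfers faithfully.

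Given a \setcover\ instance $(U,\mS)$ with $|U|=N$ elements and $|\mS|=M$ sets, I would construct a balanced bipartite graph $G=(R\dcup T, E)$ of size $O(N+M)$ with the following gadget structure: for each set $S_j\in\mS$ introduce a vertex pair $a_j\in R$, $b_j\in T$ joined by a \emph{set edge} $\{a_j,b_j\}$; for each element $u_i\in U$ introduce a vertex pair $c_i\in R$, $d_i\in T$ joined by an \emph{element edge} $\{c_i,d_i\}$; and for each incidence $u_i\in S_j$ add the two cross edges $\{c_i,b_j\}$ and $\{a_j,d_i\}$. The resulting instance $(G,E,\mathbf{1})$ is a uniform \rap\ instance whose size is polynomial in $N+M$. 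A feasible solution $\rset$ must be spanning and matching-covered by Proposition~\ref{prop:matching_covered_and_RAP}, and in addition no edge of $\rset$ may be isolated in $G[\rset]$ (since $\uset=E$). In particular, every element edge $\{c_i,d_i\}$ lying in $\rset$ forces $\rset$ to contain an alternating structure through some incidence $u_i\in S_j$, i.e.\ both cross edges attached to a set $S_j$ containing $u_i$; conversely, the indices $j$ chosen in this way across all elements must form a set cover of $U$.

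The heart of the argument is to translate this correspondence into an approximation-preserving reduction. In one direction, given any cover $\mC\subseteq\mS$ of size $k$, I would exhibit a feasible \rap\ solution composed of the $M$ set edges together with, for every $u_i$, the two cross edges associated with some $S_{j(i)}\in\mC$ containing $u_i$; this yields a solution of size $\Theta(M+N)+O(k)$, and one can verify matching-coveredness by exhibiting for each of its edges an explicit alternating path certifying membership in a perfect matching. In the converse direction, from any feasible solution $\rset$ one extracts a cover by collecting the indices of set edges whose "activated" cross edges witness each element gadget, and shows that this cover has size at most the number of set edges in $\rset$, which is bounded by $|\rset|$. The main obstacle is precisely this quantitative bookkeeping: the skeleton of mandatory gadget edges contributes an additive overhead that must be absorbed without destroying the logarithmic gap, so the construction may need to be amplified by taking parallel copies of the \setcover\ instance, or equivalently replacing each element gadget by a blown-up version, so that the cover-size contribution dominates the skeleton size. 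Once the overhead is controlled, a hypothetical polynomial $d\log n$-approximation for uniform \rap\ with $d<1$ would, upon applying this translation, produce a polynomial $(1-\varepsilon)\ln N$-approximation for \setcover\ for some $\varepsilon>0$, contradicting Feige's bound and yielding the claimed inclusion-wise hardness.
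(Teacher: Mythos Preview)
Your reduction targets the \emph{unit-cost} instance $(G,E,\mathbf{1})$, i.e.\ uniform \urap, and this is where the plan fails. The paper proves (Theorem~\ref{thm:approximation_unweighted_RAP}) that uniform \urap\ admits a polynomial $1.5$-approximation, so a $\Theta(\log n)$ inapproximability gap simply does not exist for unit costs. No amount of amplification can manufacture one: in any uniform \urap\ instance on $2|T|$ nodes the optimum already satisfies $2|T|\le\opt\le 3|T|$ after discarding dispensable edges, so the additive skeleton overhead you identify is not a technical nuisance to be absorbed---it is the whole story, and it collapses the gap to a constant.

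The paper's proof avoids this trap by exploiting that the theorem concerns \emph{weighted} \rap. The reduction (Lemma~\ref{lem:set-cover-rap-feasibility-uniformversion}, steps {\small\bf (T1)--(T5)}) builds a graph $\bar G$ and assigns cost~$1$ only to a designated edge class $\bar E_4$ encoding the choice of cover sets, and cost~$0$ to every other edge. Because all ``skeleton'' edges are free, one can assume w.l.o.g.\ that a solution contains $\bar E\setminus\bar E_4$, and then the \rap\ objective equals the cover size exactly. This gives a gap-preserving reduction from \setcover\ without any amplification, and Feige's bound transfers directly. The essential idea you are missing is not a different gadget but the use of zero-cost edges; once you allow weights, the construction and the analysis become much simpler than what you sketch.
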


Theorem~\ref{thm:hardness_RAP_uniform} motivates developing an approximation 
algorithm for \rap\ with the matching asymptotic bound $O(\log n)$,
imposed by the previous theorem. 
We achieve this goal in the next theorem.

\begin{theorem}~\label{thm:RAP-admits-log-n-approximation-algorithm}
\rap\ admits a randomized polynomial $O(\log n)$-approximation algorithm.
\end{theorem}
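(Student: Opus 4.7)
The plan is to formulate a compact LP relaxation of \rap, apply independent randomized rounding with an $O(\log n)$ amplification factor, and analyze feasibility of the rounded solution by translating Hall's condition into a polynomial-size family of witnesses via the bipartite ear decomposition of matching-covered graphs supplied by Proposition~\ref{prop:matching_covered_and_RAP}.

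First I would set up the LP: introduce $x_e\in[0,1]$ for every $e\in E$ and, for every $\fset\in\uset$, auxiliary variables $y^\fset\in\R_{\ge 0}^{E}$ representing a fractional perfect matching of $G-\fset$, giving the polynomial-size program
\[
\min\ c^\top x \ \text{s.t.}\ y^\fset_\fset=0,\ y^\fset(\delta(v))=1\ \forall v\in V[G],\ y^\fset\le x,\ x\ge 0,\ \forall\fset\in\uset,
\]
which is solvable in polynomial time and whose optimum $x^*$ satisfies $c^\top x^*\le\opt$. In the rounding phase I would include each $e\in E$ independently in a candidate set $X_0$ with probability $\min\{1,\alpha x^*_e\}$, where $\alpha=\Theta(\log n)$, so that the expected cost is at most $\alpha\cdot c^\top x^* = O(\log n)\cdot\opt$.

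For feasibility of $X_0$ it suffices to establish, for each $\fset\in\uset$, that $X_0\setminus\{\fset\}$ contains a perfect matching of $G$ with probability at least $1-\tfrac{1}{2|\uset|}$, after which a union bound over the at most $m$ scenarios yields feasibility of $X_0$ with constant probability, and a constant expected number of independent repetitions produces a feasible solution of expected cost $O(\log n)\cdot\opt$. By Hall's theorem, the per-scenario event amounts to ruling out deficient subsets $S\subseteq T$, of which there are exponentially many; handling this union bound is the principal obstacle. To bypass it, I would invoke Proposition~\ref{prop:matching_covered_and_RAP}: any inclusion-wise minimal feasible certificate for scenario $\fset$ is a matching-covered spanning subgraph, and by the classical bipartite ear decomposition theorem it is generated from a single edge by appending odd-length ears. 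The fractional certificate $y^\fset\le x^*$ induces a corresponding fractional ear structure, producing a polynomial-size family of \emph{ear witnesses} over which a standard Chernoff bound (using $\alpha\, x^*(\text{ear})$ as expected mass along each ear) controls the failure probability.

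The hardest step, and the place where the paper's title manifests, is the reduction sketched in the previous paragraph: partitioning the LP certificate $y^\fset$ into a polynomial family of ear structures that simultaneously (i) certify the Hall condition whenever all their ears survive the rounding, and (ii) carry enough fractional LP mass for the $\alpha=\Theta(\log n)$ amplification to guarantee survival with sufficiently high probability. Once this ear-based reduction is in place, the remaining Chernoff estimates and union bounds are routine, and the resulting Las~Vegas algorithm is the claimed randomized $O(\log n)$-approximation.
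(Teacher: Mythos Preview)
Your approach diverges substantially from the paper's, and the divergence occurs precisely at the step you flag as hardest: the ``fractional ear structure'' reduction. This step is not merely hard---it is not a well-defined technique. An ear decomposition is a structural decomposition of an \emph{integral} matching-covered graph; a fractional perfect matching $y^\fset$ does not induce any canonical ear structure, and there is no known way to extract from $y^\fset$ a polynomial-size family of ``ear witnesses'' whose simultaneous survival under independent edge rounding certifies Hall's condition. Concretely, Hall's condition for a rounded graph is a statement about neighborhoods of all $2^{n/2}$ subsets $S\subseteq T$, whereas an ear decomposition of a fixed matching-covered subgraph has at most $O(n)$ ears; there is no mechanism by which survival of those ears implies that no Hall set becomes deficient after \emph{independent} coin flips on edges. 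Your proposal therefore contains a genuine gap at its core, and the surrounding Chernoff/union-bound machinery never gets off the ground. The paper's closing remarks in Section~\ref{sec:algorithm_RAP} explicitly caution that standard covering-style randomized rounding does not seem to yield the result.

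The paper's actual proof is quite different in architecture. It does not round edges independently. Instead it runs an \emph{iterative} procedure: in each round it selects an uncovered vulnerable edge $f$, decomposes the fractional matching $x^{-f}$ as a convex combination of integral perfect matchings, samples one such matching $\bar M$ according to the convex coefficients, and adds to the current solution only those edges of $\bar M$ that connect distinct connected components. The invariant maintained is that the current edge set is always matching-covered (proved via Theorem~\ref{thm:m-c-graphs-equivalence}), which guarantees feasibility at termination. The $O(\log n)$ factor arises not from Chernoff bounds but from a charging argument: each added edge is charged to its endpoint lying in the smaller component, the expected cost of a single charge at a vertex $v$ is at most the LP load $c_{\textrm{LP}}(\delta(v))$, and a vertex can be charged at most $\log n$ times because its component size at least doubles whenever it is charged. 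Thus the randomness is used to control expected \emph{cost}, while feasibility is enforced deterministically by the component-merging rule and the matching-covered invariant---essentially the opposite allocation of roles from your sketch.
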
 
For the proof of Theorem~\ref{thm:RAP-admits-log-n-approximation-algorithm}, an algorithm with the desired approximation quality is presented.
The algorithm constructs iteratively a solution maintaining the invariant that, at any iteration, the edges selected so far, form a matching-covered graph. 
It is, however, unclear how to arrive at the desired approximation for \rap\ when only properties of matching-covered graphs are taken into account. 
Therefore, the latter technique is combined with additional tools from the theory of linear programming (\lp) and randomized rounding. 
More precisely, the algorithm starts with solving an \lp\ relaxation of \rap, derived from a natural integer linear programming (\ilp) formulation of the problem.
The fractional solution obtained this way is used to guide an iterative randomized procedure. 
In each iteration a fractional bipartite matching corresponding to part of the fractional solution is selected.
A decomposition of this fractional matching into a convex combination of integral matchings is then used to randomly pick one matching, and a carefully selected subset of this matching is added to the current solution. 
To bound the quality, it does not suffice to bound the number of iterations, or the expected number of times an edge is part of a candidate matching. 
Instead, a discharging argument, that assigns costs to nodes depending on the graph selected so far, is used.

A detailed presentation of all results for \rap, including the proofs, is given in Section~\ref{sec:rap-hardness-and-approximability}.

\subsubsection{Results for card-RAP}

Our main complexity result for the unweighted \rap\
states that it is \np-hard to approximate within some constant 
$\delta>1$, even for uniform uncertainty sets.
\begin{theorem}
\label{thm:card-rap-is-hard-to-approximate}
For some constant $\delta > 1$, there is no polynomial $\delta$-approximation
for uniform \urap, unless \p $=$ \np.
\end{theorem}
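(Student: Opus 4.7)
The plan is to establish APX-hardness of uniform \urap\ via an approximation-preserving reduction from a known APX-hard source problem, specifically minimum vertex cover on 3-regular graphs, \vcthree. Given an instance $G = (V, E)$ of \vcthree, I construct a balanced bipartite graph $H$ of polynomial size together with a size-accounting function such that the minimum cardinality of a feasible solution of uniform \urap\ on $H$ equals $\tau(G) + \phi(|V|, |E|)$, where $\tau(G)$ denotes the vertex cover number of $G$ and $\phi$ is a fixed affine function of $|V|$ and $|E|$.

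The construction uses a vertex gadget for each $v \in V$ and an edge gadget for each $e \in E$. Each vertex gadget is a small bipartite subgraph with two distinguished matching-covered configurations (``in-cover'' and ``out-of-cover'') differing by exactly one edge. Each edge gadget for $e = \{u, v\}$ connects the vertex gadgets of $u$ and $v$ through port nodes, designed so that its edges can be completed to a matching-covered subgraph of $H$ if and only if at least one of the endpoints' vertex gadgets is in the ``in-cover'' mode. By Proposition~\ref{prop:matching_covered_and_RAP}, and since $\uset = E$ forces the solution to span $G$ with no isolated edges, a feasible uniform \urap\ solution on $H$ is precisely a spanning matching-covered subgraph of $H$. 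The gadget design therefore forces the set of vertices whose gadgets are in ``in-cover'' mode to constitute a vertex cover $C$ of $G$, and inclusion-wise minimal solutions have size exactly $|C| + \phi(|V|, |E|)$.

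The reduction is then completed by two-sided cardinality accounting: given any vertex cover $C$ of $G$, I exhibit a canonical matching-covered spanning subgraph of $H$ of cost $|C| + \phi(|V|, |E|)$, and, conversely, any such subgraph yields a vertex cover of size at most its cardinality minus $\phi(|V|, |E|)$. Since $\tau(G) \geq |V|/4$ in cubic graphs while $\phi(|V|, |E|) = O(|V|)$, the ratio between $\tau(G)$ and the total cost is bounded away from zero, so an $\alpha$-approximation for uniform \urap\ yields a $(1 + c(\alpha - 1))$-approximation for \vcthree\ for some constant $c > 0$. Combined with the known APX-hardness of \vcthree, this rules out some constant $\delta > 1$ for uniform \urap\ unless $\p = \np$.

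The main obstacle is designing the two gadgets to simultaneously satisfy three competing requirements: (i) both modes of each vertex gadget must be matching-covered as stand-alone bipartite graphs, so that switching is well-defined; (ii) each edge gadget must strictly enforce the vertex-cover constraint via the matching-covered property, i.e., a critical edge of the gadget lies in some perfect matching of $H$ only when the alternating structure extending it is supplied by at least one neighboring ``in-cover'' vertex gadget; and (iii) the cardinality accounting must be tight on both sides so that the constant APX-gap transfers. I expect condition (ii) to be the delicate part, since showing that \emph{no} perfect matching of $H$ covers the critical edge when both neighboring vertex gadgets are in ``out-of-cover'' mode requires a careful global argument about alternating paths across the gadgetized bipartite graph.
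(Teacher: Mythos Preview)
Your high-level strategy matches the paper's: reduce from Vertex Cover on bounded-degree graphs, establish an affine relationship between the \urap\ optimum and the vertex-cover number, and use $\tau(G)=\Omega(|V|)$ together with an $O(|V|)$ additive offset to transfer the APX-gap. The paper likewise invokes Alimonti--Kann for the APX-hardness of \vcthree\ and concludes via exactly the ``$(1+c(\alpha-1))$''-style argument you sketch.

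Where the routes diverge is the gadget layer. Rather than designing vertex and edge gadgets from scratch, the paper recycles the Set Cover $\to$ \rap\ construction already built for the weighted hardness (steps {\small\bf (T1)--(T5)}) and adds one further step {\small\bf (T6)}, subdividing each edge of $E_1$ into a path of length three. Viewing \vcthree\ as Set Cover with ground set $E_H$ and sets $\delta(v)$ makes this automatic. The payoff is Lemma~\ref{lem:set-cover-rap-feasibility-version-2}: after an easy ``efficiency'' normalization, any feasible $\tilde X$ satisfies $|\tilde X| = |\mathcal C| + q + 2k$ for a corresponding cover $\mathcal C$, with $q,k$ fixed by the instance. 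The two-sided accounting you ask for thus drops out of machinery already established, and no new matching-covered analysis is required beyond what was done for the weighted case.

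Your proposal, by contrast, has a genuine gap: you state what the vertex and edge gadgets must accomplish but do not construct them. Condition~(ii)---that the edge gadget's critical edge lies in \emph{no} perfect matching of $H$ when both neighboring vertex gadgets are ``out-of-cover''---is precisely the technical heart, and you explicitly leave it unresolved. In the paper's construction the analogous tightness is handled locally: degree-two nodes force $\tilde E_1\cup\tilde E_3\cup\tilde E_5\subseteq\tilde X$, and the efficiency normalization pins $|\tilde X\cap\tilde E_2|=|\tilde X\cap\tilde E_6|=k$, so the remaining edges in $\tilde X$ are exactly those of $\tilde E_4$ encoding the cover. No global alternating-path argument across gadgets is needed. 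If you pursue your direct route you must supply concrete gadgets and prove~(ii); routing through Set Cover as the paper does sidesteps that difficulty entirely.
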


On the positive side, we are able to use the strong connection between 
\rap\ and matching-covered graphs to develop a constant factor approximation 
algorithm for \urap\ using so-called \emph{ear decompositions}. 
\begin{theorem}
\label{thm:approximation_unweighted_RAP}
 \urap\ admits a polynomial $1.5$-approximation algorithm in the uniform case, 
and a $3$-approximation algorithm in the non-uniform case.
\end{theorem}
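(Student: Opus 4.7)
The plan is to reduce \urap\ to a structural subgraph problem via Proposition~\ref{prop:matching_covered_and_RAP}: an inclusion-minimal feasible solution corresponds to a minimum-cardinality spanning subgraph of $G$ that is matching-covered and has no isolated edges lying in $\uset$. The central tool is the Lovász--Plummer bipartite ear decomposition theorem: a connected bipartite graph is matching-covered if and only if it is obtainable from a single edge by successively appending odd-length ears whose endpoints lie on opposite sides of the bipartition. A straightforward vertex/edge count then yields the identity that any matching-covered spanning subgraph of $G$ with $c$ components and a total of $k$ ears in its decomposition has exactly $n + k - c$ edges.

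The lower bounds driving the two ratios are the following. In the uniform case every vertex of a feasible $G[\rset]$ must have degree $\geq 2$: a degree-$1$ vertex would force its unique incident edge into every perfect matching, rendering that vulnerable edge impossible to avoid. This gives $\opt \geq n$. In the non-uniform case any feasible $\rset$ must at least contain a perfect matching, so $\opt \geq n/2$. To match the advertised approximation ratios, it therefore suffices to build a feasible $\rset$ with $|\rset| \leq \tfrac{3}{2} n$ in both cases.

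Algorithmically, I would first compute an arbitrary perfect matching $M$ (existing by feasibility) and initialize $\rset \leftarrow M$. In the uniform case $G[\rset]$ now consists of $n/2$ isolated vulnerable edges which must all be absorbed into matching-covered components of size $\geq 4$. The absorption is carried out by iteratively attaching \emph{shortest odd ears} of $G$ to the current $G[\rset]$, where each ear has its two endpoints in distinct current components on opposite sides of the bipartition and its internal vertices are as-yet-unabsorbed vertices of $G$. Length-$3$ ears are the sweet spot: each such ear adds $3$ edges while bringing $2$ new vertices into a matching-covered piece, which averages $3/2$ edges per absorbed vertex and matches the desired ratio against $\opt \geq n$. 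Computing these ears reduces to shortest alternating-path queries in $G$ with respect to $M$. For the non-uniform case, only the vulnerable isolated edges have to be absorbed, which reduces the total ear length required; combining this savings with the weaker bound $\opt \geq n/2$ produces the ratio~$3$.

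The main obstacle is justifying the $3/2$ ratio in the uniform case: a naive shortest-ear greedy can leave ``hard'' residual isolated edges that are absorbable only by ears of length $5$ or more, each contributing more than $3/2$ edges per new vertex. Bounding the aggregate length of such longer ears requires a careful exchange argument that charges each long algorithm-ear against a structural obstruction in an ear decomposition of $\opt$ itself; the rigidity of bipartite ear decompositions granted by the theory of matching-covered graphs is what ultimately enables this charging, and implementing it cleanly is the technical heart of the proof.
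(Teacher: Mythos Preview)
Your lower bounds ($\opt\geq n$ in the uniform case, $\opt\geq n/2$ otherwise) and your target upper bound $|\rset|\leq \tfrac{3}{2}n$ are exactly the ones the paper uses. The gap is in how you construct $\rset$.

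By starting from a perfect matching $M$ you put yourself in an awkward position: $G[M]$ already spans all vertices, so the ``ears'' you attach are not ears in the Lov\'asz--Plummer sense (their internal vertices are not new), and you are forced into the long-ear difficulty you describe at the end. You then defer the entire analysis to an unspecified ``careful exchange argument'' charging long algorithm-ears against an ear decomposition of $\opt$. That argument is not supplied, and it is not clear it can be made to work as stated; in particular there is no obvious reason why the residual isolated edges left after your greedy phase admit absorbing paths whose total length can be charged to $\opt$ without overcounting.

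The paper sidesteps this completely by decomposing $G$ itself rather than building up from $M$. After deleting dispensable edges, $G$ is matching-covered, so it has a bipartite ear decomposition $G=P_0+P_1+\cdots+P_q$. Now simply \emph{discard all trivial ears} (ears of length~$1$). What remains is still an ear decomposition, hence still matching-covered and spanning, hence feasible by Proposition~\ref{prop:matching_covered_and_RAP}. A one-line count finishes it: each surviving ear $P_j$ with $l_j\geq 1$ internal $T$-nodes has $2l_j+1$ edges, the number $p$ of surviving ears is at most $|T|-1$, and $\sum_j l_j=|T|$, giving $|\rset|\leq 1+\sum_j(2l_j+1)\leq 3|T|=\tfrac{3}{2}n$. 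No charging, no long-ear case analysis, and the same algorithm yields both ratios directly from the two lower bounds.
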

The algorithm starts by producing an ear decomposition of the input graph. Then, 
it selects a certain subset of the edges to be part of the solution, by
processing the ears in the decomposition in the order given by the decomposition,
and omitting the edges corresponding to ears of length one.

Theorems~\ref{thm:card-rap-is-hard-to-approximate} and~\ref{thm:approximation_unweighted_RAP} imply 
that, assuming $\p \neq \np$, the true approximability thresholds for uniform \urap\ and \urap\ lie in the 
intervals $[\delta,1.5]$ and $[\delta^\prime,3]$, for some $\delta,\delta^\prime>1$.
These results are proved in Section~\ref{sec:card-rap-hardness-and-approximation}.

\bigskip
To complete the complexity landscape of \urap, the case with only two vulnerable edges is also considered. 
This special case comprises the simplest possible variant of \rap\ that is not equivalent to a nominal assignment problem\,\footnote{Observe
that the case of a single vulnerable edge $\uset = \{\fset\}$ is solvable by reporting 
any minimum-cost perfect matching in the graph $(R\dcup T, E\setminus \{\fset\})$ as a solution.}.
It turns out that this special case is  already \np-hard, thus drawing a sharp threshold for polynomial solvability of \rap.

\begin{theorem}\label{thm:hardness_unweighted_RAP_two_scenarios}
 \urap\ is \np-hard even when restricted to instances with two vulnerable
 edges, i.e.\ with $\uset= \{f_1, f_2\}$.
\end{theorem}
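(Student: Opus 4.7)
The plan is to prove NP-hardness via a polynomial-time reduction from a suitably chosen NP-hard problem. Given the restriction to two vulnerable edges, a natural candidate is the \textsc{2-Vertex-Disjoint Directed Paths} problem (2DVDP) of Fortune, Hopcroft, and Wyllie.

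First I would reformulate the instance. For $\uset=\{f_1,f_2\}$, every feasible solution $\rset$ contains perfect matchings $M_1\subseteq \rset\setminus\{f_1\}$ and $M_2\subseteq \rset\setminus\{f_2\}$, and one may assume $\rset=M_1\cup M_2$, so $|\rset|=n+\tfrac{1}{2}|M_1\triangle M_2|$. Hence minimizing $|\rset|$ amounts to minimizing $|M_1\triangle M_2|$. Writing $M_1\triangle M_2$ as a disjoint union of even $M_1/M_2$-alternating cycles, and arranging the construction so that $G-\{f_1,f_2\}$ has no perfect matching, forces $f_1\in M_2\setminus M_1$ and $f_2\in M_1\setminus M_2$. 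The problem then reduces to finding a smallest disjoint union of alternating cycles in $G$ that contain $f_1,f_2$ in the prescribed colors and whose complement admits a perfect matching (playing the role of $M_1\cap M_2$).

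Second, given a 2DVDP instance $(D,(s_1,t_1),(s_2,t_2))$ I would build a balanced bipartite $G$ via the classical vertex-split $v\mapsto(v^+,v^-)$, with $R=\{v^+:v\in V(D)\}$ and $T=\{v^-:v\in V(D)\}$; add an edge $\{u^+,v^-\}$ for every arc $(u,v)\in A(D)$, a ``skip'' edge $\{v^+,v^-\}$ for every $v$, and two ``return'' edges $e_i:=\{t_i^+,s_i^-\}$ for $i=1,2$. Perfect matchings of $G$ then correspond to cycle covers of $D$ enriched by these return arcs, with uncovered vertices picked up by their skip edges. Taking $f_1=e_1,\;f_2=e_2$ forces $M_1$ to avoid closing the $s_1,t_1$-cycle via $e_1$ and $M_2$ to avoid closing the $s_2,t_2$-cycle via $e_2$; tracing the alternating cycles of $M_1\triangle M_2$ shows that a minimum-size solution encodes precisely two $s_i$-to-$t_i$ directed paths in $D$. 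Choosing a polynomial threshold $k=k(|V(D)|)$, the \urap\ optimum would be at most $n+k$ if and only if the source 2DVDP instance is a yes-instance.

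The main obstacle will be enforcing \emph{vertex}-disjointness of the two encoded paths: alternating cycles in $G$ are automatically edge-disjoint, but in the vertex-split encoding a single $v\in V(D)$ may still be used by both cycles through its copies $v^+,v^-$. To rule this out, each non-terminal $v$ would need to be replaced by a small local gadget whose internal matching forces either full skipping of $v$ or a single pass-through, never both. Designing such a gadget so that $G$ remains bipartite, the two-vulnerable-edge restriction is preserved, and the threshold $k$ stays polynomial is the delicate technical step where the bulk of the reduction's proof would lie.
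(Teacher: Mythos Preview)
Your opening reformulation is exactly what the paper does: an optimal solution is $\rset=M_1\cup M_2$, so one is minimising $|M_1\triangle M_2|$, and after forcing $f_1\in M_2\setminus M_1$, $f_2\in M_1\setminus M_2$ the task becomes controlling the alternating cycles through the two vulnerable edges. From that point on, however, your route diverges entirely from the paper, and I believe the 2DVDP reduction has a real gap beyond the gadgetry you already flag.

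The issue is the threshold argument, not vertex-disjointness. Since 2DVDP is a pure feasibility question, you need a value $k$ with ``\urap\ optimum $\le n+k$'' $\Leftrightarrow$ ``two disjoint paths exist''. The $\Leftarrow$ direction is plausible, but for $\Rightarrow$ you must show that \emph{every} pair $(M_1,M_2)$ with small symmetric difference decodes into two vertex-disjoint $s_i$--$t_i$ paths. Nothing in your construction prevents $f_1$ and $f_2$ from lying on a \emph{single} alternating cycle of $M_1\triangle M_2$. When that happens, the cycle does not split into an $s_1$--$t_1$ piece and an $s_2$--$t_2$ piece in $D$; tracing it alternates between ``forward along $M_2$'' and ``forward along $M_1$'', which is not a directed walk in $D$ at all. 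There is no evident lower bound on its length in NO-instances, so the equivalence breaks. Local per-vertex gadgets can enforce ``at most one pass through $v$'' but do not rule out this single-cycle configuration. (There is also a smaller wrinkle: with all skip edges present, $G-\{f_1,f_2\}$ always has a perfect matching, so the forcing you postulate requires further, unspecified modifications at the terminals without destroying feasibility of the \urap\ instance.)

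The paper sidesteps all of this with one structural choice: it makes $f_1$ and $f_2$ \emph{adjacent}, both incident to a new degree-$2$ node $x$ (and adds a second degree-$2$ node $y$ to close a path $s$--$x$--$y$--$t$). Adjacency forces $f_1,f_2$ into the \emph{same} alternating cycle, and that cycle, minus $x,y$, is a single $s$--$t$ path $P$ in the original graph with the rest of the nodes perfectly matched outside $P$. Minimising $|\rset|$ becomes exactly the \emph{Shortest Nice Path Problem}, which the paper then proves NP-hard separately via (a restricted version of) Path with Forbidden Pairs. So the paper \emph{exploits} the single-cycle phenomenon that undermines your two-path encoding; that is the idea your proposal is missing.
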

The proof of Theorem~\ref{thm:hardness_unweighted_RAP_two_scenarios} relies on a connection to a new \np-hard problem called Shortest Nice Path Problem (\snpp).
The goal is to partition a graph into a path connecting two given nodes and a matching such that their union covers all nodes, so as to minimize the length of the path.
This problem might be interesting in its own right.
The formal definition of SNPP and the proof of 
Theorem~\ref{thm:hardness_unweighted_RAP_two_scenarios} is presented in
Section~\ref{sec:proof-sec-singleton-two-scenarios-hardness}.\\

\begin{remark}
\label{rem:hardness-of-robust-problems-with-constant-number-of-uncertain-resources}
To the best of the authors' knowledge, this is the first example of an \np-hard 
robust counterpart of a polynomial optimization problem, with a constant 
number of vulnerable resources.
Note that there are many examples of optimization
problems that become \np-hard when the robust counterpart is allowed to
contain a constant number of scenarios (see e.g.~\cite{kouvelis_yu_97}).
However, in all such examples, each scenario affects a \emph{non-constant}
number of resources.
\end{remark}


\section{Related work}
\label{sec:related-work}  

Redundancy-based robustness is a paradigm that motivates many
well-studied problems, including the minimum $k$-connected subgraph problem~(see Gabow et al.~\cite{GabowGoemansTardosWilliamson},
Cheriyan, Seb{\H{o}} and Szigeti~\cite{cheriyan_et_al_01}, and Seb{\H{o}} and Vygen \cite{sebHo2014shorter}), survivable and robust network design problems~(Jain~\cite{Jain}, Chekuri~\cite{chekuri2007routing}, Adjiashvili, Stiller and Zenklusen~\cite{AdjishviliStillerZenklusen2014}, and Adjiashvili~\cite{adjiashvili_bulk_planar}), robust facility location problems~(Jain and Vazirani~\cite{jain2000approximation}, and Smamy and Shmoys~\cite{swamy2003fault}), robust spanner problems~(Chechik et al.~\cite{chechik2009fault}, and Dinitz and Krauthgamer \cite{dinitz2011fault}), and many more. 
All of the latter models bare a close resemblance to \rap: they assume resources to be vulnerable and ask to find a minimum-cost set of resources that contains a desired structure even in case any vulnerable resource, or set of resources, fails.

A relatively new approach to redundancy-based robustness is the incorporation of non-uniform uncertainty sets~(\cite{AdjishviliStillerZenklusen2014,adjiashvili_bulk_planar}). 
\rap\ is seen as a robust model of this type, as both vulnerable and
invulnerable edges are allowed to appear in the same instance.

The study of robustness with respect to cost uncertainty is
initiated by Kouvelis and Yu~\cite{kouvelis_yu_97}, 
and Yu and Yang~\cite{yu_yang_98}. 
These works mainly consider the min-max model, where the goal
is to find a solution that minimizes the worst-case cost according to the given set of cost functions.
A survey on this topic is given by Aissi, Bazgan and Vanderpooten~\cite{Survey_AissiBazgenVanderpooten}.
A closely related class of multi-budgeted problems is received considerable 
attention (see e.g.~Grandoni et al.~\cite{grandoni2014new} and references therein).
The latter work includes variants of the related multi-objective matching problem 
for which a polynomial time approximation scheme is presented.

Various variants of the robust matching problems are considered in the literature.
Hassin and Rubinstein~\cite{hassin2002robust}, and Fujita, Kobayashi, and Makino~\cite{fujita2010robust} study the following notion of an $\alpha$-robust matching.
A perfect matching $M$ in a weighted graph is $\alpha$-robust (for $\alpha \in (0,1]$), if for every $p \leq |M|$, the $p$ heaviest edges of the matching have total weight at least $\alpha$ times the weight of a maximum-weight matching of size $p$. 
In~\cite{hassin2002robust} the authors prove that the complete graph $K_n$ contains a $\frac{1}{\sqrt{2}}$-robust matching and this bound is tight in general.
Additionally, the authors provide a polynomial-time algorithm to find such a matching.
Building upon these results Fujita, Kobayashi, and Makino~\cite{fujita2010robust} prove that the problem of deciding 
if the input graph has $\alpha$-robust matching with $\alpha \in (\frac{1}{\sqrt{2}},1)$ 
is \np-complete, and extend the original algorithm to the matroid intersection problem.

Deineko and Woeginger~\cite{deineko2006robust} show that the min-max-robust assignment problem with a fixed number of scenarios is equivalent to the \emph{exact perfect 
matching problem}, a famous problem with unknown complexity status. 
In the case of a variable number of scenarios the min-max-robust problem is \np-hard, as proved by Aissi, Bazgan and Vanderpooten~\cite{aissi2005complexity}.

Laroche et al.~\cite{laroche_et_al_14} investigate the following robust variant of the matching problem.
Find the maximum number of nodes that can be removed from the larger side of the bipartition such that all nodes from the opposite bipartition can still be matched.
The authors also provide a polynomial-time algorithm for this problem.

Bertsimas, Brown and Caramanis~\cite{BertsimasBrownCaramanis} provide a comprehensive survey of the different 
facets of robust optimization. Surveys on robust combinatorial optimization are given in the PhD theses 
of Olver~\cite{thesis_olver} and Adjiashvili~\cite{thesis_adjiashvili}.

Brigham et al.~\cite{brigham_et_al_05} study the minimum number of edges to be removed from a graph to arrive at a graph without a perfect matching, i.e.\ the matching preclusion number.
For several graph classes the matching preclusion number as well as all optimal solutions are computed.
Cheng et al.~\cite{cheng_et_al_09} extend this concept by excluding the obvious case when the edges' removal produces isolated nodes introducing the \emph{conditional matching preclusion sets}.
For several basic graph classes optimal such sets are presented.
Dourado et al.~\cite{dourado_et_al_15} study the closely related problem of finding a 
robust and recoverable matching: Given a graph $G$, a set $F$ of edges, and two 
integers $r$ and $s$, does the graph $G$ have a perfect matching $M$ such that, for 
every choice $F'$ of $r$ edges from $F$, the graph $G-F'$ contains a perfect 
matching $M'$ having a symmetric difference with $M$ of size at most $s$?
Such a matching $M$ is called \emph{$r$-robust} and \emph{$s$-recoverable}.
The authors prove hardness of several related problems and present some tractable cases.

Chegireddy and Hamacher~\cite{chegireddy_k_best_matchings_1987} investigate the problem 
of finding the $k$ best perfect matchings in terms of weight in a given graph and provide an $O(kn^3)$ algorithm.

Hung, Hsu and Sung~\cite{hung_et_al_93} consider the so-called \emph{most vital edges} 
with respect to a weighted bipartite matching, i.e.\ the edges causing the largest decrement in the value 
of the maximum matching upon removal and provided an $O(n^3)$ algorithm.
Zenklusen~\cite{zenklusen2010matching} analyzes the related interdiction problem, i.e.\ the 
problem of choosing a set $R$ of edges (or nodes) in some weighted graph $G$ respecting 
a given budget in order to minimize the weight of maximum matching in subgraph of $G$ resulting by removal of $R$.
Several special cases of that problem are shown to be \np-hard. An $O(1)$-approximation algorithm is 
also provided.

Darmann et al.~\cite{darmann_et_al_11} study the hardness of the maximum matching 
problem with the additional structure of conflict (respectively, forcing) graphs.
Such graphs describe pairs of edges such that at most (respectively, at least) one 
of the two is to be part of the solution. The problem is shown to be \np-hard 
even for very restricted classes of conflict and forcing graphs.
Öncan, Zhang and Punnen~\cite{oncan_et_al_13} presents 
further complexity results and heuristics for the perfect matching problem 
with conflict constraints.

\medskip
Structural and graph-theoretic aspects of modifications of the matching
problem have also been considered in the literature.

Plesník~\cite{plesnik_72} proves that an $(r-1)$-edge-connected $r$-regular graph remains perfectly matchable after removing $r-1$ arbitrary edges. 
In the context of \rap\ this means that a $2$-regular edge-connected subgraph of a (non-)bipartite graph is feasible in the uniform case.

Some generalizations of matching-covered graphs, such as \emph{$n$-extendable} graphs, were investigated in the literature.
A graph is called \emph{$n$-extendable} if any matching of size $n$ can be extended to a perfect matching. Thus, matching-covered graphs form the class of $1$-extendable graphs. 
Moreover, $n$-extendability can be seen as a special case of the so-called \emph{$E(m,n)$--property}.  
A graph satisfies the \emph{$E(m,n)$--property} if, for any pair $(M,N)$ of disjoint matchings with $|M|=m$ and $|N|=n$, the graph 
has a perfect matching that contains $N$ and does not contain any element of $M$.
Such graphs were studied by several authors, e.g.\ Porteous and Aldred~\cite{porteous_aldred_96}, Aldred et al.~\cite{aldred_et_al_99}, and Aldred and Plummer~\cite{aldred_plummer_99}.

Wang, Yuan and Zhou~\cite{wang_et_al_09} considered $k$-edge-deletable IM-extendable graphs that are characterized as follows. 
A graph $G$ has the latter property if after removing any set $F$ of $k$ edges from 
it, every induced matching $M$ of $G-F$ (i.e.\ a matching such that $E[V[M]] = M$) is 
contained in a perfect matching of $G-F$.

Several publications deal with algorithms to identify all edges of a graph being part of some maximum matching. 
Regin~\cite{regin_94}, Tassa~\cite{tassa_12}, and Costa~\cite{costa_94} provided results on bipartite graphs, 
while Carvalho and Cheriyan~\cite{carvalho_cheriyan_05} considered general graphs.
For bipartite graphs, Valencia and Vargas~\cite{valencia_vargas_16} studied 
the weighted version of this problem that asks to find all edges belonging to 
some minimum weight perfect matching.


\section{Hardness and approximability of RAP} 
\label{sec:rap-hardness-and-approximability}

This section deals with the approximability of \rap.
In Subsection~\ref{sec:hardness_uniform_basic_transformation}, it is first shown that the well-known \emph{Set Cover Problem} can be polynomially reduced to weighted \rap. 
From that reduction,  it is then concluded that under a mild and widely accepted assumption the bound on a reachable approximation guarantee on \rap\ is of order $\log n$, proving Theorem~\ref{thm:hardness_RAP_uniform}.  
Secondly, an asymptotically tight randomized algorithm is developed in Subsection~\ref{sec:algorithm_RAP}.
The existence of this algorithm implies the correctness of Theorem~\ref{thm:RAP-admits-log-n-approximation-algorithm}.

\subsection{Reduction from the Set Cover Problem}
\label{sec:hardness_uniform_basic_transformation}

The proof of the hardness of the approximability of \rap\ relies on a reduction from the Set Cover Problem\footnote{For the underlying \np-complete decision problem, see~\cite[Problem SP5]{garey_johnson_79}.}.   
\begin{problem*}[{Set Cover Problem (\setcover)}]
\label{prob:set-cover}
\text{}
\begin{itemize}
\item \underline{Input:} Tuple $([k], \mS)$, where $[k] = \{1,\dots, k\}$ is a finite ground set and $\mS:=\{S_1, \dots, S_l\}$ is a collection of subsets of $[k]$, for some $k,l\in\Z_{\geq 1}$.    
\item 
  \underline{Output:} A collection $\mC \subseteq \{S_1, \dots, S_l\}$ of minimum cardinality such that
  that $\bigcup_{S \in \mC} S = [k]$ holds. 
\end{itemize}
\end{problem*}

For a given instance $([k],\mS)$, the existence of any cover for ground
$[k]$ can efficiently be verified, simply by checking if $\bigcup_{S \in \mS} S = [k]$ holds. 
From now on, it is hence assumed that any considered \setcover\
instance contains at least one feasible cover solution.

For any feasible \setcover\ instance  $([k],\mS)$ consider now the balanced bipartite graph $G:=(R\dcup T, E)$ illustrated in Figure~\ref{fig:graph-corresponding-to-an-instance-of-set-cover} and constructed as follows.
\begin{figure}[htb]
  \centering
  \begin{tikzpicture}[inner sep = 2pt]
    \node (0) at (-0.5, 0) [circle, dashed, draw] {$\bar{u}_k$};
    \node (1) at (-0.5, 3) [circle, dashed, draw] {$\bar{u}_2$};
    \node (2) at (-0.5, 4) [circle, dashed, draw] {$\bar{u}_1$};
    
    \node (3) at (1.5, 0) [circle, draw] {$u_k$};
    \node (4) at (1.5, 3) [circle, draw] {$u_2$};
    \node (5) at (1.5, 4) [circle, draw] {$u_1$};
    
    \node (6) at (3.5, 0.5) [circle, draw] {$v_{S_l}$};
    \node (7) at (3.5, 2.5) [circle, draw] {$v_{S_2}$};
    \node (8) at (3.5, 3.5) [circle, draw] {$v_{S_1}$};
    
    \node (9) at (5, 0.5) [circle, draw] {$\bar{v}_{S_l}$};
    \node (10) at (5, 2.5) [circle, draw] {$\bar{v}_{S_2}$};
    \node (11) at (5, 3.5) [circle, draw] {$\bar{v}_{S_1}$};
    
    \node (12) at (7, 0.5) [circle, draw] {$\tilde{v}_{S_l}$};
    \node (13) at (7, 2.5) [circle, draw] {$\tilde{v}_{S_2}$};
    \node (14) at (7, 3.5) [circle, draw] {$\tilde{v}_{S_1}$};
    
    \node (15) at (8.5, 0.5) [circle, draw] {$w_{S_l}$};
    \node (16) at (8.5, 2.5) [circle, draw] {$w_{S_2}$};
    \node (17) at (8.5, 3.5) [circle, draw] {$w_{S_1}$};
    
    \node (18) at (10.5, 0) [circle, dashed, draw] {$\bar{u}_k$};
    \node (19) at (10.5, 3) [circle, dashed, draw] {$\bar{u}_2$};
    \node (20) at (10.5, 4) [circle, dashed, draw] {$\bar{u}_1$};
    
    \draw[-, thick] (17) -- (20);
    \draw[-, thick] (17) -- (19);
    \draw[-, thick] (16) -- (18);
    \draw[-, thick] (15) -- (18);
    \draw[-, thick] (15) -- (19);
    \draw[-, thick] (15) -- (20);
    
    \node at (0.5,-0.8) {$E_1$};
    \node at (2.5,-0.8) {$E_2$};
    \node at (4.25,-0.8) {$E_3$};
    \node at (6,-0.8) {$E_4$};
    \node at (7.75,-0.8) {$E_5$};
    \node at (9.5,-0.8) {$E_6$};
    
    \draw[-, thick] (3) -- (7);
    \draw[-, thick] (3) -- (6);
    \draw[-, thick] (4) -- (8);
    \draw[-, thick] (4) -- (6);
    \draw[-, thick] (5) -- (6);
    \draw[-, thick] (5) -- (8);
    
    \draw[-, thick] (12) -- (15);
    \draw[-, thick] (13) -- (16);
    \draw[-, thick] (14) -- (17);
    
    \foreach \i in {0,..., 6}
    {
      \pgfmathtruncatemacro{\head}{\i * 3}
      \pgfmathtruncatemacro{\tail}{\head + 1}
      \draw[-, very thick, loosely dotted] (\head) -- (\tail);	
    }
    
    \foreach \i in {0,..., 2}
    {
      \pgfmathtruncatemacro{\head}{\i}
      \pgfmathtruncatemacro{\tail}{\head + 3}
      \draw (\head) edge [-,thick] node [auto] {$\in F$} (\tail);
}

\foreach \i in {0,..., 2}
{
  \pgfmathtruncatemacro{\head}{\i + 6}
  \pgfmathtruncatemacro{\inner}{\head + 3}	
  \pgfmathtruncatemacro{\tail}{\head + 6}
  \draw[-, thick] (\head) -- (\inner);	
  \draw (\inner) edge [-,thick] node [auto] {$c_e = 1$} (\tail);	
}

\end{tikzpicture}
\caption{The figure shows the graph $G$ that is constructed from an instance $([k],\{S_1,\hdots,S_l\})$ of \setcover.
For the purposes of legibility, the dashed nodes representing node set $\{\bar{u}_s \mid s\in [k]\}$ appear twice in the figure.
}	 
\label{fig:graph-corresponding-to-an-instance-of-set-cover}
\end{figure}
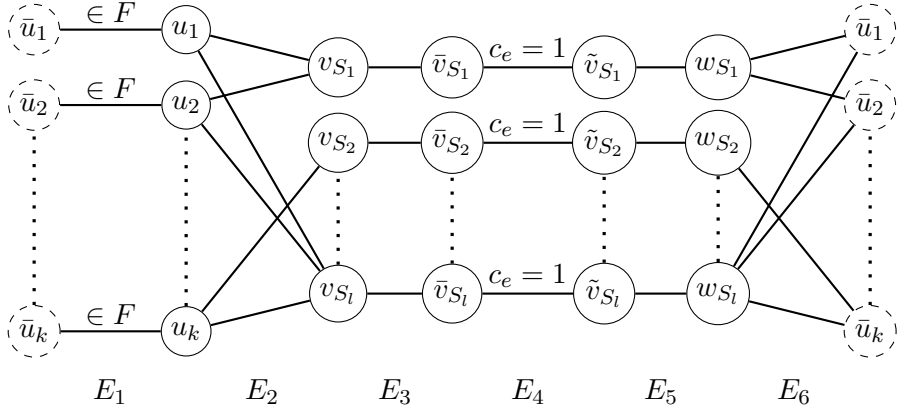
\begin{itemize}
\label{tsteps-one-to-four}
\item[{\small\bf (T1)}]
  For each $s\in [k]$, a node $u_s$ is introduced and added to $T$, and for each $S\in \mS$, a node $v_S$ is introduced and added to $R$. 
  Furthermore, $G$ contains the edge $\{u_s, v_S\}$ if and only if $s\in S$. All these edges form subclass $E_2$.
\item[{\small\bf (T2)}]
  For each $s\in [k]$, a copy $\bar{u}_s$ of node $u_s$ is introduced and added to $R$, and $u_s$ and $\bar{u}_s$ are connected by an edge. 
  These edges are composited to $E_1$.
\item[{\small\bf (T3)}]
  For each $S \in \mS$, two copies $\bar{v}_S$ and $\tilde{v}_S$ of node $v_S$ are introduced whereupon 
  the first copy is added to $T$, and the second one is added to $R$.
  The original node $v_S$ is connected with $\bar{v}_S$, and $\bar{v}_S$ is further connected with $\tilde{v}_S$. 
  All edges of the form $\{v_S,\bar{v}_S\}$ yield subclass $E_3$, while all edges of type $\{ \bar{v}_S, \tilde{v}_S\}$ belong to subclass $E_4$. 
\item[{\small\bf (T4)}]
  For each $S\in \mS$, a fourth type of node is introduced that is denoted by $w_S$ and added to $T$. 
  Each node $w_S$ is connected with the corresponding node $\tilde{v}_S$. These edges form subclass $E_5$.
  Finally, an edge between $w_S$ and $\bar{u}_s$ is introduced and added to subclass $E_6$ if and only if $s\in S$.  
\end{itemize} 
With some $\uset\subseteq E$ and some $c\in\R^E_{\geq 0}$, the tuple $(G,\uset,c)$ then defines an instance of \rap.
Note that both $E_2$ and $E_6$ encode whether element $s\in [k]$ is contained in subset $S\in \mS$, or not. 
$E_3$ and $E_5$ ensure the feasibility of the \rap\ instance, while the edges in $E_4$ are used to indicate which elements from $\mS$ are chosen to cover the ground set $[k]$.

The next lemma highlights the close relation between \setcover\ and the \rap.
\begin{lemma}
\label{lem:set-cover-rap-feasibility-version-1}
Let  $\inst:=([k],\mS)$ be an instance of \setcover, and let $\inst^\prime:=(G,\uset,c)$ be the \rap\ instance with $G:=(R\dcup T, E)$ as constructed by applying steps {\small\bf (T1) -- (T4)}, uncertainty set $\uset=E_1$ and cost vector $c\in\R^E_{\geq 0}$ with $c_e=1$, if $e\in E_4$, and $c_e=0$, if $e\in E\setminus E_4$.
\begin{itemize}
\item[(a)] 
  $\inst^\prime$ can be determined in time polynomially bounded in the input size of $\inst$.
\item[(b)]
 Let $\rset \subseteq E$ be with $E \setminus E_4 \subseteq \rset$.
 Then, 
 $\rset$ is feasible to $\inst^\prime$ if and only if $\mathcal{C}_{\rset} := \{S \in \mS \,\mid\, \{\bar{v}_S,\tilde{v}_S\} \in \rset \}$ is feasible to $\inst$.
\item[(c)]
  $\inst$ has a cover of size $\val$ if and only if $\inst^\prime$ contains a feasible solution $\rset\subseteq\bar E$ with objective value $c(X)=\val$.
\end{itemize}
\end{lemma}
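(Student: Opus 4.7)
The plan is to treat the three parts in order, with (b) as the structural heart of the lemma.

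Part (a) is a routine counting check: $G$ has $2(k+2l)$ vertices, and the six edge classes $E_1,\dots,E_6$ all have size polynomial in $k$ and $l$, so steps (T1)--(T4), together with setting $\uset=E_1$ and the $0/1$ cost vector $c$, can clearly be carried out in polynomial time. I would dispatch this in a single sentence.

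The key structural observation for (b) is a ``gadget switch''. For each $S\in\mS$ the nodes $v_S,\bar v_S,\tilde v_S,w_S$ form a three-edge path, augmented by the external edges from $v_S$ to the $u_s$'s (via $E_2$) and from $w_S$ to the $\bar u_s$'s (via $E_6$). The $E_4$-edge $\{\bar v_S,\tilde v_S\}$ acts as a switch: if $S\notin\mC_\rset$, then in $G[\rset]$ the node $\bar v_S$ has $v_S$ as its only neighbour and $\tilde v_S$ has $w_S$ as its only neighbour, so in any perfect matching of $G[\rset\setminus\{f\}]$ both $\{v_S,\bar v_S\}$ and $\{\tilde v_S,w_S\}$ are forced, sealing the gadget off from the $u$-nodes. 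If $S\in\mC_\rset$, the gadget becomes ``switchable'' and can simultaneously absorb a pair $\{u_s,v_S\},\{\bar u_{s'},w_S\}$ with $\{\bar v_S,\tilde v_S\}$ taken inside.

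With this in hand the ``only if'' direction of (b) reads cleanly: assuming $\rset$ feasible, fix $s\in[k]$ and any perfect matching $M$ of $G[\rset\setminus\{\{u_s,\bar u_s\}\}]$. The only remaining neighbours of $u_s$ in $G[\rset]$ are the $v_S$'s with $s\in S$, so $M$ pairs $u_s$ with some such $v_S$. Then $\bar v_S$ must be matched to $\tilde v_S$ (its only alternative neighbour), forcing $\{\bar v_S,\tilde v_S\}\in\rset$ and hence $S\in\mC_\rset$, so $\mC_\rset$ covers $[k]$. Conversely, if $\mC_\rset$ is a cover, I would for each $s$ pick $S\in\mC_\rset$ with $s\in S$ and exhibit the explicit perfect matching $\{u_s,v_S\},\{\bar u_s,w_S\},\{\bar v_S,\tilde v_S\}$ together with $\{u_{s'},\bar u_{s'}\}$ for $s'\neq s$ and the ``closed-gadget'' pairs $\{v_{S'},\bar v_{S'}\},\{\tilde v_{S'},w_{S'}\}$ for $S'\neq S$; a one-line verification confirms that every chosen edge lies in $\rset\setminus\{\{u_s,\bar u_s\}\}$ and every node is covered exactly once.

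Part (c) then follows from (b) by simple cost bookkeeping. Since $c$ is $1$ on $E_4$ and $0$ elsewhere, $c(\rset)=|\rset\cap E_4|=|\mC_\rset|$ whenever $\rset\supseteq E\setminus E_4$. For the ``if'' direction of (c) I would take a cover $\mC$ of size $\val$ and form $\rset:=(E\setminus E_4)\cup\{\{\bar v_S,\tilde v_S\}:S\in\mC\}$, which by (b) is feasible with $c(\rset)=\val$; for the converse I would take any feasible $\rset$ of cost $\val$, augment it to $\rset^+:=\rset\cup(E\setminus E_4)$, still feasible by monotonicity of the feasible set and still of cost $\val$ since the extra edges are free, and then invoke (b) to obtain a cover $\mC_{\rset^+}$ of cardinality $\val$. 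The main obstacle I expect is the ``only if'' half of (b): one has to exclude every alternative matching route through a gadget with $S\notin\mC_\rset$, and the sealing-off observation is precisely what rules this out.
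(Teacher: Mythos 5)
Your proof is correct and follows essentially the same route as the paper: the ``only if'' direction of (b) is the identical forcing argument ($u_s$ must be matched into $E_2$, which in turn forces $\{\bar v_S,\tilde v_S\}$), and your explicit perfect matching in the ``if'' direction is exactly the matching the paper obtains as the symmetric difference $M_0\triangle C_0$ of the reference matching $E_1\cup E_3\cup E_5$ with the six-node alternating cycle through the gadget of $S$. Your treatment of (c), including the augmentation to $\rset\cup(E\setminus E_4)$ via monotonicity, just spells out what the paper leaves implicit.
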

\begin{proof}
\begin{itemize}
\item[(a)]
  Let $\inst:=([k],\mS)$ be given. 
  From the construction of $G$, it follows that  
  $|T| = |R| = 2|\mS|+|k|$ and that  
  $|E| = 3|\mS| +|k| + 2\sum_{S\in \mS} |S|$
  holds, i.e.\ the size of $G$ is polynomially bounded by the size of $\inst$.
  Further, $E_1$ and $c$ are, by definition, polynomially bounded in
  the input size of $G$.
  Thus, $\inst^\prime=(G,E_1,c)$ can be determined in polynomial time w.r.t.\ the input size of $\inst$. 
\item[(b)]
\emph{``only if'' part.\quad}
Let $\rset$  be any feasible solution to $\inst^\prime$ with $E\setminus E_4 \subseteq\rset$, and let $s\in [k]$. 
To show that $s$ is contained in some set of $\mathcal{C}_{\rset}$, consider
the edge $\fset_s=\{\bar{u}_s,u_s\} \in\uset=E_1$. 
As $X$ is feasible to $\inst^\prime$, there is a perfect matching $M \subseteq \rset$ in $G$ with $\fset_s\not\in M$.
Since $f_s\not\in M$, node $u_s$ must be matched with some node from $\{v_{S_1},\dots, v_{S_l}\}$ 
by the corresponding edge from $E_2$, i.e.\ $\{u_s, v_S\} \in \rset$, for some $S \in \mS$ with $s\in S$. 
Edge $\{u_s, v_S\}$ also covers node $v_S$. Thus, the node $\bar v_S$ must be matched with $\tilde v_S$. 
This implies that $\{\bar v_S, \tilde v_S\} \in \rset$ and, hence, that $S\in \mathcal{C}_{\rset}$. 
It follows that $s$ is covered, and that $\mathcal{C}_{\rset}$ is a
feasible cover for $\inst$.

\emph{``if'' part.\quad}
Let $\mC\subseteq \mS$ be a feasible cover for $\inst$. Then, define
$${\rset}:=(E\setminus E_4)\;\bigcup\; \big\{\,\{\bar{v}_S,\tilde{v}_S\}\mid S\in\mathcal{C}\big\},
\textrm{ implying that } \mathcal{C}=\mathcal{C}_X \textrm{  holds}.$$
Recap further that $X$ is feasible to $\inst^\prime$ if and only if $\rset \setminus \{f_s\}$ contains a perfect  matching of $G$, for all $\fset_s =\{\bar{u}_s,u_s\} \in\uset=E_1$.\\
To verify the correctness of the latter condition, let $\fset_s
\in\uset$ be arbitrary. 
Note also that $M_0:=E_1\cup E_3 \cup E_5$  is a perfect matching in $G$ with $f_s\in M_0\subseteq X$.
As $\mathcal{C}$ is a feasible cover for $\inst$, there exists a set $S \in \mS$ with $s\in S$ and $\{\bar v_S, \tilde v_S\} \in \rset$.
Now consider the cycle $C_0$ defined on node set $\{ \bar u_s, u_s, v_S, \bar v_S, \tilde v_S, w_S \}$. 
Observe that all edges in $C_0$ are contained in $X$. In particular, it holds that
$\{\bar u_s, u_s\}, \{v_S, \bar v_S\}, \{\tilde v_S, w_S\} \in M_0$ and that 
$\{u_s, v_S\}, \{\bar v_S, \tilde v_S\}, \{w_S, \bar u_s\}  \in \rset \setminus M_0$, i.e.\ $C_0$ forms an $M_0$-alternating cycle in $\rset$. 
Therefore, the set $M$ can be expressed as the symmetric difference $M_0 \triangle C_0$ and  is given as
$$\Big( M_0 \setminus \big\{ \{\bar u_s, u_s\}, \{v_S, \bar v_S\}, \{\tilde v_S, w_S\} \big\}\Big)\;\cup\;\big\{ \{u_s, v_S\}, \{\bar v_S, \tilde v_S\}, \{w_S, \bar u_s\} \big\}.$$
Hence, $M$ is a perfect matching in $G$ with $\{\bar u_s, u_s\}=\fset_s\not\in M \subseteq X$. 
\item[(c)] 
  This follows directly from part (b) and by the choice of $c$. 
\end{itemize}
\end{proof}

Lemma~\ref{lem:set-cover-rap-feasibility-version-1} already implies that any \setcover\ instance can be transformed, in polynomial time, into an equivalent \rap\ instance. It can further be concluded that \rap\ is as hard to approximate as \setcover.
The proof of Theorem~\ref{thm:hardness_RAP_uniform}, however, requires a stronger result, namely that \setcover\ can be even equivalently transformed into an \emph{uniform} \rap\ instance.

Again, let $\inst:=([k],\mS)$ be any \setcover\ instance, and let
$G:=(R\dcup T, E)$, with $E:=\bigcup_{i=1}^6 E_i$, be the graph
constructed from $\inst$ by applying the transformation steps {\small\bf (T1)--(T4)} as described above.
In general, the uniform version of the \rap\ instance $\inst^\prime$
on $G$ is not equivalent to $\inst$.  
To see this, pick any edge $e=\{\tilde{v}_S,w_{S}\}\in E_5$ and assume that $e$ is vulnerable, i.e.\ $e$ belongs to $\uset$. 
Then, it immediately follows that the (unique) edge
$e^\prime:=\{\bar{v}_S,\tilde{v}_S\}$ in $E_4$, incident with $e$,
must be part of any feasible solution, as otherwise $\tilde{v}_s$
cannot be matched without using $e$.  
Therefore, the only subset $\rset\subseteq E$ that contains $E\setminus E_4$ and that is feasible to $\inst^\prime$ is given by $X=E$.  
In the light of Lemma~\ref{lem:set-cover-rap-feasibility-version-1}, this shows that
a uniform \rap\ instance on $G$ has precisely one feasible solution
that only represents the trivial feasible cover $\mS$. The same line of arguments also works when $e$ is chosen from $E_3$ instead from $E_5$. 
 
To overcome the described problem, the graph $G$ must be changed in an appropriate way. For this, the following, additional transformation step is performed on $G$.
\begin{itemize}
\label{tstep-five}
\item[{\small\bf (T5)}] Each edge $e = \{v,w\} \in E_3\cup E_5$ is replaced by a cycle of length six. 
That means, for each $e$ introduce four auxiliary nodes 
  $x_1^{(e)}, x_2^{(e)}$, $y_1^{(e)}, y_2^{(e)}$. 
  Then, remove each $e=\{v,w\}\in E_3\cup E_5$ from $G$, and add the edges  
  $\{v,x_1^{(e)}\}, \{x_1^{(e)},x_2^{(e)}\}, \{x_2^{(e)},w\}$ as well as
  $\{v,y_1^{(e)}\}, \{y_1^{(e)},y_2^{(e)}\}, \{y_2^{(e)},w\}$ to $G$ (see Figure~\ref{fig:cloning-and-subdividing-the-crucial-edges-for-the-uniform-approximation-proofs}, for an illustration).
\end{itemize} 
\begin{figure}[htb]
  \centering
  \begin{tikzpicture}[inner sep = 2pt]
    \begin{scope}[yshift=-2cm]
      
      \node (0) at (0, 0) [circle, draw] {$v$};
      \node (1) at (4.5, 0) [circle, draw] {$w$};
      \node (2) at (1.5, 0.5) [circle, draw] {$x_1^{(e)}$};
      \node (3) at (3, 0.5) [circle, draw] {$x_2^{(e)}$};
      \node (4) at (1.5, -0.5) [circle, draw] {$y_1^{(e)}$};
      \node (5) at (3, -0.5) [circle, draw] {$y_2^{(e)}$};
      
      \draw[-, thick] (0) -- (2) -- (3) -- (1);
 \draw[-, thick] (0) -- (4) -- (5) -- (1);
 
 \node at (-1,0) {$\mapsto$};
 \node (6) at (-3,0) [circle,draw] {$v$};
 \node (7) at (-2,0) [circle,draw] {$w$};
 \draw[-, thick] (6) -- (7);

 \end{scope}
 \end{tikzpicture}
 \caption{The Figure shows the substitution of an edge $e = \{v, w\}$ by two parallel paths of length three with the help of four auxiliary  nodes.}	 
 \label{fig:cloning-and-subdividing-the-crucial-edges-for-the-uniform-approximation-proofs}
\end{figure}
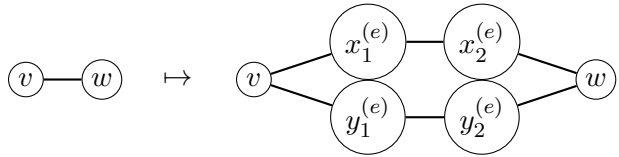
Transformation step {\small\bf (T5)} yields a new graph $\bar G :=
(\bar R \dcup \bar T, \bar E)$ with $\bar E = \bigcup_{j=1}^6 \bar
E_j$ and
\begin{itemize}
\item $\bar R := R\cup \{x^{(e)}_2, y^{(e)}_2 \mid e\in E_3\cup E_5\}$,\quad $\bar T := T\cup \{x^{(e)}_1, y^{(e)}_1 \mid e\in E_3\cup
  E_5\}$,
\item $\bar E_3:= \big\{\,\{v,x_1^{(e)}\}, \{x_1^{(e)},x_2^{(e)}\},
  \{x_2^{(e)},w\}, \{v,y_1^{(e)}\}, \{y_1^{(e)},y_2^{(e)}\}, \{y_2^{(e)},w\}
  \mid e=\{v,w\}\in E_3 \big\}$,
\item $\bar E_5:= \big\{\,\{v,x_1^{(e)}\}, \{x_1^{(e)},x_2^{(e)}\},
  \{x_2^{(e)},w\}, \{v,y_1^{(e)}\}, \{y_1^{(e)},y_2^{(e)}\}, \{y_2^{(e)},w\} \mid e=\{v,w\}\in E_5\big\}$,
\item $\bar E_1:= E_1$,  $\bar E_2:= E_2$,  $\bar E_4:= E_4$, and  $\bar E_6:= E_6$.    
\end{itemize}
Note that $\bar G$ is constructable in polynomial time (in terms of the input size of $\inst$). 
The next lemma shows that \setcover\ can be equivalently restated as a uniform \rap\ instance on $\bar G$.  
\begin{lemma}
\label{lem:set-cover-rap-feasibility-uniformversion}
Let $\inst:=([k],\mS)$ be an instance of \setcover, and let
$\bar\inst:=(\bar G,\bar E,\bar c)$ be the uniform \rap\ instance with
$\bar G:=(\bar R\dcup \bar T, \bar E)$ constructed by applying steps
{\small\bf (T1) -- (T5)}, and with $\bar c\in\R^{\bar E}_{\geq 0}$,
where $\bar c_e = 1$, if $e\in  \bar E_4$, and $\bar c_e=0$, otherwise. 
\begin{itemize}
\item[(a)] 
  $\bar \inst$ can be determined in time polynomially bounded in the input size of $\inst$.
\item[(b)]
  Let $\rset \subseteq \bar E$ such that $\bar E \setminus \bar E_4 \subseteq \rset$ holds.
  Then, 
  $\rset$ is feasible to $\bar \inst$ if and only if $\mathcal{C}_{\rset} := \{S \in \mS \,\mid\, \{\bar{v}_S,\tilde{v}_S\} \in \rset \}$ is feasible to $\inst$.
\item[(c)]
  $\inst$ has a cover of size $\val$ if and only if $\bar\inst$
  contains a feasible solution $\rset\subseteq\bar E$ with objective
  value $\bar c(X)=\val$.
\end{itemize} 
\end{lemma}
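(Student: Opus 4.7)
The plan is to mirror the proof of Lemma~\ref{lem:set-cover-rap-feasibility-version-1}, handling parts (a) and (c) immediately and concentrating on part (b). For (a), transformation step (T5) adds only $O(|E_3|+|E_5|)$ vertices and edges, so $\bar G$, the uncertainty set $\bar E$ and the cost vector $\bar c$ are all constructible in polynomial time from $\inst$. For (c), since only edges in $\bar E_4$ carry unit cost, the cost of any solution $\rset\supseteq \bar E\setminus\bar E_4$ equals $|\mathcal{C}_\rset|$, so (c) reduces to (b). The key structural observation behind part (b) is that each length-$6$ cycle replacing an edge $e=\{v,w\}\in E_3\cup E_5$ admits three distinct perfect matchings on its six nodes: the \emph{$x$-path matching} $\{\{v,x_1^{(e)}\},\{x_2^{(e)},w\},\{y_1^{(e)},y_2^{(e)}\}\}$, the symmetric \emph{$y$-path matching}, and the \emph{internal matching} $\{\{x_1^{(e)},x_2^{(e)}\},\{y_1^{(e)},y_2^{(e)}\}\}$ that leaves both $v$ and $w$ free. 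This redundancy absorbs the failure of any single edge inside the cycle while faithfully simulating either a matched or an unmatched copy of $e$.

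For the ``if'' direction of (b), given a cover $\mathcal{C}\subseteq\mS$ I would set $\rset:=(\bar E\setminus\bar E_4)\cup\{\{\bar v_S,\tilde v_S\}\mid S\in\mathcal{C}\}$ and fix a canonical perfect matching $\bar M_0\subseteq\bar E\setminus\bar E_4\subseteq \rset$ consisting of $E_1$ together with the $x$-path matching of every $6$-cycle. I would then verify feasibility by a short case analysis on $e\in\bar E$: if $e\in\bar E_2\cup\bar E_4\cup\bar E_6$, then $\bar M_0\subseteq \rset\setminus\{e\}$, since $\bar M_0$ meets none of these edge classes; if $e$ lies inside some $6$-cycle, I would replace the $x$-path contribution of that cycle in $\bar M_0$ by whichever of the three matchings above avoids $e$; and if $e=f_s=\{u_s,\bar u_s\}\in\bar E_1$, I would pick $S\in\mathcal{C}$ with $s\in S$ and flip $\bar M_0$ along the alternating cycle $\bar u_s,u_s,v_S,x_1^{(e_3)},x_2^{(e_3)},\bar v_S,\tilde v_S,x_1^{(e_5)},x_2^{(e_5)},w_S,\bar u_s$ (where $e_3=\{v_S,\bar v_S\}$ and $e_5=\{\tilde v_S,w_S\}$). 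This cycle lies entirely in $\rset$ because $\{\bar v_S,\tilde v_S\}\in \rset$, and the symmetric difference $\bar M_0\triangle C_0$ yields the required perfect matching avoiding $f_s$.

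For the ``only if'' direction, suppose $\rset$ is feasible; I would fix any $s\in[k]$ and consider the perfect matching $M\subseteq \rset\setminus\{f_s\}$ guaranteed by feasibility. Exactly as in Lemma~\ref{lem:set-cover-rap-feasibility-version-1}, $u_s$ must be matched along some $\bar E_2$ edge $\{u_s,v_S\}$ with $s\in S$. The novel step, forced by (T5), is showing that $\{\bar v_S,\tilde v_S\}\in M$: the neighbours of $\bar v_S$ in $\bar G$ are $x_2^{(e_3)}$, $y_2^{(e_3)}$ and $\tilde v_S$, and since $v_S$ is already matched to $u_s$, the degree-$2$ subdivision nodes $x_1^{(e_3)}$ and $y_1^{(e_3)}$ are forced to match their only remaining neighbours $x_2^{(e_3)}$ and $y_2^{(e_3)}$; this leaves $\tilde v_S$ as the unique available partner of $\bar v_S$, so $\{\bar v_S,\tilde v_S\}\in M\subseteq \rset$ and $S\in\mathcal{C}_\rset$ covers $s$. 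The main obstacle in the whole argument is precisely this forcing step, which is what motivated the subdivision in (T5): without the degree-$2$ auxiliary nodes, a feasible solution could legitimately avoid the $\bar E_4$ edges by using $\bar E_3$- or $\bar E_5$-edges to match $\bar v_S$, and the correspondence with set cover would break down.
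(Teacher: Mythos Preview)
Your proposal is correct and follows exactly the approach the paper intends: the paper omits the proof entirely, stating only that it ``is very similar to the proof of Lemma~\ref{lem:set-cover-rap-feasibility-version-1}'', and your argument is precisely that adaptation, with the alternating-cycle and forcing steps correctly adjusted for the length-$6$ gadgets of~(T5). One small slip: the ``internal matching'' $\{\{x_1^{(e)},x_2^{(e)}\},\{y_1^{(e)},y_2^{(e)}\}\}$ is not a perfect matching on the six cycle nodes (it leaves $v,w$ uncovered, as you yourself note), so in the case ``$e$ lies inside some $6$-cycle'' you should pick between the $x$-path and $y$-path matchings only---which always suffices, since every cycle edge lies in exactly one of them.
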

The proof of Lemma~\ref{lem:set-cover-rap-feasibility-uniformversion} is very similar to the proof of Lemma~\ref{lem:set-cover-rap-feasibility-version-1}, and is hence omitted here. With Lemma~\ref{lem:set-cover-rap-feasibility-uniformversion} at hand, Theorem~~\ref{thm:hardness_RAP_uniform} can be proved.

\begin{proof}[Proof of Theorem~\ref{thm:hardness_RAP_uniform}]
  By Lemma~\ref{lem:set-cover-rap-feasibility-uniformversion}, any
  \setcover\ instance $\inst:=([k],\mS)$ can be reduced, in polynomial
  time, into a uniform \rap\ instance $\bar\inst:=(\bar G, \bar E,
  \bar c)$, with $\bar G=(\bar R\dcup \bar T, \bar E)$ and $\bar c\in\R^{\bar E}_{\geq 0}$ as specified in the statement of Lemma~\ref{lem:set-cover-rap-feasibility-uniformversion}.
  In particular, it follows from part (c) that
  a feasible cover $\mC\subseteq \mS$ in $\inst$ has a cardinality of
  $\val\in\Z_{\geq 0}$ if and only if there is a feasible solution
  $\rset\subseteq \bar E$ for $\bar\inst$ containing $\val$ different
  edges out of  $\big\{ \{\bar v_S, \tilde{v}_S\}\mid S\in\mS \big\}
  (=\bar E_4=E_4)$. \\
  Feige~\cite{FeigeSetCover} showed that, for any $d < 1$,  \setcover\ admits no polynomial time  
  $d\log n$-approximation algorithm unless \np$\,\subseteq\, \dtime (n^{\log \log n}$). His result also holds when $|\mS|$ is polynomial in $k$, implying that $\bar G$ has size polynomial in $k$.\\
 Combining Feige's result with Lemma~\ref{lem:set-cover-rap-feasibility-uniformversion} completes the proof.
\end{proof}


\subsection{$O(\log n)$-Approximation for RAP}
\label{sec:algorithm_RAP}  

This section presents a polynomial $O(\log n)$-approximation 
algorithm for \rap, thus proving
Theorem~\ref{thm:RAP-admits-log-n-approximation-algorithm}. 
After introducing some basic notation, the approximation algorithm is first described for the
uniform case.
Afterwards, it is explained how the algorithm can be extended to the non-uniform case.  
  
Again, it is assumed that the \rap\ instance,
consisting of a balanced bipartite graph $G=(R\dcup T, E)$,
an uncertainty set $\uset\subseteq E$ and of an non-negative cost
vector $c\in\R^E_{\geq 0}$, is feasible. 
Furthermore, for each $S\subseteq E$, $\chi^S\in\{0,1\}^E$ represents the \emph{incidence vector} of $S$, 
i.e.\ the vector with $\chi^S_e = 1$ if $e\in S$ and $\chi^S_e = 0$, otherwise.

The algorithm is based on an \lp-rounding procedure that works with a
relaxation of the integer linear formulation of \rap. 
For this, let $P_G \subseteq \R^{E}$ be the perfect matching polytope associated
with $G$, i.e.\ $P_G$ is the convex hull of all incidence vectors of perfect
matchings in $G$. 

A standard \ilp\ formulation of \rap\ involves the following variables. 
\begin{enumerate}
  \setlength{\itemsep}{-0.25ex}
 \item [(i)] $x^{-f}\in\{0,1\}^E$ represen\-ting a perfect matching in $G-\{\fset\}$, for all $\fset\in\uset$,
 \item [(ii)] $y\in\{0,1\}^E$ encoding a feasible solution to \rap.
\end{enumerate}
Then, \rap\ can be modeled as an \ilp\ as follows.
\begin{align}
\tag{\ilp}
\label{eq_milpformulation}
\begin{array}{rrcll}
\min & c^\top y\\[1ex]
\textrm{s.\,t.} & x^{-f} & \in & P_G\cap\{x\in \R^{E} \,\mid\, x_f=0\}, & \textrm{
  for each } f\in\uset,\\
& y & \geq & x^{-f}, &  \textrm{ for each } f\in\uset, \\
& x^{-f} & \in & \{0,1\}^E, &  \textrm{ for each } f\in \uset, \\
& y & \in & \{0,1\}^E. &
\end{array}
\end{align}
The \lp-relaxation (\lp) is obtained by
relaxing all integrality constraints in~\eqref{eq_milpformulation}.
To keep notation short, let $x\in \big(\R^{E}\big)^{E}$  be the
vector with parts $x^{-f}$, $f\in\uset$. 
It is straightforward to verify that integer solutions
to~\eqref{eq_milpformulation} coincide with feasible solutions 
to the \rap\ instance.

Now consider the uniform case, i.e.\ $\uset = E$. 
Let $(x,y)$ be a fractional solution to the corresponding (\lp), and
select some edge $f\in\uset$.  
Since $x^{-f}$ is contained in 
$P_G \cap \{x\in \R^{E} \,\mid\, x_f =0\}$, 
there exist positive scalars $\lambda^{-f}_1, \cdots,\lambda^{-f}_k$ with
$\sum_{i\in [k]} \lambda^{-f}_i = 1$, and perfect 
matchings $M^{-f}_1, \cdots, M^{-f}_k$ in $G-\{f\}$
such that  
$x^{-f} = \sum_{i\in [k]} \lambda^{-f}_i \chi^{M^{-f}_i}$.
By Caratheodory's theorem, there is a decomposition of the 
latter type with $k$ bounded by $O(m) = O(n^2)$. 
Furthermore given $x^{-f}$, such a decomposition can be computed in
polynomial time using polyhedral techniques~\cite[Thm. 6.5.11]{groetschel_et_al_book}.
\begin{algorithm}
  \caption{: Randomized $O(\log n)$-Approximation for \rap}
	\begin{algorithmic}[1]	
		\label{alg:randomized_rounding}
		\REQUIRE{ $G=(R\dcup T,E)$ with $|R|=|T|$, and $c\in\R^E_{\geq 0}$.}
		\ENSURE{ A feasible solution $\rset$  to \rap\, on $G$ with $\uset=E$
			and cost vector $c$.}
		\STATE{ Solve (\lp) to obtain an optimal solution $(x,y)$}
		\STATE{ $\rset \gets \emptyset$ }
		\WHILE{ $\rset$ is infeasible }
		\STATE{ Select an edge $f\in \uset$ such that $\rset\setminus \{f\}$ contains no
			perfect matching }
		\STATE{ Compute a decomposition of $x^{-f}$ as $x^{-f}= \sum_{i=1}^k
			\lambda_i^{-f} \chi^{M^{-f}_i}$ and select one matching $\bar M \in\{M^{-f}_i\mid i \in [k]\}$ with
			$\mathrm{Pr}\left[\bar M = M^{-f}_i\right] = \lambda^{-f}_i$ for all $i \in [k]$}
		\STATE{ Add to $\rset$ all edges from $\bar M$ that connect distinct
			connected components in $(R \dcup T,\rset)$}
		\ENDWHILE
		\RETURN{$\rset$}
	\end{algorithmic}
\end{algorithm}

The algorithm performs several iterations of randomized rounding that
are based on the latter decomposition of fractional matchings.
More precisely, at each iteration, an infeasible set $\rset\subseteq E$ of edges, that was
chosen so far, is augmented with an additional set $M$ of edges chosen randomly
as follows. First, an arbitrary edge $f$ is chosen from $E$ 
among all edges not yet covered by $\rset$, i.e.\ among all $e^\prime\in E$ such that the edge set $\rset$ selected so far contain
no perfect matching that does not include $e^\prime$. 
Next, a decomposition of the vector $x^{-f}$ as a convex combination of perfect matchings is computed, as above. This decomposition is then used 
to select a single perfect matching $\bar M$ from $\{M^{-f}_1, \cdots, M^{-f}_k\}$ \emph{randomly}, where 
$M^{-f}_i$ is chosen with probability $\lambda^{-f}_i$ for all $i\in [k]$. Finally, the augmenting set 
$M\subseteq \bar M$ is chosen to contain all edges of $\bar M$ connecting 
\emph{distinct connected components} of $\rset$. The edges of $M$ are added to $\rset$ and 
the algorithm proceeds to the next iteration. 
The algorithm terminates when $\rset$ is a feasible solution. 
A summary of the algorithm is presented as Algorithm~\ref{alg:randomized_rounding}.

The correctness of the algorithm is shown by exploiting the following
classical result for matching-covered graphs.
\newpage
\begin{theorem}[{\cite[Thm. 4.1.1., p. 122]{lovasz_plummer_book_86}}]
\label{thm:m-c-graphs-equivalence}
A connected bipartite graph $H=(U\cup W, E)$ with $|U\cup W| \geq 4$ is matching-covered 
if and only if  for any $u \in U$ and $w \in W$ the graph $H - \{u\} -
\{w\}$ has a perfect matching. 
\end{theorem}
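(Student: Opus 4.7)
The backward implication is essentially immediate. Given any edge $\{u,w\}\in E$ with $u\in U$ and $w\in W$, a perfect matching $M'$ of $H-u-w$ extends to $M'\cup\{\{u,w\}\}$, which is a perfect matching of $H$ containing $\{u,w\}$. Hence every edge of $H$ lies in some perfect matching, i.e.\ $H$ is matching-covered.

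For the forward implication I would argue by contradiction using Hall's theorem together with a tight-set argument. Fix $u\in U$ and $w\in W$, and suppose $H-u-w$ has no perfect matching. Since $H$ is matching-covered it has a perfect matching, so $|U|=|W|$ and the bipartite graph $H-u-w$ is balanced. Hall's condition must therefore fail in $H-u-w$, yielding a nonempty $S\subseteq U\setminus\{u\}$ with $|N_H(S)\setminus\{w\}|<|S|$. On the other hand, Hall's condition holds in $H$, so $|N_H(S)|\ge|S|$. The only way to reconcile the two inequalities is $|N_H(S)|=|S|$ together with $w\in N_H(S)$.

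This makes $(S,N_H(S))$ a tight pair: in any perfect matching of $H$, the edges incident to $S$ necessarily inject $S$ into $N_H(S)$, and since the two sets have equal cardinality, every vertex of $N_H(S)$ must be matched to a vertex of $S$. I would then use matching-coveredness to contradict this. If there exists an edge $\{u',w''\}$ with $u'\in U\setminus S$ and $w''\in N_H(S)$, then a perfect matching of $H$ containing this edge matches $w''\in N_H(S)$ outside $S$, a contradiction. Hence no such edge exists. Combined with the defining property of $N_H(S)$ (no edges from $S$ leave $S\cup N_H(S)$), this makes $S\cup N_H(S)$ a union of connected components of $H$. Since $H$ is connected and $S\cup N_H(S)$ is nonempty, it must equal $V(H)$; but $u\in U\setminus S$ and $N_H(S)\subseteq W$, so $u\notin S\cup N_H(S)$, the final contradiction.

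The main obstacle is articulating the tight-set step cleanly and excluding degenerate cases: one must verify that $S$ is nonempty (which follows because Hall's condition cannot fail on $S=\emptyset$) and that the hypothesis $|V(H)|\ge 4$ is used only to rule out the pathological two-vertex situation where the statement is essentially vacuous. Everything else reduces to standard bipartite Hall/tight-set reasoning once the matching-coveredness of $H$ is invoked on the boundary edges of the tight set.
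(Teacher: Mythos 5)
The paper does not actually prove this statement: it is quoted directly from Lov\'asz and Plummer (Theorem 4.1.1 of \emph{Matching Theory}) and used as a black box in the analysis of Algorithm~1, so there is no internal proof to compare against. Your argument is, however, a correct and complete proof of the quoted equivalence, and it is essentially the standard Hall/tight-set derivation. The backward direction is the one-line extension of a perfect matching of $H-u-w$ by the edge $\{u,w\}$. In the forward direction, the chain $|S|\le |N_H(S)|\le |N_H(S)\setminus\{w\}|+1\le |S|$ correctly forces $|N_H(S)|=|S|$ and $w\in N_H(S)$ (and $S\neq\emptyset$ since Hall's condition cannot fail on the empty set); the tight-set observation that every perfect matching must match $N_H(S)$ bijectively into $S$ is sound; applying matching-coveredness to a hypothetical edge between $U\setminus S$ and $N_H(S)$ rules such edges out; and connectivity then forces $S\cup N_H(S)=V(H)$, contradicting $u\notin S\cup N_H(S)$. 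Your closing remark is also accurate: the hypothesis $|U\cup W|\ge 4$ is not needed for this particular equivalence (it holds trivially for a single edge) and is inherited from the fuller statement in the source, where other listed equivalents do require it.
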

Concretely, Theorem~\ref{thm:m-c-graphs-equivalence} serves as a main
ingredient  
for the proof of the next lemma, which states a useful structural
property of intermediate solutions in the algorithm.  
\begin{lemma}\label{lem:randomized-rounding-algo-feasibility}
 Let $\rset$ be a non-empty set of edges already selected in an 
 arbitrary iteration of Algorithm~\ref{alg:randomized_rounding}.
 Then, the subgraph $G[\rset]$ induced by $\rset$ is
 matching-covered.    
\end{lemma}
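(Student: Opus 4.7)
My plan is to prove Lemma~\ref{lem:randomized-rounding-algo-feasibility} by induction on the iteration counter of Algorithm~\ref{alg:randomized_rounding}. The base case is the first iteration: starting from $\rset=\emptyset$, every vertex of $G$ is a singleton component of $(R\dcup T,\rset)$, so every edge of the selected perfect matching $\bar M$ crosses components and is added. Hence $\rset$ becomes a perfect matching of $G-\fset$ (and in particular of $G$), so $G[\rset]$ is a disjoint union of $K_2$'s --- each trivially matching-covered --- and $V[\rset]=R\dcup T$. This last equality persists in all subsequent iterations since $\rset$ only grows, which I use below to rule out the pathology of newly-appearing leaves.

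For the inductive step fix an iteration that turns a matching-covered $\rset$ into $\rset'=\rset\cup M$, where $\bar M$ is the newly chosen perfect matching of $G-\fset$ and $M$ consists of those edges of $\bar M$ whose endpoints lie in different components of $(R\dcup T,\rset)$. Let $C$ be a connected component of $G[\rset']$; it is the union of some components $C_1,\dots,C_p$ of $G[\rset]$ joined by the cross-edges $M_C:=M\cap E[V(C)]$. The case $p=1$ means $C=C_1$, which is matching-covered by the inductive hypothesis, so assume $p\ge 2$. I then aim to verify the criterion of Theorem~\ref{thm:m-c-graphs-equivalence}: for every $u\in R\cap V(C)$ and every $w\in T\cap V(C)$, $C-u-w$ has a perfect matching made of edges of $\rset'$. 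The decisive observation is that $\bar M$ restricts to a perfect matching of $V(C)$ (every vertex of $V(C)$ lies in $V[\rset]=R\dcup T$, so its $\bar M$-partner does too), and combined with $|R_l|=|T_l|$ this forces the number of $R_l$-vertices matched by $\bar M$ outside $C_l$ to equal the number of $T_l$-vertices so matched; call this common value $d_l$. Orienting each cross-edge from its $T$-endpoint to its $R$-endpoint then turns the super-multigraph $Q$ on $\{C_1,\dots,C_p\}$ with edge set $M_C$ into a weakly connected directed multigraph in which in-degree equals out-degree equals $d_l$ at every super-vertex; a standard condensation-DAG argument implies $Q$ is strongly connected.

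If $u$ and $w$ both lie in the same $C_j$, I combine a perfect matching of $C_j-u-w$ (obtained from Theorem~\ref{thm:m-c-graphs-equivalence} applied to $C_j$) with arbitrary perfect matchings of the remaining $C_l$'s to obtain the desired matching, entirely within $\rset$. If instead $u\in C_j$ and $w\in C_k$ with $j\ne k$, strong connectivity of $Q$ yields a simple directed $C_j$-to-$C_k$ path $C_j=C_{l_0}\to C_{l_1}\to\dots\to C_{l_r}=C_k$ with cross-edges $e_1,\dots,e_r\in M_C$; by the chosen $T\to R$ orientation each $e_i$ has its $T$-endpoint in $C_{l_{i-1}}$ and its $R$-endpoint in $C_{l_i}$, so the two vertices this path consumes in every intermediate $C_{l_i}$ lie on opposite sides of its bipartition, and likewise the endpoint consumed in $C_j$ (resp.\ $C_k$) is opposite to $u$ (resp.\ $w$). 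Applying Theorem~\ref{thm:m-c-graphs-equivalence} inside every visited $C_{l_i}$, and using arbitrary perfect matchings of the untouched $C_l$'s together with the cross-edges $e_1,\dots,e_r$ themselves, then assembles a perfect matching of $V(C)\setminus\{u,w\}$ in $\rset'$. The main obstacle is precisely this cross-component case: adding cross-edges one at a time does not preserve matching-coveredness (a single cross-edge between two $K_2$-components produces a $P_4$, which is not matching-covered), and what saves the invariant is that $M$ is carved out of the \emph{global} perfect matching $\bar M$ of $G$, which is exactly what forces the Eulerian balance of $Q$ and thereby the alternating directed path used in the construction.
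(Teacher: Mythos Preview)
Your proof is correct and follows a route genuinely different from the paper's. The paper proves matching-coveredness directly, by showing every edge $e$ of a component $S$ of $G[\rset]$ lies in a perfect matching of $S$: for old edges this is the inductive hypothesis, while for a new cross-edge $e\in M$ the paper picks a cycle in $G[\rset]$ through $e$ with as few $\bar M$-edges as possible, then assembles the desired matching from the $\bar M$-edges on that cycle together with perfect matchings of the traversed components minus their two incident cycle-vertices (via Theorem~\ref{thm:m-c-graphs-equivalence}) and of the untouched components. You instead verify the characterization of Theorem~\ref{thm:m-c-graphs-equivalence} for the whole new component $C$, and your key structural insight---that the super-multigraph $Q$ on $\{C_1,\dots,C_p\}$ is Eulerian because $\bar M$ restricts to a perfect matching of $V(C)$ and each $C_l$ is balanced---simultaneously explains \emph{why} the cycle the paper needs must exist (a point the paper leaves implicit) and supplies the directed path for your cross-component case. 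Your argument is arguably more transparent about what makes the rule ``only add cross-edges carved from a global perfect matching'' work; the paper's is slightly shorter once one grants the cycle's existence.

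One small exposition point: your parenthetical justification that $\bar M$ restricts to a perfect matching of $V(C)$ (``every vertex of $V(C)$ lies in $V[\rset]=R\dcup T$, so its $\bar M$-partner does too'') only places the partner somewhere in the graph, not in $V(C)$. The actual reason---which you clearly have in mind and use downstream---is that if the $\bar M$-edge at $v\in V(C)$ crosses $\rset$-components then it belongs to $M\subseteq\rset'$, forcing the partner into the same $\rset'$-component $C$; and if it stays inside a $\rset$-component, that component is one of the $C_l$'s and hence contained in $V(C)$.
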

\begin{proof}
As $\rset$ is assumed to be non-empty, $\rset$ contains at least one perfect
matching of $G$. Thus, $G[\rset]$ spans $G$, i.e.\ $G[\rset]=(R\dcup
T,X)$, and does not have isolated nodes. 
Let $S \subseteq R \dcup T$ be the nodes of some connected
component of $G[\rset]$. It suffices to prove that the corresponding connected component $(S, \rset[S])$
with 
$\rset[S]:=\{e\in \rset\mid e=\{s_1,s_2\}, \textrm{ for some }
s_1,s_2\in S\}
$
is matching-covered. \\
For $|S| = 2$, $\rset[S]$ contains exactly one edge that belongs to a
perfect matching in $\rset$. Thus, the claim is proved. \\
Next, assume that $|S| > 2$. To prove that  $(S,\rset[S])$ is
matching-covered, an induction on the
number of iterations in Algorithm~\ref{alg:randomized_rounding} is performed.
In the first iteration, a perfect matching is added to $\rset$ in Step
6, thus the claim holds in that case. \\
Now, let $\rset^\prime\subseteq \rset$ be the set of edges selected
until the beginning of the iteration preceding the current
iteration. By the induction 
hypothesis, it can be assumed that every connected component of
$(R \dcup T,\rset^\prime)$ is matching-covered. 
To prove the claim, it must be shown that every edge $e\in \rset[S]$ is
contained in some perfect matching of $S$. If $e\in \rset^\prime$, the claim
holds by the inductive hypothesis, and due to $\rset^\prime \subseteq \rset$.
If $e\not\in \rset^\prime$, then $e\in \bar M$, where $\bar M$ is the matching selected in Step 5 in the current iteration. 
This means that $e$ connects nodes from two distinct connected 
components of $(R \dcup T, \rset^\prime)$ since $e$ was added in Step 6.\\
Now, pick any cycle $C\subseteq \rset$ in $G[\rset]$ containing $e$ with a
minimum number of edges from $\bar M$. Let $D_1, \cdots, D_l \subseteq
R \dcup T$ be the components in $(R \dcup T,\rset^\prime)$ that have
edges in $C$. From minimality of $|C\cap \bar M|$ it follows that $C$ is a simple
cycle (i.e.\ each node is contained in at most two of its edges) and
that each component $D_j$, $j=1,\cdots,l$ contains exactly two nodes
incident to the cycle. This cycle can now be used to demonstrate the  
existence of the desired perfect matching $M^\prime$ as follows. First,  
include in $M^\prime$ all edges in $C\cap \bar M$.   
Then, in every component $D_j$ for $j=1,\cdots, l$ pick a matching
covering all nodes, except the two nodes incident to the cycle
$C$. Due to Theorem~\ref{thm:m-c-graphs-equivalence}, such a matching exists
since each component $D_j$ is matching-covered. The matching
chosen so far covers exactly the nodes in $D_1\cup \dots \cup D_l$.
Finally, pick any matching covering all other components of $(R\dcup T,\rset^\prime)$ 
that are not incident to $C$. This matching exists, since
again, $(R\dcup T,\rset^\prime)$ is matching-covered. The  
result is a perfect matching in $G[\rset]$ containing $e$, which
completes the proof.
\end{proof}
  
Lemma~\ref{lem:randomized-rounding-algo-feasibility} guarantees that at every stage in the algorithm, the
only edges not yet covered by the current solution $\rset$ are the isolated edges of
$G[\rset]$. Now, since at an iteration where an uncovered edge $f$ is
chosen in Step~4, the set $M$ must contain two edges distinct from $f$, that
are incident to the endpoints of $f$, the edge $f$ is guaranteed to be covered
in the end of this iteration. This immediately implies that the
algorithm terminates with a feasible solution after at most $|E|$ iterations. 
It hence remains to bound the expected cost of 
the solution returned by Algorithm~\ref{alg:randomized_rounding}.

\begin{lemma}
\label{lem:cost-analysis-rand-rounding}
  The expected cost of the solution returned by Algorithm~\ref{alg:randomized_rounding} is 
  $O(\log n) \cdot \opt$, where $\opt$ is the optimal solution value for the \rap\ instance.
\end{lemma}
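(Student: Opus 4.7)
The plan is to bound $\mathbb{E}[c(\rset)]$ against the LP optimum $c^\top y$ of the solution $(x,y)$ computed in Step~1, which by LP relaxation satisfies $c^\top y \leq \opt$. The first ingredient is a per-iteration cost bound: in any iteration $t$, since $x^{-f_t} \leq y$ componentwise, the matching $\bar M_t$ sampled from the decomposition of $x^{-f_t}$ satisfies
\[
\mathbb{E}[c(\bar M_t)] \;=\; \sum_i \lambda^{-f_t}_i \, c(M^{-f_t}_i) \;=\; c^\top x^{-f_t} \;\leq\; c^\top y \;\leq\; \opt.
\]
Since the edges added to $\rset$ in iteration $t$ form a subset of $\bar M_t$, their expected cost is bounded by $c^\top y$.

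Because the algorithm may require $\Omega(n)$ iterations in the worst case, a direct summation yields only a trivial linear bound. To produce the logarithmic factor, I would apply a discharging argument of \emph{union-by-size} type. When an edge $e=\{u,w\}$ is added in iteration $t$, it joins two distinct components $C_u \ni u$ and $C_w \ni w$ of $(R \dcup T,\rset_{t-1})$; assuming $|C_u|\leq |C_w|$ and breaking ties arbitrarily, assign the full cost $c_e$ as charge to the endpoint $u$ in the smaller component. Each added edge is charged once, so the total charge equals $c(\rset)$. The crucial deterministic property is that every charge to a vertex $v$ at least doubles $|C_v|$, since the new component containing $v$ has size at least $|C_u|+|C_w|\geq 2|C_u|$. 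Starting from singleton components, each vertex is therefore charged at most $\lceil \log_2 n \rceil$ times across the entire run of the algorithm.

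The per-iteration expected charge at $v$ is controlled by the LP:
\[
\mathbb{E}\bigl[\mathrm{charge}_t(v) \,\big|\, \mathcal{F}_{t-1}\bigr] \;\leq\; \mathbb{E}\bigl[c_{\bar M_t(v)} \,\big|\, \mathcal{F}_{t-1}\bigr] \;=\; \sum_u c_{\{v,u\}}\,x^{-f_t}_{\{v,u\}} \;\leq\; \sum_u c_{\{v,u\}}\,y_{\{v,u\}}.
\]
Writing $(c^\top y)_v := \sum_u c_{\{v,u\}}y_{\{v,u\}}$, combining the deterministic cap of $\lceil\log_2 n\rceil$ charges per vertex with this per-iteration conditional bound yields $\mathbb{E}[\text{total charge at }v] \leq O(\log n)\cdot (c^\top y)_v$. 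Summing over vertices (each edge of $G$ contributes $(c^\top y)$ once to each of its two endpoints, so $\sum_v (c^\top y)_v = 2\,c^\top y$) gives
\[
\mathbb{E}[c(\rset)] \;\leq\; O(\log n)\cdot c^\top y \;\leq\; O(\log n)\cdot \opt.
\]

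The main obstacle is the last combination step: the iterations in which $v$ is actually charged form a random set intertwined with the distribution of $\bar M_t$, so one cannot merely multiply the per-iteration bound by $\log n$ without additional care. I would close this gap via a stopping-time argument in the natural filtration of the algorithm, enumerating the charging events for $v$ in order, applying the tower property to the per-iteration conditional bound at each such event, and invoking the deterministic cap to terminate the summation after $O(\log n)$ terms.
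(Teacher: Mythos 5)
Your proposal is correct and follows essentially the same route as the paper's proof: the same union-by-size charging scheme that assigns each added edge to its endpoint in the smaller connected component, the same per-charge bound via $x^{-f}_e \le y_e$ against the fractional load $\sum_{e\in\delta(v)}c_e y_e$, and the same component-doubling argument capping the number of charges per vertex at $\log n$. The only difference is that you explicitly flag, and propose to close via a stopping-time/tower-property argument, the subtlety of combining the per-event conditional bound with the random set of charging events --- a point the paper's proof passes over informally.
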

\begin{proof}
The feasibility of the obtained solution and the bound on the running 
time are guaranteed by Lemma~\ref{lem:randomized-rounding-algo-feasibility}.
\\
For a set $Q\subseteq E$ of edges, let $c_{\textrm{LP}}(Q)$ denote the contribution of the edges in $Q$ 
to the \lp\ cost, i.e.\ $c_{\textrm{LP}}(Q) = \sum_{e\in Q} c_e y_e$. For a
node $v\in R\dcup T$,  $\delta(v) \subseteq E$ represents the set of edges incident to $v$.
To bound the approximation guarantee, the expectation of the ratio
$c(\rset)/c_{\textrm{LP}}(E)$ is bounded accordingly. 
Since the \lp\ is a relaxation of the problem,  $c_{\textrm{LP}}(E)
\leq \opt$ holds. Thus, this ratio is a valid
bound on the approximation guarantee.
\\
To obtain the desired bound, a scheme charging every selected edge in any
stage of the algorithm to one of its endpoints is developed. It is then shown that the expected cost charged to any
node $v\in V$ is bounded by $O(\log n)$ times the \emph{fractional cost} $c_{\textrm{LP}}(\delta(v))$ associated 
with the node. This implies that the expected cost of all edges added by the algorithm is at most
$$
O(\log n) \cdot \sum_{v\in R\dcup T} c_{\textrm{LP}}(\delta(v)) \leq
O(\log n) \cdot \opt,
$$
where the last inequality follows from linearity of expectation, 
$c_{\textrm{LP}}(E) \leq \opt$ and $c_{\textrm{LP}}(E) = \frac{1}{2}\sum_{v\in R\dcup T} c_{\textrm{LP}}(\delta(v))$.
\\
Next, it is described how the costs of the selected edges are charged to the nodes of the graph.
Let $\bar\rset\subseteq E$ be the set of edges returned by the algorithm.  
Formally, with each node $v\in R\dcup T$, a collection of edges $\bar X_v
\subseteq \bar\rset$ is associated such that $\bigcup_{v\in R\dcup T}
\bar X_v = \bar\rset$ holds and such that
$c(\bar X_v)$ is bounded by $O(\log n)$ times
the fractional load at $v$ in expectation. 
\\
The sets $\bar X_v$ are constructed as follows. In the beginning $\bar
X_v = \emptyset$ for all $v\in R\dcup T$.
Let $\rset$ be the set of edges selected so far by the algorithm
and let $M\subseteq E\setminus \rset$ be the set of edges selected to
be added to $\rset$ in Step 6 of the current iteration. At this stage,
the sets $\bar X_v$ might already contain some edges. 
Depending on the selection of $M$, the sets $\bar X_v$ now change as follows.
Consider an edge $f = \{r,t\} \in M$. 
Recall that the algorithm only includes edges in the solution if they connect
different connected components in $(R\dcup T,\rset)$. Thus, $r$ and $t$ lie in different connected components.
Let $D_r$ and $D_t$ be the node sets of the connected components to which $r$ and $t$
belong, respectively, and assume without loss of generality that $|D_r| \leq |D_t|$.
Then, $f$ is charged to $r$, i.e.\ $f$ is included in $\bar X_r$. In other words, an edge 
added by the algorithm in any iteration is charged to the node contained in the \emph{smaller
	connected component}, with ties broken arbitrarily.
\\
It is obvious that the latter scheme charges all edges in $\bar\rset$ to some nodes, such 
that $\bigcup_{v\in V} \bar X_v = \bar\rset$ holds in the end of the
last iteration. It remains to analyze the quantity $c(\bar X_v)$
for a single node $v\in R\dcup T$. The bound on $c(\bar\rset)$ will then follow from linearity of expectation
and the previous discussion. To arrive at the desired bound it suffices to make the following 
two observations. 
\\
First, at any time, if an edge is charged to $v$, its expected 
cost is at most $c_{\textrm{LP}}(\delta(v))$. Indeed, recap that the edges in $M$ come from 
a perfect matching chosen at random from the decomposition of some fractional perfect matching $x^{-f}$
in the graph (this $x^{-f}$ corresponds to the edge $f$ chosen in Step
4 in the current iteration). 
Let this decomposition be
$$
x^{-f} = \sum_{i\in [k]} \lambda^{-f}_i \chi^{M^{-f}_i}.
$$
The distribution over the integral matchings defining 
$x^{-f}$ induces a distribution over the edges incident to $v$: each edge $e\in \delta(v)$
is contained in the perfect matching with a probability $p_{e} \in [0,1]$ given by
$$
p_{e} = \sum_{i\in [k] \,:\, e\in M_i^{-f}} \lambda^{-f}_i =
x^{-f}_{e}.
$$
Since $x^{-f}_{e} \leq y_{e}$ for all $e\in E$, the expected cost of the 
edge charged to $v$ is at most $\sum_{e\in \delta(v)} c_{e}
x^{-f}_{e} \leq c_{\textrm{LP}}(\delta(v))$, 
proving the first property.
\\
The second observation concerns the number of times the node $v$ is charged in the
course of the algorithm. Consider any iteration in which some edge was charged to
$v$, and let $D_v$ be the nodes in the component of $v$ in the beginning of the
iteration. Since an edge is only charged to a node of the smaller component,
and since charged edges always merge connected components, the size of the connected
component containing $v$ in the end of the iteration is \emph{at least} $2|D_v|$.
Since the graph only contains $n$ nodes, this doubling can only happen at most
$\log n$ times.
\\
Therefore, $c(\bar X_v)$ is, in expectation, indeed at most
$O(\log n)\cdot c_{LP}(\delta(v))$,
which concludes the proof.
\end{proof}

The proof of Lemma~\ref{lem:cost-analysis-rand-rounding} uses a discharging argument that
assigns the cost of selected to nodes in the graph, in a way that the total 
assigned cost to any node can be bounded by the fractional contribution of edges
incident to the node. While it seems difficult to perform this type of argument
a posteriori for a computed feasible solution, a cost assignment 
scheme is designed, that is constructed by following the progress of the algorithm.

Lemma~\ref{lem:randomized-rounding-algo-feasibility} and
Lemma~\ref{lem:cost-analysis-rand-rounding} already imply the correctness of
Theorem~\ref{thm:RAP-admits-log-n-approximation-algorithm} for the uniform case. 
The generalization to the non-uniform case is explained in the following proof.
\begin{proof}[Proof of Theorem~\ref{thm:RAP-admits-log-n-approximation-algorithm}]
It remains to show how to treat the case $\uset \neq E$. 
For this, a transformation to reduce such an instance to a uniform instance by losing only a factor of 
$2$ in the approximation guarantee is provided. 
\\
The transformation first adds to the graph one parallel edge $\bar e$ for 
every $e\not\in \uset$. Let $G^\prime$ be the obtained graph with
edge set $E^\prime$. The new set of vulnerable edges is set to $\uset^\prime = E^\prime$. Solutions 
for the two \rap\ instances are in the following correspondence: A solution 
$\rset \subseteq E$ to the original instance on $G$ can be transformed
to a solution for the new instance on $G^\prime$ of at most double the cost by taking 
$\rset^\prime = \rset\cup \{\bar e \,\mid\, e\in \rset\setminus \uset\}$. Conversely,
a solution $\rset^\prime$ for the new instance can be transformed to a solution
for the original instance with the same, or better cost, by choosing
$\rset = \rset^\prime \cap E$. 
\\
Let $\opt^\prime$ denote the optimal solution value of the transformed uniform 
instance. Our $O(\log n)$-approximation algorithm for the general case proceeds
by first transforming the instance to a uniform instance of \rap, as above,
then invoking Algorithm~\ref{alg:randomized_rounding} to obtain the set $\rset^\prime$,
having expected cost at most $O(\log n) \opt^\prime =
O(\log n) \opt$ 
and then returning $\rset= \rset^\prime \cap E$.
\end{proof}

The section is concluded by arguing why simpler randomized 
rounding techniques, for instance, ones that lead to logarithmic 
approximation to many covering problems, are unlikely to lead to 
a similar result for \rap. The reason for this is that there does
not seem to be a simple way to obtain a compact set cover-type 
representation of \rap\ without losing a super-logarithmic factor
in the approximation guarantee. One natural attempt could be to
consider every vulnerable edge $f\in\uset$ as an \emph{element}
that needs to be covered, and every possible perfect matching 
$M\subseteq E$ that does not contain $f$, a \emph{covering set}
that covers the edge $f$ (and all other edges in $\uset\setminus M$).
The cost of the covering set is simply the sum of the costs of edges
in the corresponding perfect matching.
Unfortunately, it is easy to come up with examples in which
the optimal solution value in the latter set covering model has 
cost $\Omega(n) \opt$. Such instances can be constructed, 
for example by choosing an instance, such that any feasible 
solution must have some nodes with very high degree, while an
optimal solution has cost $O(n)$.

\section{Hardness and approximability of card-RAP}
\label{sec:card-rap-hardness-and-approximation}

In this section we restrict our attention to the special case
\urap, i.e.\ to the problem of finding a robust assignment of minimum cardinality.
Subsection~\ref{sec:cardrap-is-hard-to-approximate} deals with the proof of Theorem~\ref{thm:card-rap-is-hard-to-approximate}
stating that \urap\ is hard to approximate up to some constant $\delta >1$.
This implies thtat \urap\ does not admit a {\small PTAS} unless $\p = \np$.
An $O(1)$-approximation algorithm for \urap\ is then presented
in Subsection~\ref{sec:algorithm_unweighted_RAP}, thus proving
Theorem~\ref{thm:approximation_unweighted_RAP}.

\subsection{Hardness of Approximation for card-RAP}
\label{sec:cardrap-is-hard-to-approximate}

As is the case for general uniform \rap, the hardness proof we present here
for \urap\ is also based on a reduction from
the Set Cover Problem (\setcover). 

By Lemma~\ref{lem:set-cover-rap-feasibility-uniformversion},
any given \setcover\ instance $\inst:=([k],\mS)$ can be equivalently
transformed, in time polynomial in the input size of $\inst$, into a uniform \rap\ instance
$\bar \inst:=(\bar G, \bar E, \bar c)$. 
Note that $\bar G:=(\bar R\dcup \bar T,\bar E)$ is the graph constructed from $\inst$ by applying the
transformation steps {\small\bf (T1)--(T4)} as stated on page
\pageref{tsteps-one-to-four} and step {\small\bf (T5)} from page
\pageref{tstep-five}, whereas the cost vector $\bar c\in\R^{\bar E}_{\geq
  0}$ is chosen with $\bar c_e=1$, if $e$ belongs to $\bar E_4=E_4$, and $\bar c_e=0$, otherwise.
Recap further that the specific choice of $\bar c$ is
crucial for deriving an equivalent reformulation of $\inst$ in terms
of a uniform \rap. 
Lemma~\ref{lem:set-cover-rap-feasibility-uniformversion}~(b) indeed
shows  that all edges from $\bar G$ except those from $E_4$ can be assumed to be contained in an optimal solution
to $\bar\inst$ as they have zero cost. 
This property does not hold anymore when $\bar\inst$ is replaced with its unweighted version.

In order to arrive at a \urap\ instance being equivalent to the
given \setcover\ instance $\inst$, a further transformation step to be
performed on $\bar G$ is needed.
\begin{itemize}
\label{tstep-six}
\item[{\small\bf (T6)}] 
  For each edge $f_s = \{\bar u_s, u_s\} \in E_1$ subdivide the edge by adding two new nodes 
  $z^s_1$ and $z^s_2$.
  \begin{figure}[htb]
    \centering
    \begin{tikzpicture}[inner sep = 2pt]
      
      \node (0) at (0, 0) [circle, draw] {$\bar{u}_s$};
      \node (1) at (4.5, 0) [circle, draw] {$u_s$};
      \node (2) at (1.5, 0) [circle, draw] {$z^s_1$};
      \node (3) at (3, 0) [circle, draw] {$z^s_2$};
      
      \draw[-, thick] (0) -- (2) -- (3) -- (1);
      
      \node at (-1,0) {$\mapsto$};
      \node (6) at (-3,0) [circle,draw] {$\bar{u}_s$};
      \node (7) at (-2,0) [circle,draw] {$u_s$};
      \draw[-, thick] (6) -- (7);
    \end{tikzpicture}
    \caption{Replacing an edge $f_s = \{\bar u_s, u_s\} \in E_1$ by a path of
      length three.}
    \label{fig:subdividing-the-edges-for-the-card-rap-hardness-proofs}
  \end{figure}
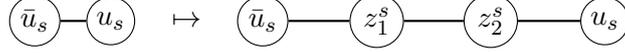  
\end{itemize} 
Transformation step {\small\bf (T6)} is illustrated in
Figure~\ref{fig:subdividing-the-edges-for-the-card-rap-hardness-proofs}.

In the following, let $\tilde{G}=(\tilde{R} \dcup \tilde T,\tilde E)$,
with $\tilde E=\bigcup_{j=1}^6\tilde E_j$, denote the graph obtained by applying transformation
step {\small\bf (T6)} to $\bar G$, and let $\tilde{\inst}$ be the
corresponding uniform \urap\ instance defined on $\tilde{G}$, i.e.\
$\tilde\inst=(\tilde{G},\tilde{E},\mathbf{1}_{\tilde E})$ where
$\mathbf{1}_E$ is the all one vector in $\R^{\tilde E}$. 
Observe that 
\begin{itemize}
\item $\tilde{R}=\bar R\cup\{z_2^s\mid s\in [k]\}$,  $\tilde{T}=\bar
  T\cup\{z_1^s\mid s\in [k]\}$, 
\item $\tilde E_1:=\big\{\, \{\bar u_s, z_1^s\},
  \{z_1^s,z_2^s\},\{z^s_2,u_s\}\mid s\in[k]\big\}$,
\item $\tilde E_3 = \bar E_3$,  $\tilde E_5=\bar E_5$,
\item
  $\tilde E_2 = \bar E_2 = E_2$,  $\tilde E_4 = \bar E_4 = E_4$, and $\tilde E_6 = \bar E_6 = E_6$
\end{itemize}
hold. In order to formulate an analogue of
Lemma~\ref{lem:set-cover-rap-feasibility-version-1} the following notion of \emph{efficiency} for feasible
solutions to $\tilde{\inst}$ is useful.

Concretely, a feasible solution $\tilde\rset \subseteq \tilde E$ to
$\tilde{\inst}$ is called \emph{efficient}, if $|\tilde\rset \cap \tilde E_2| = |\tilde \rset \cap \tilde E_6| = k$. 
To justify this definition, consider any non-efficient feasible
solution $\rset$. On the one hand, feasibility of $\rset$ implies that
$|\rset \cap \tilde E_2| \geq k$, as every node $u_i$, $i\in [k]$,
must have at least one incident edge in $\tilde E_2$, that is
contained in $\rset$. Similarly, $|\rset \cap \tilde E_6|\geq k$ must hold. On the other
hand, it is next argued that $\rset$ can be transformed in polynomial
time into a new feasible solution $\tilde \rset$ that is efficient. 

By a similar argument as used in the proof of Lemma~\ref{lem:set-cover-rap-feasibility-version-1},
it can be shown that $\mathcal{C}_{\rset} = \{S\in \mS \,\mid\,
\{\bar{v}_S,\tilde{v}_S\}\in \rset\}$ is a feasible cover to the given
\setcover\ instance $\inst$. 
Now, for each $i\in [k]$, choose an arbitrary set $S_{j_i} \in
\mathcal{C}_{\rset}$ with $i\in S_{j_i}$. Then,
\[ \tilde \rset := \big(\,\rset \setminus (\tilde E_2 \cup \tilde E_6)\big)\; \cup\; 
\big\{\,\{u_i, v_{S_{j_i}}\}, \{\bar u_i, w_{S_{j_i}}\} \mid  i\in [k]\big\} \]
is a feasible to $\tilde\inst$ with $|\tilde \rset| \leq |\rset|$. 
In addition, $\tilde X$ is efficient and can be constructed in polynomial time.

\begin{lemma}
\label{lem:set-cover-rap-feasibility-version-2}
Let $\inst=([k],\mS)$ be any \setcover\ instance, and let $\tilde
\inst$ be the corresponding uniform \urap\ instance on
the graph $\bar{G}:=(\tilde R\dcup \tilde T,\tilde E)$ that is obtained by applying the
transformation steps {\small\bf (T1)--(T6)} on $\inst$.
Moreover, let $q:= | \tilde E_1 \cup \tilde E_3 \cup \tilde E_5 |$.
\begin{itemize}
\item[(a)]
  $\tilde{\inst}$ can be constructed from $\inst$, in polynomial time.
\item[(b)]
  Let $\tilde \rset \subseteq \tilde E$ be an efficient feasible
  solution to $\tilde\inst$.
  Then there exists a feasible cover for $\inst$, that is of 
  size $|\tilde \rset| - q - 2k$. 
 \item[(c)] 
   If $\mathcal{C}$ is a feasible cover for $\inst$, then there exists
   a feasible solution $\tilde X$ to $\tilde\inst$ with  $|\tilde
   X|=|\mathcal{C}| + q + 2k$. 
\end{itemize}
\end{lemma}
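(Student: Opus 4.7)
The plan has three parts mirroring the statement. Part~(a) will follow immediately by combining Lemma~\ref{lem:set-cover-rap-feasibility-uniformversion}(a) with the fact that step {\small\bf (T6)} only subdivides $k$ edges by two new nodes each.

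For part~(b), my first step will be the structural observation that every feasible $\tilde X$ must contain all of $\tilde E_1\cup\tilde E_3\cup\tilde E_5$. Each auxiliary node introduced by {\small\bf (T5)} or {\small\bf (T6)} has degree exactly $2$ in $\tilde G$, so for each of its two incident edges the failure scenario corresponding to the \emph{other} incident edge leaves no way to match the auxiliary node unless the first edge lies in $\tilde X$. This forces all three edges of every subdivided $\tilde E_1$-path and all six edges of every subdivided $\tilde E_3$- or $\tilde E_5$-cycle into $\tilde X$. Combined with efficiency, this gives $|\tilde X|=q+2k+|\tilde X\cap\tilde E_4|$, so setting $\mathcal{C}_{\tilde X}:=\{S\in\mS\mid\{\bar v_S,\tilde v_S\}\in\tilde X\}$ yields the correct cardinality $|\tilde X|-q-2k$. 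The remaining task is to show that $\mathcal{C}_{\tilde X}$ covers $[k]$. For each $s\in[k]$ I will consider the failure scenario $f=\{\bar u_s,z_1^s\}$ and trace any perfect matching $M\subseteq\tilde X\setminus\{f\}$: node $z_1^s$ must be matched with $z_2^s$, which releases $u_s$ and forces it to be matched through some $\{u_s,v_S\}\in\tilde E_2\cap M$ with $s\in S$; then $v_S$ is saturated, so the auxiliary nodes of the subdivided $E_3$-edge $\{v_S,\bar v_S\}$ are forced to match internally, and $\bar v_S$ can only be matched via $\tilde E_4$, giving $\{\bar v_S,\tilde v_S\}\in\tilde X$ and hence $S\in\mathcal{C}_{\tilde X}$.

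For part~(c), given a feasible cover $\mathcal{C}$, I will fix for each $s\in[k]$ some $S_s\in\mathcal{C}$ with $s\in S_s$ and define
\[ \tilde X := (\tilde E_1\cup\tilde E_3\cup\tilde E_5)\cup\bigl\{\{u_s,v_{S_s}\},\{\bar u_s,w_{S_s}\}\mid s\in[k]\bigr\}\cup\bigl\{\{\bar v_S,\tilde v_S\}\mid S\in\mathcal{C}\bigr\}. \]
Counting yields $|\tilde X|=q+2k+|\mathcal{C}|$ immediately. The core work is the feasibility check: I will start from the base perfect matching $M_0$ that matches every subdivision in its ``straight'' mode---for each $s$, $\{\bar u_s,z_1^s\},\{z_2^s,u_s\}$; for each 6-cycle of $\tilde E_3\cup\tilde E_5$, $\{v,x_1^{(e)}\},\{x_2^{(e)},w\},\{y_1^{(e)},y_2^{(e)}\}$---and then, depending on where a failed edge $f$ lies, describe how to modify $M_0$. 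If $f\notin M_0$, $M_0$ itself works. If $f$ lies in an $\tilde E_3$- or $\tilde E_5$-subdivision, I will swap locally within the corresponding 6-cycle to an alternative mode that avoids $f$. If $f$ lies in an $\tilde E_1$-subdivision for some $s$, I will swap along a length-$12$ alternating cycle through $\bar u_s,z_1^s,z_2^s,u_s,v_{S_s},\bar v_{S_s},\tilde v_{S_s},w_{S_s}$ together with four inner auxiliary nodes from the subdivided $E_3$- and $E_5$-cycles of $S_s$; this cycle lies entirely in $\tilde X$ because $\{u_s,v_{S_s}\},\{\bar u_s,w_{S_s}\}\in\tilde X$ by construction and $\{\bar v_{S_s},\tilde v_{S_s}\}\in\tilde X$ since $S_s\in\mathcal{C}$.

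The main technical hurdle I expect is the case analysis in part~(c), where the subdivisions introduce multiple matching ``modes'' per subdivided edge; the key leverage is that the feasibility of $\mathcal{C}$ ensures the $\tilde E_4$-edge needed to close each length-$12$ alternating cycle is present in $\tilde X$.
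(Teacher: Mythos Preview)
Your proposal is correct and follows essentially the same route as the paper: the same degree-two argument forces $\tilde E_1\cup\tilde E_3\cup\tilde E_5\subseteq\tilde X$ in part~(b), the same set $\mathcal{C}_{\tilde X}$ is shown to be a cover, and the same explicit $\tilde X$ is built from a cover in part~(c). The only difference is granularity: the paper states the feasibility of $\mathcal{C}_{\tilde X}$ in~(b) by reference to the argument preceding the lemma (itself parallel to Lemma~\ref{lem:set-cover-rap-feasibility-version-1}), and asserts feasibility of $\tilde X$ in~(c) without spelling out the case analysis, whereas you explicitly trace the forced matching in~(b) and carry out the local $6$-cycle swaps and the length-$12$ alternating-cycle swap in~(c). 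Your added detail is sound; in particular the $12$-cycle $\bar u_s,z_1^s,z_2^s,u_s,v_{S_s},x_1^{(e)},x_2^{(e)},\bar v_{S_s},\tilde v_{S_s},x_1^{(e')},x_2^{(e')},w_{S_s}$ is indeed $M_0$-alternating and contained in $\tilde X$.
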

\begin{proof}
	\text{}
  \begin{itemize}
  \item[(a)]
    This statement follows from Lemma~\ref{lem:set-cover-rap-feasibility-uniformversion}~(a) and the fact that
    performing transformation step {\small\bf (T6)} on $\bar{G}$ can also
    be done in polynomial time.
  \item[(b)]
    Let $\tilde \rset \subseteq \tilde{E}$ be an efficient, feasible
    solution.
    Since every edge from $\tilde G$ is vulnerable and since every
    edge in $\tilde E_1\cup \tilde E_3 \cup \tilde E_5$ is incident to
    some node of $\tilde G$ with degree two, it follows that
    $\tilde E_1 \cup \tilde E_3 \cup \tilde
    E_5 \subseteq \tilde \rset$  must hold.
    Furthermore, $|(\tilde E_2 \cup \tilde E_4) \cap \tilde \rset| =
    2k$ implies that $\tilde \rset$ contains $|\tilde \rset| - q - 2k$ edges from
    $\tilde E_4$.
    Thus, $\mathcal{C}_{\tilde\rset} = \big\{S\in \mS
    \,\mid\, \{\bar{v}_S,\tilde{v}_S\} \in \tilde \rset \big\}$ is a
    feasible cover for $\inst$ with $\mathcal{C}_{\tilde\rset}=|\tilde \rset| - q - 2k$.
\item[(c)]
  Let $\mathcal{C}$ be a feasible cover for $\inst$. For each
  $i\in[k]$, let further denote by $j_i$  the index of some set of $\mS$
  that covers $i$ in $\mathcal{C}$. Then, 
  \begin{align*}
    \tilde \rset := \tilde E_1 & \cup \tilde E_3 \cup \tilde E_5 \\
    & \cup \{\{u_i, v_{S_{j_i}}\}, \{\bar u_i, w_{S_{j_i}}\} \,\mid\,  i\in [k]\}
\cup \{\{\bar v_{S_j}, \tilde v_{S_j}\} \,\mid\, S_j\in
\mathcal{C}\} 
\end{align*}
is a feasible solution to $\tilde \inst$, that additionally satisfies $|\tilde \rset| =
|\mathcal{C}| +q+2k$.
\end{itemize}
\end{proof}

Now, Theorem~\ref{thm:card-rap-is-hard-to-approximate} can be proved with the help of Lemma~\ref{lem:set-cover-rap-feasibility-version-2}.

\begin{proof}[Proof of Theorem~\ref{thm:card-rap-is-hard-to-approximate}]
  It is well known that the \emph{Vertex Cover Problem in sub-cubic
    Graphs} (\vcthree) on an input graph $H=(V,E_H)$ with $|E_H|=k$ can be equivalently restated as an instance of the Set Cover Problem, where the ground set is $E_H\cong[k]$ and each $S\in\mS$ corresponds to a cut set $\delta(v)$, for some $v\in V$.  Note that $|S|\leq 3$, for all $S\in\mS$, and $|\{S\in \mS \,\mid\, s\in S\}| = 2$ for all $s\in [k]$.
  Alimonti and Kann proved in~\cite{alimonti_viggo_97} that there exists a constant $\delta >1$ 
  such that \vcthree\ does not admit a polynomial $\delta$-approximation algorithm unless \p$=$\np.
  \\
  Now, let a \vcthree\ instance be given in terms of a \setcover\ instance
  $([k],\mS)$.  Let $\tilde G$ be the graph obtained form  $([k], \mS)$ by applying
  the transformation steps {\small\bf  (T1)--(T6)}, let $\tilde{\inst}$ be
  the uniform \urap\ instance induced by $\tilde G$.
  \\
  Observe that $\tilde G$ has $O(k)$ edges and that any feasible solution to 
  \vcthree\ contains at least $k/3 = \Omega(k)$ sets.
  From Lemma~\ref{lem:set-cover-rap-feasibility-version-2}, it follows
  that for any constant $\alpha > 1$ 
  there exists a constant $\beta = \beta(\alpha) >1$ such that any 
  polynomial $\beta$-approximation algorithm for the uniform \urap\
  can be used to construct a polynomial $\alpha$-approximation algorithm for \vcthree. Since 
  no $\delta$-approximation exists for \vcthree, unless \p$=$\np, there is also
  no $\alpha(\delta)$-approximation for the uniform \urap\, unless
  \p$=$\np. 
\end{proof}

\subsection{$O(1)$-Approximation for card-RAP} 
\label{sec:algorithm_unweighted_RAP}  

In this subsection, a constant factor approximation algorithm for
\urap\ is developed.
The algorithm makes use of an ear decomposition of the underlying bipartite graph. 
At the end of this subsection it is also shown that the algorithm presented can be easily extended to unbalanced graphs.\\
Note that similar approaches using ear decompositions have successfully
been used to approximate various combinatorial optimization problems.
Among others, these works include algorithms for the minimum edge 
connected subgraph problem by Cheriyan, Seb{\H{o}} and
Szegeti~\cite{cheriyan_et_al_01} as well as the traveling salesman problem
by Seb{\H{o}} and Vygen~\cite{sebHo2014shorter}.

\begin{definition}[Ear Decomposition of Bipartite Graphs (e.g.,
  p.~123 in \cite{lovasz_plummer_book_86})]\label{def:bipartite-ear-decomposition} 
  \text{}\\
  Let $H$ be a bipartite graph, and let $H^\prime$ be a subgraph of $H$.  
  An \emph{odd ear of $H$ with respect to $H^\prime$} is a path $P$ in $H$ with an odd number of edges
  and such that $P$ and $H^\prime$ have exactly two nodes in common. Those two nodes form the end points 
  of $P$, and belong to different parts of the bipartition.\\
  A \emph{bipartite ear decomposition of a bipartite graph} $H$ is a sequence $P_0, P_1,\dots, P_q$ of paths in $H$, such that:
	\begin{itemize}
  \item [(i)] $P_0 = (\{v_1,v_2\}, \{\{v_1,v_2\}\})$ is a graph composed of a single edge,
  \item [(ii)] $H = P_0 + \dots + P_q$,
  \item [(iii)] for every $j=1,\dots, q$, the path $P_j$ is an odd ear with respect to 
    $H_{j-1} := P_0+\dots+P_{j-1}$.
	\end{itemize}
\end{definition}

The next theorem provides a well-known connection between matching-covered bipartite
graphs and bipartite ear decompositions.

\begin{theorem}[{\cite[Thm. 4.1.6]{lovasz_plummer_book_86}}]
\label{thm:bipartite-m-c-iff-has-a-ear-decomposition}
A bipartite graph is matching-covered if and only if it has a bipartite ear decomposition.
\end{theorem}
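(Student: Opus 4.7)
The plan is to prove the two implications separately: the ``if'' direction by a routine induction on the number of ears, and the ``only if'' direction by a greedy extension argument whose heart is a symmetric-difference construction of alternating ears.

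For ``if'', I would induct on $q$. The base case $q=0$ is immediate since $P_0$ is a single edge and is trivially matching-covered. For the inductive step, let $H_{j-1}$ be matching-covered and let $P_j$ be an odd ear with $2k+1$ edges and endpoints $u\in R$, $v\in T$; label the vertices of $P_j$ along the path as $x_0=u, x_1, \ldots, x_{2k+1}=v$. The path $P_j$ supports two canonical matchings: $M_{\mathrm{cov}}:=\{x_0x_1, x_2x_3, \ldots, x_{2k}x_{2k+1}\}$ covers every node of $P_j$, while $M_{\mathrm{int}}:=\{x_1x_2, x_3x_4, \ldots, x_{2k-1}x_{2k}\}$ covers only the internal nodes. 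I split the edges of $H_j$ into three cases and exhibit, for each, a perfect matching of $H_j$ through that edge. For $e\in E[H_{j-1}]$, extend a perfect matching of $H_{j-1}$ through $e$ (inductive hypothesis) by $M_{\mathrm{int}}$. For $e\in M_{\mathrm{int}}$, combine any perfect matching of $H_{j-1}$ with $M_{\mathrm{int}}$. For $e\in M_{\mathrm{cov}}$, combine $M_{\mathrm{cov}}$ with a perfect matching of $H_{j-1}-u-v$, which exists by Theorem~4.1.1 applied to the matching-covered bipartite graph $H_{j-1}$, since $u,v$ lie in opposite parts of the bipartition.

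For ``only if'', I would build the decomposition greedily. Fix a perfect matching $M$ of $H$ at the outset, let $P_0$ be any $M$-edge, and maintain the invariant that $M\cap E[H_{j-1}]$ is a perfect matching of $V[H_{j-1}]$. Given $H_{j-1}\subsetneq H$, I search for an odd ear with respect to $H_{j-1}$. If some edge $e\in E[H]\setminus E[H_{j-1}]$ has both endpoints in $V[H_{j-1}]$, then by the invariant $e$ is a non-$M$-edge, and $e$ itself is an odd ear of length one. Otherwise $V[H_{j-1}]\neq V[H]$, and connectedness of $H$ yields an edge $e=\{u,v\}\in E[H]$ with $u\in V[H_{j-1}]$ and $v\notin V[H_{j-1}]$; this $e$ is also a non-$M$-edge by the invariant. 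Because $H$ is matching-covered, there is a perfect matching $M^\prime$ of $H$ containing $e$. The symmetric difference $M\triangle M^\prime$ decomposes into simple, $M$-alternating cycles, and $e$ lies in some such cycle $C$. Walking along $C$ from $u$ through $e$ into $V[H]\setminus V[H_{j-1}]$ and continuing until first re-entering $V[H_{j-1}]$ at a node $w$ produces a simple path $P\subseteq C$ whose internal nodes all lie outside $V[H_{j-1}]$. By the invariant the last edge of $P$ cannot lie in $M$ (an $M$-edge incident to $w\in V[H_{j-1}]$ would be in $E[H_{j-1}]$, contradicting that the preceding node on $P$ is outside $V[H_{j-1}]$), so the alternation pattern forces $P$ to start and end with non-$M$-edges and hence to have odd length. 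This is the desired odd ear.

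The main obstacle is coupling this construction with maintenance of the invariant. One has to check that $M\cap E[H_{j-1}]$ remains a perfect matching of $V[H_{j-1}]$ after $P$ is appended: since $P$ begins and ends with non-$M$-edges, its internal $M$-edges pair up exactly the newly introduced internal vertices of $V[H_{j-1}+P]\setminus V[H_{j-1}]$, so $M\cap E[H_{j-1}+P]$ is a perfect matching of $V[H_{j-1}+P]$; adding a length-one ear (necessarily a non-$M$-edge) introduces no new vertices and preserves the invariant trivially. Strict progress at every step together with finiteness of $H$ guarantees that the greedy procedure terminates with $H_{j-1}=H$, delivering the desired bipartite ear decomposition.
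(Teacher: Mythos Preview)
The paper does not supply its own proof of this theorem; it is quoted as Theorem~4.1.6 from Lov\'asz and Plummer and used as a black box in the analysis of the ear-decomposition algorithm. Your argument is correct and is essentially the classical proof one finds in that reference: the ``if'' direction by induction on the number of ears (with Theorem~4.1.1 supplying the perfect matching of $H_{j-1}-u-v$), and the ``only if'' direction by fixing a perfect matching $M$, maintaining the invariant that $M$ restricts to a perfect matching of each $H_{j-1}$, and extracting odd ears from $M$-alternating cycles in $M\triangle M'$. One minor point worth making explicit: your appeal to Theorem~4.1.1 formally requires $|V[H_{j-1}]|\geq 4$, so the case $j=1$ (where $H_0$ is a single edge and $H_0-u-v$ is the empty graph) should be handled separately, which is trivial.
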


Now consider an instance $\inst = (G,\uset)$ of \urap\ with uncertainty set $\uset \subseteq E[G]$.
As always, we assume that $\inst$ is feasible.
In order to apply the theory of matching-covered graphs the algorithm first removes from $G$ all so-called
\emph{dispensable edges}, i.e.\ all edges not appearing in any perfect matching of $G$. 
This way, a graph is obtained that is, by definition, matching-covered. 
Note that the removal can be implemented in polynomial time using any efficient algorithm for finding bipartite matchings.
We also assume that the removal of the dispensable edges results in a connected graph (otherwise we treat each connected component separately).
Moreover, recap that dispensable edges can be always removed from any feasible solution without breaking feasibility.
In the following, the new graph resulting from deleting all dispensible edges is also called $G$ to facilitate readability. 

Now let $G = P_0 + \dots + P_q$ be any ear decomposition of $G$
(which is by no means unique) with some arbitrary chosen initial edge $P_0$. 
An ear $P_j$ of the decomposition is called \emph{trivial} if it is not $P_0$ and if it consists of a single edge, only.
The next lemma shows that a feasible solution to $\inst$ can be obtained from the ear decomposition of $G$ by skipping trivial ears.

\begin{lemma}
\label{lem:trivial-ears-removal-maintains-m-c-property}
Let $J := \{j\in [q] \,\mid\, P_j \,\,\, \text{is a trivial ear}\}$, and let $G^\prime$ be the subgraph in $G$ resulting from $G$ by removing all
trivial ears, i.e. $G^\prime = P_0 + \sum_{i\in [q]\setminus J} P_i$. 
Then, $\rset := E[G^\prime]$ is a feasible solution to the \urap\ instance
on the original graph $G=(R\dcup T,E)$. 
Furthermore, $|\rset| \leq 3|T|$ holds.
\end{lemma}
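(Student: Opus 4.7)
The plan is to prove the two claims---that $X := E[G']$ is feasible for the \urap\ instance on $G$ and that $|X|\le 3|T|$---in four short steps. First I would observe that $G'$ spans $G$: a trivial ear is a single edge whose endpoints already lie in $H_{j-1}$, so trivial ears introduce no new nodes, whence every node of $G$ either lies in $P_0$ or is an interior node of some non-trivial ear and thus belongs to $G'$. Second, listing the non-trivial ears as $P_{i_1},\dots,P_{i_r}$ in their original order, I would argue that $P_0,P_{i_1},\dots,P_{i_r}$ is a valid bipartite ear decomposition of $G'$: the endpoints of $P_{i_k}$ lie in $H_{i_k-1}$ by assumption, and by the spanning observation every node of $H_{i_k-1}$ already appears in $P_0+P_{i_1}+\dots+P_{i_{k-1}}$. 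Theorem~\ref{thm:bipartite-m-c-iff-has-a-ear-decomposition} then yields that $G'$ is matching-covered.

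The main step is feasibility. I would use the following one-line fact: in any bipartite matching-covered graph $H$, a non-isolated edge $e=\{u,w\}$ is avoided by some perfect matching of $H$, because if $\deg_H(u)\ge 2$ then $u$ has another neighbor $w'$, and by matching-coverness some perfect matching contains $\{u,w'\}$ and hence avoids $e$. Applied to $G'$, feasibility reduces to showing that no vulnerable $f\in X$ is isolated in $G'$. If $f$ lies on some non-trivial ear $P_j$, then at least one endpoint of $f$ is interior to $P_j$ and already has degree two in $P_j$ alone, so $f$ is not isolated. If $f=P_0=\{v_1,v_2\}$, I would use that the very first ear $P_1$ has both endpoints in $H_0=\{v_1,v_2\}$ and hence joins $v_1$ and $v_2$; because $G$ is simple and ears are edge-disjoint, $P_1$ cannot duplicate $P_0$ and so has length at least three, i.e.\ is a non-trivial ear belonging to $G'$; thus $v_1$ and $v_2$ each have degree at least two in $G'$ and $P_0$ is again not isolated. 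The degenerate case $|V[G]|=2$ is vacuous, since feasibility of $\inst$ then forces $P_0\notin\uset$.

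For the size bound I would count interior nodes: a non-trivial ear of length $2k_j+1$ contributes $2k_j$ interior nodes, and these partition $V[G]\setminus\{v_1,v_2\}$, giving $\sum_j 2k_j=2|T|-2$. Since each $k_j\ge 1$, the number $r$ of non-trivial ears satisfies $r\le |T|-1$, hence $|X|=1+\sum_j(2k_j+1)=1+2(|T|-1)+r\le 3|T|-2$. The main obstacle will be the feasibility analysis for $f=P_0$: one has to rule out that both $v_1$ and $v_2$ have degree one in $G'$, and this rests on the observation that the first ear of any bipartite ear decomposition connects the two endpoints of $P_0$ by an odd-length path which in a simple graph cannot reduce to a single parallel edge. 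The tiny case $|V[G]|=2$ must be treated separately via the feasibility of $\inst$.
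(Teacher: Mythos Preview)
Your proof is correct and follows the same blueprint as the paper: show $G'$ spans $G$, verify that the non-trivial ears (in their original order) still form a bipartite ear decomposition so that $G'$ is matching-covered via Theorem~\ref{thm:bipartite-m-c-iff-has-a-ear-decomposition}, argue that no vulnerable edge is isolated in $G'$, and then count edges through the ear lengths.

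The only noteworthy difference is in how feasibility is finished off. The paper invokes Proposition~\ref{prop:matching_covered_and_RAP} to reduce feasibility to ``spanning $+$ matching-covered $+$ no isolated vulnerable edge'' and then dispatches the last condition in one line, implicitly relying on the connectedness of $G'$. You instead prove the needed implication directly via your ``one-line fact'' and give an explicit case analysis: edges on a non-trivial ear have an interior endpoint of degree at least two, while $P_0$ is handled by observing that $P_1$ must be a non-trivial $v_1$--$v_2$ path in a simple graph. This is slightly more careful than the paper's argument at that step, and it makes your proof self-contained without citing Proposition~\ref{prop:matching_covered_and_RAP}. Your edge count also lands at $3|T|-2$ rather than the paper's $3|T|$, but this is immaterial for the approximation guarantee.
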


\begin{proof}
The subgraph $G^\prime$ is obtained from $G$ by deleting trivial ears only, i.e.\ $G^\prime$ is spanning in $G$.
By Proposition~\ref{prop:matching_covered_and_RAP}, it suffices to show that $G^\prime$ is matching-covered and that there are no isolated edges.
The fact that $G^\prime$ is matching-covered follows from the fact that it has, by definition, an ear decomposition $G^\prime = P_0 + \sum_{i\in [q]\setminus J} P_i$ and from Theorem~\ref{thm:bipartite-m-c-iff-has-a-ear-decomposition}.
Since $G$ has a feasible solution to \urap, there are no isolated vulnerable edges in $G$, and hence also not in $G^\prime$. 
This proves that $\rset = E^\prime$ is feasible.\\

It remains to bound the number of edges in $\rset = E[G^\prime]$.
Let $G^\prime = Q_0 + \ldots + Q_p$ be the ear decomposition of
$G^\prime$ consisting of the non-trivial ears of $G$ (appearing
in the same order) with $p := q - |J|$.
Furthermore, define $l_j := |T \cap (V[Q_j] \setminus (V[Q_0] \cup
\dots \cup V[Q_{j-1}]))|$ as the number of internal task nodes in the ear $Q_j$ for $j=1,\dots,p$. 
Since the ears $Q_j$ are non-trivial we have $l_j \geq 1$, $p \leq
|T| - 1$ and $\sum_{j=0}^p l_j = |T|$ implying 
\[ |\rset| \leq 1 + \sum_{j=1}^p 2 (l_j + 1) = 1 + 2 |T| + p \leq 3 |T| . \] 
\end{proof}

Lemma~\ref{lem:trivial-ears-removal-maintains-m-c-property} allows to
arrive at an approximation algorithm summarized as Algorithm~\ref{alg:ear-decomp-approx-algo-bipartite-uniform-case}.
The analysis is deferred to the proof of Theorem~\ref{thm:approximation_unweighted_RAP}. 

\begin{algorithm}
\caption{: $O(1)$-Approximation for \urap}
\begin{algorithmic}[1]	
\label{alg:ear-decomp-approx-algo-bipartite-uniform-case}
\REQUIRE{$G = (R\dcup T,E)$ and $\uset \subseteq E$.}
\ENSURE{ a feasible solution $\rset$ to the \urap\ instance on $G$ and $\uset$}
\STATE{ $\rset \gets \emptyset$ }
\STATE{ Remove all dispensable edges form $G$}
\STATE{ Compute an ear decomposition $G = P_0 + \ldots + P_q$}
\STATE{ $\rset \gets P_0\; \cup\; \bigcup\{E[P_j] \,\mid\, P_j \,\,\text{is not trivial}, ~j=1, \ldots, q\}$}
\RETURN{$\rset$}
\end{algorithmic}
\end{algorithm}

\begin{proof}[Proof of Theorem~\ref{thm:approximation_unweighted_RAP}]
According to de Carvalho and Cheriyan~\cite{carvalho_cheriyan_05}, an
ear decomposition of matching-covered graphs can be computed in polynomial time. 
Furthermore, all other computations can also
be implemented efficiently, such that the running time of 
Algorithm~\ref{alg:ear-decomp-approx-algo-bipartite-uniform-case} is polynomial.\\  
From Lemma~\ref{lem:trivial-ears-removal-maintains-m-c-property}, it follows that the set $\rset$ returned 
by Algorithm~\ref{alg:ear-decomp-approx-algo-bipartite-uniform-case} is feasible. 
For $\uset = E$, any feasible solution must have at least two edges incident
to any node from the set $T$.
Hence, $\opt \geq 2 |T|$. Since $|\rset|\leq 3 |T|$, the approximation
guarantee is indeed $1.5$.
If $\uset \subsetneq E$, then $G$ can contain a perfect matching not including any edge from $\uset$, hence $\opt \geq |T|$. 
Using  $\opt \geq |T|$ yields an approximation factor of $3$.
\end{proof}

In the example below, we provide a family of graphs, for which Algorithm~\ref{alg:ear-decomp-approx-algo-bipartite-uniform-case} produces a sequence of solutions whose objective function values can be arbitrarily close to $1.5\,\opt$ in the uniform case.   

\begin{example}
  \label{rem:ear-docomposition-algorithm-cannot-be-better-than-one-point-five}
  For $k\geq 3$, let $G_k$ be a bipartite graph with node set $\{0,1,\hdots,2k+1\}$. The edges of $G_k$ are defined as follows. $G_k$ contains the edge $\{0,1\}$ as well as the paths $P_i$, $i\in\{2z\mid z=1,\hdots,k\}$, from $0$ to $1$ through nodes $i$ and $i+1$. Additionally, $G_k$ contains the cycles $C_j$, $j\in\{2z\mid z=2,\hdots,k-1\}$ with node set $\{j, j+1, j+2,j+3\}$. Figure~\ref{fig:feasible-slns-do-not-contain-optimal-solutions} shows the graph $G_k$, for $k=3$.
  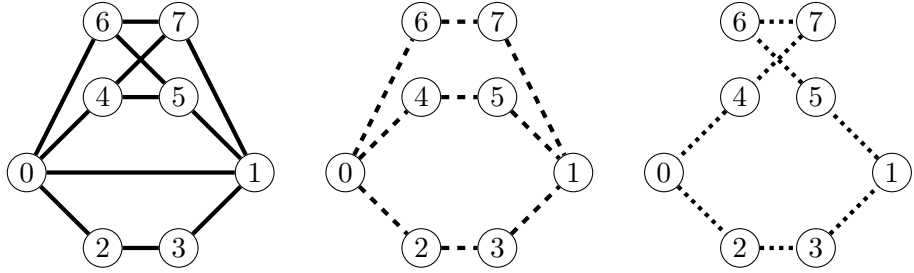
\begin{figure}[H]
    \centering
    \begin{minipage}{0.2\linewidth}
      \begin{tikzpicture}[inner sep = 2pt]
        \node (0) at (0,1) [circle, draw] {$0$};
        \node (2) at (1,3) [circle, draw] {$6$};
        \node (4) at (1,2) [circle, draw] {$4$};
        \node (6) at (1,0) [circle, draw] {$2$};
        \node (3) at (2,3) [circle, draw] {$7$};
        \node (5) at (2,2) [circle, draw] {$5$};
        \node (7) at (2,0) [circle, draw] {$3$};
        \node (1) at (3,1) [circle, draw] {$1$};
        
        \draw[-, ultra thick] (0) -- (2) -- (3) -- (1);
        \draw[-, ultra thick] (0) -- (4) -- (5) -- (1);
        \draw[-, ultra thick] (0) -- (6) -- (7) -- (1);
        \draw[-, ultra thick] (2) -- (5);
        \draw[-, ultra thick] (3) -- (4);
        \draw[-, ultra thick] (0) -- (1);
      \end{tikzpicture}
    \end{minipage}
    \hspace{0.1\linewidth}
    \begin{minipage}{0.2\linewidth}
      \begin{tikzpicture}[inner sep = 2pt]
        \node (0) at (0,1) [circle, draw] {$0$};
        \node (2) at (1,3) [circle, draw] {$6$};
        \node (4) at (1,2) [circle, draw] {$4$};
        \node (6) at (1,0) [circle, draw] {$2$};
        \node (3) at (2,3) [circle, draw] {$7$};
        \node (5) at (2,2) [circle, draw] {$5$};
        \node (7) at (2,0) [circle, draw] {$3$};
        \node (1) at (3,1) [circle, draw] {$1$};
        
        \draw[-, ultra thick, dashed] (0) -- (2) -- (3) -- (1);
        \draw[-, ultra thick, dashed] (0) -- (4) -- (5) -- (1);
        \draw[-, ultra thick, dashed] (0) -- (6) -- (7) -- (1);
\end{tikzpicture}
\end{minipage}
\hspace{0.1\linewidth}
\begin{minipage}{0.2\linewidth}
\begin{tikzpicture}[inner sep = 2pt]
\node (0) at (0,1) [circle, draw] {$0$};
\node (2) at (1,3) [circle, draw] {$6$};
\node (4) at (1,2) [circle, draw] {$4$};
\node (6) at (1,0) [circle, draw] {$2$};
\node (3) at (2,3) [circle, draw] {$7$};
\node (5) at (2,2) [circle, draw] {$5$};
\node (7) at (2,0) [circle, draw] {$3$};
\node (1) at (3,1) [circle, draw] {$1$};

\draw[-, ultra thick, dotted] (0) -- (6) -- (7) -- (1) -- (5) -- (2) -- (3) -- (4) -- (0);
\end{tikzpicture}
\end{minipage}
\caption{Graph $G_3$ (left), a bad solution found by Algorithm~\ref{alg:ear-decomp-approx-algo-bipartite-uniform-case} (center),
  an optimal solution for $\uset=E$ (right).}
\label{fig:feasible-slns-do-not-contain-optimal-solutions}
\end{figure}

In case $\uset=E$, an optimal solution of the corresponding \urap\ instance is given by a Hamiltonian cycle of size $2k+2$, while a ``worst case'' solution that can be found by Algorithm~\ref{alg:ear-decomp-approx-algo-bipartite-uniform-case} is given by the union of all paths $P_i$, i.e.\ by $\bigcup_{i\in\{2z\mid z=1,\hdots,k\}} P_i$. 
This yields
\[ \frac{\alg}{\opt} = \frac{3k}{2(k+1)} \xrightarrow[\infty]{k} \frac{3}{2} . \]
\end{example}

It is worth to remark that Algorithm~\ref{alg:ear-decomp-approx-algo-bipartite-uniform-case} presented above can also be applied to unweighted \rap\ instances defined on unbalanced bipartite graphs, leading to the same approximation guarantees.
The remainder of this section explains the extension of the algorithm.

Consider a bipartite graph $G_u=(R_u\dcup T_u, E_u)$ with
$|T_u| < |R_u|$.
As an initial pre-processing step, the graph $G_u$ is transformed into
a balanced bipartite graph $G_b$ (cf. Proposition~\ref{prop:transformation-non-balanced-to-balanced-instances}).
Therefore a set $D$ of $|R_u| -|T_u| $ dummy task nodes is introduced. 
Further, each such newly introduced node is then connected with all nodes from $R_u$, yielding a new subclass $E_D$ of edges.
The desired balanced bipartite graph is then given by
$G_b:=\left(R_u\dcup (T_u\cup D), E_u\cup E_D\right)$ and defines a \urap\ instance where the uncertainty set is chosen to be $\uset_b= \uset_u\cup E_D$.
As before all dispensable edges has to be removed from $G_b$.

In Lemma~\ref{lem:trivial-ears-removal-maintains-m-c-property} the set $\rset$ is now defined as $\rset := E[G^\prime] \setminus E_D$.
$E[G^\prime]$ is feasible for the instance on $G_b$ and $\rset$ remains feasible for $G_u$ because the edges $E_D$ are incident only to dummy task nodes in $D$ and not to those in $T_u$.
Thus, $\rset \setminus \{ \fset \} $ still contains a matching covering the whole set $T_u$.
The second part of the proof of Lemma~\ref{lem:trivial-ears-removal-maintains-m-c-property} holds if $T$ is replaced by $T_u$.
The same is true for the proof of Theorem~\ref{thm:approximation_unweighted_RAP}.
Hence, the approximation guarantee remains the same for an unbalanced bipartite graph $G_u$.


\section{Hardness of card-RAP with two vulnerable edges}
\label{sec:proof-sec-singleton-two-scenarios-hardness}

In this section the proof of Theorem~\ref{thm:hardness_unweighted_RAP_two_scenarios}, stating that
\urap\ with only two vulnerable edges is \np-hard, is presented.
The proof requires a complexity analysis of the following optimization problem. 
\begin{problem*}[Shortest Nice Path Problem (\snpp)]
\label{prob:SNPP}
\text{}
 \begin{itemize}
\item \underline{Input:} Tuple $(H,s,t)$, where $H=(U\dcup W, E_H)$ is
  a balanced bipartite graph, and $s\in U$, $t\in W$.  
\item 
  \underline{Output:}
  A nice\footnote{Recap from the 
    theory of matching-covered graphs that a subgraph $H^\prime$ of 
    a graph $H$ containing a perfect matching is called \emph{nice} if 
    $(V[H]\setminus V[H^\prime])$ is perfectly matchable in $H$.}
  shortest $s$-$t$-path in $H$, i.e.\ an $s$-$t$-path $P$
  with the smallest possible number of edges such that there is a
  a matching in $H$ covering all nodes in $H$ that are not
  covered by $P$.
\end{itemize}
\end{problem*} 
  
The proof of Theorem~\ref{thm:hardness_unweighted_RAP_two_scenarios}
is based on the fact that \snpp\ is \np-hard.
This result is later shown in Lemma~\ref{thm:SNP_np_hard}.  
We first show how Theorem~\ref{thm:hardness_unweighted_RAP_two_scenarios} follows
from the latter result.
\begin{proof}[Proof of
  Theorem~\ref{thm:hardness_unweighted_RAP_two_scenarios}]
  It suffices to show that \snpp\ can be reduced, in polynomial time, to \urap\ with only two
  vulnerable edges. 
  Then, the \np-hardness of \snpp\ shown in
  Lemma~\ref{thm:SNP_np_hard} completes this proof.\\
  Consider any instance $\inst:=(H,s,t)$ of \snpp\ with graph $H:=(U \dcup W, E_H)$ and with
  terminals $s\in U$ and $t\in W$. Further, set $n := |U| + |W|$.
  To derive an equivalent \urap\ instance, two new nodes $x,y$ are
  introduced as well as the edges  $f_1 := \{s,x\}$, $f_2 := \{x,y\}$
  and $g :=\{y,t\}$. For an illustration, see Figure~\ref{fig_illustrationGprime}
  This leads to a new balanced bipartite graph $G=(T\dcup R,E)$ 
  with $T := U\cup \{y\}$, $R := W \cup \{x\}$ and
  $E:=E_H\cup\{f_1,f_2,g\}$.\\
  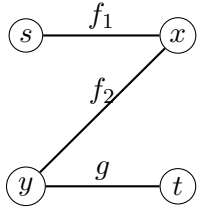
\begin{figure}[hbt] 
    \centering
    \begin{tikzpicture}[inner sep = 2pt]
      \node (0) at (0,2) [circle, draw] {$s$};
      \node (1) at (2,2) [circle, draw] {$x$};
      \node (2) at (0,0) [circle, draw] {$y$};
      \node (3) at (2,0) [circle, draw] {$t$};
      
      \draw[-, thick] (0) to node [above] {$f_1$} (1);
      \draw[-, thick] (1) to node [above] {$f_2$} (2);
      \draw[-, thick] (2) to node [above] {$g$} (3);
    \end{tikzpicture}
    \caption{Additional nodes and edges in the graph $G$.}
    \label{fig_illustrationGprime}
  \end{figure}
  
Now consider the \urap\ instance $\inst^\prime$ defined on $G$ and with
  uncertainty set $\uset=\{f_1, f_2 \}$.
  $\inst^\prime$ can clearly be constructed in polynomial time from
  $\inst$.\\
  Let $P$ be any nice $s$-$t$-path $P$ in $H$. 
  As $s\in U$ and $t\in W$, $P$ consists of an even number of nodes
  and an odd number of edges.
  The definition of nice paths implies that there must exist a
  matching $M_P$ in $H$ covering all nodes $(U\dcup W)\setminus V[P]$,
  i.e.\ all nodes from $H$ not belonging to $P$.\\
  In the graph $G$, the path $P$ and the new edges $f_1,f_2,g$ form the cycle $C:= P
  + (\{x,y\}, \{f_1,f_2,g\})$ with even length. 
  Let $E[C]$ be the set of edges contained in $C$.\\
  It is next argued that the union $M_P\cup E[C]$ is a feasible solution to the \urap\
  instance $\inst^\prime$. 
  Indeed, as the cycle $C$ is even, its edge set $E[C]$ is the union of two perfect
  matchings $M_1$ and $M_2$. This in particulary means that both $M_1$ and $M_2$ cover all nodes from $C$.
  Since the vulnerable edges $f_1,f_2$ are adjacent on $C$, one of these
  matchings contains $f_1$, while the other one contains $f_2$.
  W.l.o.g.\ assume that $f_1\not\in M_1$ and $f_2\not\in M_2$.
  Now recap that $M_P$ covers all nodes from $G$ not contained in $P$.
  Then, for each $f_i$, $i=1,2$, $M_i\cup M_P$ is a perfect matching
  in $G$ not containing $f_i$. This shows that $M_P\cup E[C]$ is
  feasible to $\inst^\prime$.\\
  Summing up, a nice $s$-$t$-path $P$ in $H$ with $L$ nodes (and $L-1$ edges) can
  be used to construct a feasible solution to $\inst^\prime$ with $L +
  2 + \frac{n -L}{2} = \frac{n}{2} + \frac{L}{2} + 2$ edges. 

  On the other hand, it is claimed that the following statements are true.
  \begin{itemize}
    \item[(i)] Any solution $\rset$ feasible to the \urap\ instance
      $\inst^\prime$ must contain a cycle $C$ including $f_1$ and
      $f_2$. 
    \item[(ii)]
      If $\rset$ is \emph{optimal} to $\inst^\prime$ then it must
      addionally hold that
      all edges in $\rset$ not belonging to $C$ form a matching
      covering all remaining nodes.
    \end{itemize} 
  To show the correctness of both statements, observe first that $f_1$ and $f_2$ are contained in any feasible set
  $\rset$ as they are the only edges in $G$ being incident with node $x$. 
  Now, let $M_1, M_2\subseteq \rset$ be two perfect matchings in $G$ 
  with $M_i$ not containing $f_i$, for $i=1,2$. 
  Since $x$ has degree two in $G$ it holds that $f_2\in M_1$ and $f_1\in M_2$.
  Hence,  $M_1\cup M_2 \subseteq\rset$ must contain a cycle involving
  both $f_1$ and $f_2$, implying that statement (i) is true.\\
  If $\rset$ is optimal, it further follows that $\rset = M_1 \cup M_2$ holds.
  Recap that a union of perfect matchings is, in general, a union of even
  cycles with a matching.
  Thus, it still needs to be verified that $X$ cannot contain other
  cycles than $C$.
  For this, assume there is a second cycle 
  $C^\prime\neq C$ in $\rset$.
  Since both $f_1$ and $f_2$ must be contained in the same cycle, the
  existence of $C^\prime$ implies that $C^\prime$ contains neither $f_1$ nor $f_2$. 
  As $C^\prime$ is also even, it contains two different
  matchings covering its nodes. Then, one of those matching can be
  deleted from $\rset$, yielding a proper subset of $\rset$ that is
  still feasible to $\inst^\prime$. This contradicts optimality of
  $\rset$ and shows the correctness of statement (ii).

  Finally, since $y$ has also a degree of two in $G$, edge $g$ has to be part of the 
  cycle $C$ containing $f_1$ and $f_2$. It follows that $C - \{x,y\}$
  is a nice $s$-$t$-path in $H$. 
  Consequently, if for some integer $L$, the cardinality of an optimal solution
  to the \urap\ instance is $\frac{n}{2} + \frac{L}{2} + 2$, the cycle contained
  in this solution has $L + 2$ nodes, and thus it provides a nice $s$-$t$-path with $L$ nodes.

  To conclude, $H$ admits a nice $s$-$t$-path with $L$ nodes if and only if
  the optimal solution of the \urap\ instance on $G$ with $f_1,f_2$
  being vulnerable has cardinality of $\frac{n}{2} +\frac{L}{2} + 2$.
  This shows that \snpp\ is reducible to \urap\ with two vulnerable
  edges, in polynoimal time. 
\end{proof} 

\bigskip
The remainder of this section addresses the \np-hardness of 
 \snpp\, stated in the next lemma.
\begin{lemma}\label{thm:SNP_np_hard}
  \snpp\ is \np-hard.
\end{lemma}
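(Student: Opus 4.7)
The plan is to establish \np-hardness of \snpp\ by a polynomial-time reduction from a classical \np-complete problem. My primary candidate is 3-SAT, whose decision version is standard. The decision version of \snpp\ asks, given $(H,s,t,L)$, whether $H$ admits a nice $s$-$t$-path on at most $L$ vertices; showing this to be \np-complete yields \np-hardness of the optimization problem.

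Given a 3-SAT instance with variables $x_1,\dots,x_n$ and clauses $C_1,\dots,C_m$, I would construct a balanced bipartite graph $H$ together with terminals $s\in U$, $t\in W$ and a bound $L$, such that the formula is satisfiable if and only if $H$ has a nice $s$-$t$-path on at most $L$ vertices. The construction uses two gadget families concatenated in series: a \emph{variable gadget} $V_i$ for each $x_i$, whose only traversals extendable to a perfect matching on its local residual are two canonical ones encoding $x_i = T$ and $x_i = F$; and a \emph{clause gadget} $W_j$ for each $C_j$, wired to the three corresponding variable gadgets in such a way that its residual admits a perfect matching if and only if at least one of the three literals is rendered true by the traversal choices in the $V_i$'s. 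The gadgets are chained via short connector paths from $s$ through all $V_1,\dots,V_n$ and $W_1,\dots,W_m$ to $t$, and $L$ is set to the total number of vertices in a canonical traversal.

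Correctness decomposes into two directions. For the forward direction, any satisfying assignment induces a canonical path through each variable gadget of total length exactly $L$; the global residual then decomposes into a perfect matching, taking the ``unused branch'' of each $V_i$ and, inside each $W_j$, matching by means of an unused literal arm corresponding to a satisfied literal. For the reverse direction, any nice $s$-$t$-path $P$ of length at most $L$ must traverse every variable gadget in one of its two canonical modes---any other local traversal leaves a residual violating Hall's condition inside $V_i$---so $P$ induces a well-defined truth assignment; and the matchability of each clause gadget's residual then forces every clause to contain at least one true literal.

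The main obstacle is the gadget design itself. The gadgets must preserve the balanced bipartiteness of $H$ and respect the parity constraints of bipartite paths (an $s$-$t$-path from $U$ to $W$ has an odd number of edges, forcing $L$ to be even in vertex count); they must admit exactly two canonical traversal modes encoding the binary choice and no others; and they must transmit the assignment information to the clause gadgets through the matching structure without creating any unforeseen short nice path or spurious perfect matching of the residual. The clause gadgets are especially delicate, since their matchability depends on choices in distant variable gadgets, and the wiring must carry this dependence faithfully without introducing local matchings that ``bypass'' the clause constraint. A secondary technical point is ensuring the total gadget size is polynomial in $n+m$ so the reduction runs in polynomial time.
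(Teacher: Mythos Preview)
Your proposal is a plan, not a proof: you correctly identify that a reduction from \threesat\ with variable and clause gadgets would suffice, but you then explicitly flag the gadget design as ``the main obstacle'' and only list desiderata (two canonical traversals per variable gadget, Hall-condition failure otherwise, clause matchability iff some literal is satisfied, preservation of balanced bipartiteness and parity). None of these gadgets is actually constructed, and the delicate part---ensuring that \emph{every} nice $s$-$t$-path of length at most $L$ must use a canonical traversal, and that no spurious perfect matching of the residual exists---is precisely where a direct \threesat\ reduction is hard to get right. As written, the proposal does not close this gap.

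The paper sidesteps the gadget difficulty by factoring the reduction through an intermediate problem, the Restricted Path with Forbidden Pairs Problem (\rpfp): given a balanced bipartite graph, terminals $s,t$, and a collection of forbidden pairs $\{u_i,v_i\}$ (each pair on the same side of the bipartition, with equally many pairs on each side), decide whether there is an $s$-$t$-path meeting at most one vertex of each pair. \np-completeness of \rpfp\ follows from \threesat\ by a layered construction in the style of Gabow et al. The reduction from \rpfp\ to \snpp\ is then clean: set $L=|V(H)|+1$, attach to each forbidden pair $\{u_i,v_i\}$ a path $Q_i$ of $L$ internal vertices, attach to every non-terminal vertex $w$ a pendant path $R_w$ of $L$ vertices, and join the far ends of the $R_w$'s by a complete balanced bipartite graph. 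A nice $s$-$t$-path of length $<L$ cannot enter any $Q_i$ or $R_w$, so it lies entirely in $H$; and since $Q_i$ has an odd number of internal vertices, any perfect matching of the residual must use one of the edges $\{u_i,s^i_1\}$ or $\{s^i_L,v_i\}$, forcing at most one of $u_i,v_i$ onto the path. The converse is a direct matching construction. The key idea you are missing is that matching parity on an odd-length bipartite path encodes an ``at most one of two'' constraint for free---exactly the forbidden-pair semantics---so routing through \rpfp\ replaces your unspecified clause gadgets with a one-line parity argument.
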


The proof of Lemma~\ref{thm:SNP_np_hard} resorts to a reduction from a special variant of the
following well-known \np-complete decision problem.
\begin{problem*}[{Path with Forbidden Pairs Problem (\pfp),~\cite[GT54]{garey_johnson_79}}]
  \text{}
  \begin{itemize}
  \item \underline{Input:} Tuple $(D,s,t,\{\{u_i,v_i\} \mid i \in
    \oneto{k} \})$, where $D = (V,A)$ is a directed graph, $s,t$ are two terminals in $V$, 
    and $\{u_i, v_i\}$ are $k$ pairs of nodes in $V$ (with all $2k$
    nodes being  distinct).
  \item \underline{Question:} 
    Does $D$ contain an $s$-$t$-path $P$ with the property
    that, for every $i\in [k]$, at most one node from $\{u_i, v_i\}$ is
    contained in $P$?
  \end{itemize}
\end{problem*}
To show \np-hardness of \snpp\ the following restricted version of \pfp\ is considered.
\begin{problem*}[Restricted Path with Forbidden Pairs Problem (\rpfp)]
\text{}
  \label{prob:RPFP}
  \begin{itemize}
  \item \underline{Input:}
    Tuple $(H,s,t,\{\{u_i,v_i\} \mid i \in
    \oneto{k} \})$, where   
    \begin{enumerate}
    \item[{\small (P1)}] $H = (U\dcup W, E_H)$ is undirected,  bipartite and balanced.
    \item[{\small (P2)}] $s\in U$ and $t\in W$.
    \item[{\small (P3)}] $k$ is even.
    \item[{\small (P4)}] $\{u_i, v_i\} \subseteq U$ or $\{u_i, v_i\} \subseteq W$ for all $i\in [k]$.
    \item[{\small (P5)}] $\big|\big\{i\in\{1,\hdots,k\}\mid u_i,v_i\in U\big\}\big|
      = \big|\big\{i\in\{1,\hdots,k\}\mid u_i,v_i\in W\big\}\big| = \frac{k}{2}$.
    \end{enumerate}
  \item\underline{Question:}
    Does $H$ contain an $s$-$t$-path $P$ with the property
    that, for every $i\in [k]$, at most one node from $\{u_i, v_i\}$ is
    contained in $P$?
\end{itemize}
\end{problem*}

It turns out that the special version \rpfp\ remains \np-complete.
\begin{lemma}
  \label{lem:special-SPFP-is-np-hard}
  \rpfp\ is \np-complete.
\end{lemma}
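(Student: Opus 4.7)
I would prove \rpfp\ is \np-complete in the standard way. Membership in \np\ is immediate: a candidate $s$-$t$-path $P$ of $H$ is a polynomial-size certificate, and one checks in linear time that $P$ is a simple path from $s$ to $t$ and that $\{u_i, v_i\} \not\subseteq V[P]$ for every $i$.

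For \np-hardness I would reduce from \pfp\ (Garey--Johnson, GT54), working from a directed input $(D=(V,A), s, t, \{\{u_i,v_i\}\}_{i=1}^k)$ in which, after standard preprocessing, $s$ is a source and $t$ is a sink. The construction proceeds in three stages. Stage~1 produces a bipartite host graph by replacing each $v \in V$ with a small node gadget consisting of two canonical copies $v^{U} \in U$ and $v^{W} \in W$ joined by an internal path whose interior vertices have degree two, and by realising each arc $(u,v)\in A$ as a bipartite edge between the ``out''-copy of $u$ and the ``in''-copy of $v$. The degree-two interior locks are meant to force that whenever a simple $s$-$t$-path enters a node gadget it must exit through the opposite canonical copy, so that arc edges cannot be traversed in a forbidden zigzag manner. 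With $s$ mapped to $s^{U}$ and $t$ to $t^{W}$, conditions (P1) and (P2) are obtained. Stage~2 ports the forbidden pairs: a pair $\{u_i, v_i\}$ may be encoded as either $\{u_i^{U}, v_i^{U}\}$ or as $\{u_i^{W}, v_i^{W}\}$, since any path passing through node $v$ in $D$ uses \emph{both} canonical copies in $H$. This freedom allows distributing the $k$ original pairs as $\lceil k/2\rceil$ in $U$ and $\lfloor k/2\rfloor$ in $W$; if $k$ is odd I append one dummy pair on two fresh isolated vertices inserted into the smaller side, securing (P3)--(P5). Stage~3 balances $|U|$ and $|W|$ by padding with isolated vertices not involved in any pair.

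Correctness follows by a direct bijection argument: a forward $s$-$t$-path in $D$ that avoids every forbidden pair induces, by concatenating the node gadgets along the chosen arcs, an $s$-$t$-path in $H$ that avoids every ported pair; conversely, the locking property of the gadgets in Stage~1 forces any $s$-$t$-path in $H$ to trace out a well-defined directed path in $D$ whose visited node set matches the visited canonical copies in $H$. The size of the whole construction is polynomial in $|V|+|A|+k$, so combined with the \np-hardness of \pfp\ this establishes the claim.

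The main obstacle I anticipate is precisely the design of Stage~1: producing a compact, purely structural gadget in an undirected bipartite graph that rules out every forbidden traversal (such as entering $v^{U}$ through one incoming arc edge and leaving it directly through a second incoming arc edge, thereby bypassing the gadget's interior), without introducing any auxiliary forbidden pairs whose own side placement would disturb the (P4)/(P5) bookkeeping of Stage~2. All remaining ingredients of \rpfp\ --- placing $s$ and $t$ on opposite sides of the bipartition, the parity of $k$, the equal split of pairs across the two sides, and the balance $|U|=|W|$ --- are then easy to arrange by cheap isolated-vertex padding once such a gadget is in place.
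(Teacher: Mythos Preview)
Your reduction is from directed \pfp\ via node-splitting, whereas the paper bypasses \pfp\ and adapts the Gabow--Maheshwari--Osterweil reduction from \threesat\ directly. There one builds a layered graph with a layer of literal-nodes per clause, connects consecutive layers by complete bipartite edge sets, and makes each literal/negation pair forbidden; the resulting graph is already undirected. The only modification the paper needs is to \emph{duplicate} every clause-layer, so that each literal has a copy in $U$ and a copy in $W$; forbidden pairs can then be placed on either side in equal numbers, and (P1)--(P5) fall out without gadgets.

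The obstacle you flag in Stage~1 is a real gap, and your sketch does not resolve it. A degree-two interior path between $v^{U}$ and $v^{W}$ only constrains a path that has already entered the interior; it does nothing to stop a simple path from reaching $v^{U}$ along one arc-edge and leaving along a second arc-edge incident to the same $v^{U}$, never touching the interior. If $v$ has in-arcs from both $u_1$ and $u_2$, such a traversal corresponds in $D$ to walking $(u_1,v)$ forward and $(u_2,v)$ backward, which is not a directed walk at all. Hence the backward direction of your correspondence fails: $s$-$t$-paths in $H$ avoiding the ported pairs need not project to directed $s$-$t$-paths in $D$. Enforcing in/out discipline in an undirected graph by purely local structure is the well-known sticking point here, and the usual remedies (auxiliary forbidden pairs at each split node, or a global length bound) are precisely what you have excluded. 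The paper's route via \threesat\ sidesteps the issue entirely, since the layered construction carries no directionality that needs encoding.
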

\begin{proof}
To prove this statement, the proof of the \np-completeness result for  
\pfp\  as given by Gabow et al.~\cite[Lemma 2]{gabow_et_al_76} can be adapted
because of which it is only sketched here.\footnote{In \cite{gabow_et_al_76}, \pfp\ is called
  \emph{Impossible Pairs Constrained Path Problem}.}\\
The main idea is to use a reduction from the \np-complete
\emph{3-Satisfyability Problem} \threesat. This is achieved by
modeling pairs of a literal and its negation as forbidden pairs. 
For each clause appearing in the Boolean formula that specifies the \threesat\ instance,
a layer of nodes is further introduced.
Each two neighboring layers are then connected via a complete
bipartite graph. A path avoiding the forbidden pairs yields a truth assignment and vice versa.\\
The main difference to the proof given in~\cite[Lemma
2]{gabow_et_al_76} is that, for each clause in the Boolean formula,
two identical layers of nodes needs to be introduced in order to satisfy the
properties {\small (P4)} and {\small (P5)} of \rpfp. 
\end{proof}

The section is concluded by presenting the proof on the \np-hardness
of \snpp.
\begin{proof}[Proof of Lemma~\ref{thm:SNP_np_hard}]  
 As already mentioned, the proof is based on a polynomial reduction
 from \rpfp\ to \snpp\ that is shown next.\\ 
 Let $H = (U\dcup W, E_H)$ be a balanced bipartite graph, $s\in U, t\in W$,
 and let $\{(u_1,v_1), \dots, (u_k,v_k)\}$ be a collection of
 forbidden pairs, all together comprising an instance $\inst$ of \rpfp. 
 To obtain a corresponding \snpp\ instance $\inst^\prime$
 the following five steps are performed on $H$, and illustrated in Figure~\ref{fig_transformation_rpfp_to_snnp}.
 \begin{figure}[htb] 
   \resizebox{0.99\textwidth}{!}{\input{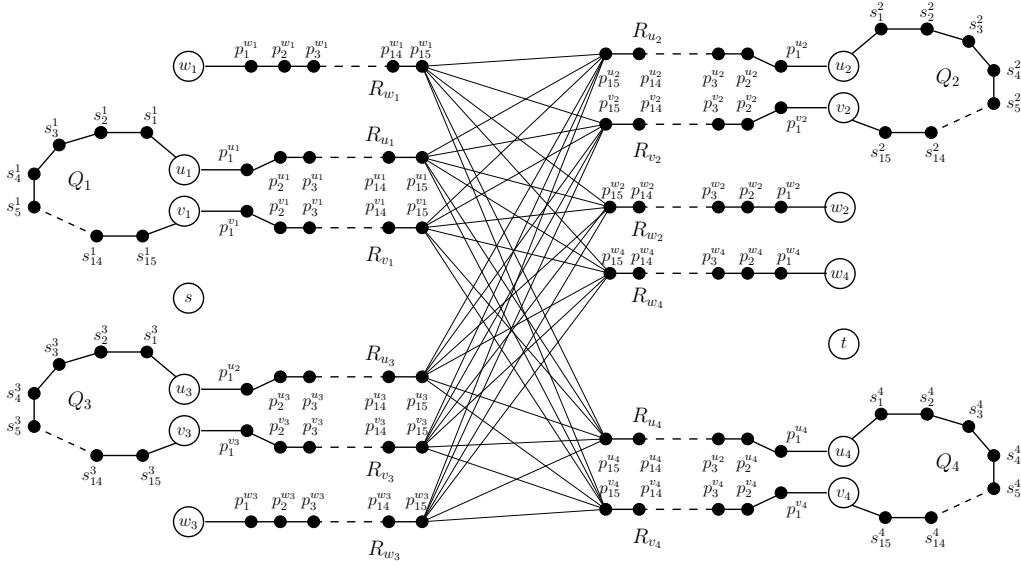}}
   \caption{%
     For a given \rpfp\ instance $(H,s,t,\{\{u_i,v_i\}\mid i\in[4]\})$
     on a balanced bipartite graph $H:=(U\dcup W,E_H)$ with
     $U:=\{s,u_1,v_1,u_3,v_3,w_1,w_3\}$ and
     $W:=\{t,u_2,v_2,u_4,v_4,w_2,w_4\}$, the figure shows the graph
     $H^\prime$ derived from $H$ by applying the transformation steps
     {\small (S1)--(S5)}. Larger, unfilled nodes represent the original nodes from
     $H$, while the new nodes introduced through the transformation
     steps appear as small black nodes.
     To improve the readability of the figure, the edge set $E_H$ from the original graph $H$
     is omitted. Note that $L=15$, and recap that $s^\prime=s$ and
     $t^\prime=t$ hold.
   } 
   \label{fig_transformation_rpfp_to_snnp}
 \end{figure}
\begin{enumerate}
\item[{\small (S1)}]
  Set $L := |U\dcup W| + 1$ (that is an odd number as $|U\dcup W|$ is even). 
\item[{\small (S2)}]
  For every $i \in \oneto{k}$, introduce new nodes $s^i_{j}$, $j=1,\hdots,L$,
  as well as the new path $Q_i:=(u_i, s^i_{1},s^i_{2}, \dots,
  s^i_{L}, v_i)$ connecting the two nodes of the corresponding forbidden pair
  $\{u_i,v_i\}$ through the new nodes $s^i_{j}$, $j=1,\hdots,L$, 
  and add both the new nodes and the edges of $Q_i$ to $H$.
\item[{\small (S3)}]
  For every node $w \in (U\dcup W) \setminus \{s,t\}$, introduce a new
  path $R_w:=(w,p^w_{1}, p^w_{2},\dots, p^w_{L})$, and add the new
  nodes $p_{q}^w$, $q=1,\hdots,L$ and the edges of $R_w$ to $H$. 
\item[{\small (S4)}] 
  Define $T := \{p_L^w \mid  w\in (U\dcup w)\setminus \{s,t\}\}$, and
  add to $H$ the edge set
  $$
  E_T := \big\{\, \{ p_L^{w_1}, p_L^{w_2}\}\,\big\} \mid  w_1\in U
  \setminus \{s\}, w_2 \in W \setminus \{t\}\}.
  $$
  This yields a complete, balanced, bipartite subgraph
  $H_T:=(T,E_T)$. 
\item[{\small (S5)}] Set $s^\prime = s$ and $t^\prime = t$.
\end{enumerate} 
Let $H^\prime:=(U^\prime\dcup W^\prime,E_{H^\prime})$ be the graph after performing
steps {\small (S1)} to {\small (S5)}.
The bipartiteness of $H^\prime$ follows from property {\small (P4)} of \rpfp,
i.e.\ from the fact that, for each forbidden pair
$(u_i, v_i)$, both $u_i$ and $v_i$ belong to the same node partition
of $H$.
By property {\small (P5)} of \rpfp, the number of forbidden pairs contained in each bipartition is
the same. Thus, $H^\prime$ is also balanced.
Moreover, $H^\prime$ contains $H$ as a subgraph, and it may be assumed
that $U\subseteq U^\prime$ and $W\subseteq W^\prime$,
i.e.\ $s^\prime \in U^\prime$ and $t^\prime \in W^\prime$.
Therefore, $\inst^\prime:=(H^\prime,s^\prime,t^\prime)$ is an
instance of \snpp\ constructed from $\inst$ in polynomial time.\\
Note further that each sub-path $(s^i_{1}, s^i_{2}, \dots, s^i_{L})$
introduced in step {\small (S2)} contains an odd number of
nodes. Thus, the nodes sets of the form $\{s^i_{1}, s^i_{2}, \dots,
s^i_{L}\}$ cannot be perfectly matched among themselves in
$H^\prime$. The same is true for the sub-paths $(p^w_{1},
p^w_{2},\dots, p^w_{L})$ introduced in step {\small (S3)}.  
Observe also that, for each $w \in T$, the nodes $p_L^w$ and $w$ are in different parts of the bipartition,
which is due to the fact that $L$ is odd.

Now consider the following claim.
\begin{itemize}
\item[(C)]\snpp\ instance $\inst^\prime$ contains an $s$-$t$-path of length $\ell < L$ if and only if the \rpfp\ 
instance $\inst$ is feasible.
\end{itemize}

Together with Lemma~\ref{lem:special-SPFP-is-np-hard}, the correctness
of claim (C) implies that finding a shortest nice path is \np-hard, thus completing
the proof of this lemma.

\emph{``only if'' part of the claim (C)}.\quad
Let $P$ be a nice $s$-$t$-path of length $\ell < L$ in $H^\prime$. 
Since $P$ has at most $L-1$ edges, it can not contain an edge from one of
the paths $Q_i$, $i\in [k]$, and $R_w$, $w\in (U\dcup
W)\setminus\{s,t\}$, as otherwise all 
edges from $Q_i$ or $R_w$ must be contained in $P$. This, however, implies that 
the length of $P$ is greater than $L$ as all paths $Q_i$ and $R_w$
consist of $L+1$ edges (see steps {\small (S2)} and {\small (S3)}),
leading to a contradiction.\\  
Furthermore, since $P$ contains no edge from a path $R_w$, it is also
disjoint from the subgraph $H_T$ introduced in step {\small (S4)}.
It follows that $P$ is completely contained the subgraph $H$ of
$H^\prime$, i.e.\ $P$ is already a path in $H$. \\
Next, it is shown that $P$ is feasible to $\inst$. This is a achieved
by proving that, for every $i\in [k]$, at least one of the nodes $u_i, v_i$ 
is not incident to $P$. For this, let $\{u_i, v_i\}$ be any forbidden
pair and let $V[P]$ be the node set of $P$. Recall that any node from
$V[P]$ is a node in $H$. 
Since $P$ is a nice path in $H^\prime$, there exists a matching $M$ in
$H^\prime$ covering all nodes from $H^\prime$ that are not contained
in $V[P]$.
As the inner nodes $s^i_{1}, s^i_{2}, \dots, s^i_{L}$ of any path
$Q_i$ does not belong to $P$, it follows that they are covered by $M$.
Recap moreover that $L$ is odd. This, in particulary, means that either $\{s^i_{1},u_i\}\in M$ or $\{s^i_{L},v_i\}\in M$ holds.
Thus, either $u_i$ or $v_i$ is incident to $P$.

\emph{``if'' part of the claim (C)}.\quad
Let $P$ be a feasible solution to the \rpfp\ instance $\inst$, i.e.\  
$P$ is a path in $H$.
To show that $P$ is a nice path in $H^\prime$ it must be proved that
there is a matching $M$ in $H^\prime$ that covers all nodes except
those from $V[P]$.
For this we assume w.l.o.g. that at most the nodes $u_i$ are part of the path $P$ and define the following index sets.
\begin{align*}
  J := \{i\in [k]\,\mid\, u_i\in V[P]\},\;\;
  K := [k]\setminus J. 
\end{align*}
Note, that the sets $J$ and $K$ form a partition of $[k]$.\\
The matching $M$ is constructed as follows. Set $M:=\emptyset$, and
consider first the nodes on the paths $Q_i$, $i\in [k]$.
To cover all nodes of $Q_i$ not belonging to $V[P]$, a suitable
set of alternating edges (i.e.\ every second edge) from the path $Q_i$ is chosen. 
Concretely, if $i\in J$, add to $M$ the set of alternating edges
covering $s^i_{1}, \dots, s^i_{L}, v_i$.   
Otherwise, i.e.\ if $i\in K$, add to $M$ the set of alternating
edges that match $u_i, s^i_{1}, \dots, s^i_{L}$. 
Observe that $M$ already covers either $u_i$ or $v_i$, for each
forbidden pair $\{u_i,v_i\}$. By properties~{\small (P4)} and~{\small (P5)}
of \rpfp\, it further follows that the number of nodes in $U$ that are
covered by $M$ and that belong to a forbidden pair is identical to
the number of nodes in $W$ covered by $M$ and belonging to a forbidden
pair.  
\\
Secondly, consider all nodes in subgraph $H$ that do not belong to $V[P]$
and that are not covered by $M$, so far. 
Let $\tilde U \cup \tilde W$ denote these nodes where $\tilde U\subseteq
U$ and $\tilde W\subseteq W$.
Then, $|\tilde U| = |\tilde W|$ holds, which follows from the fact
that  
$$
|\tilde U| =  \frac{n}{2} - q - \frac{k}{2} = |\tilde W|, 
$$
where
\begin{itemize}
 \item $\frac{n}{2} = |U| = |W|$ is the number of nodes on each side of $H$,
 \item $q = \frac{1}{2} |V[P]|$ (Note that $|V[P]|$ is even as $s\in U$
   and $t\in W$ and $P$ is an $s$-$t$-path in $H$. More precisely,
   even $|V[P]\cap U|=|V[P]\cap W|$ holds.), 
 \item and $\frac{k}{2}$ the number of nodes in $W$ (as well as in $U$)
   that are covered by $M$ and that belong to a forbidden pair.
\end{itemize}
To extend $M$ to a matching in $H^\prime$ also covering $\tilde U \cup
\tilde W$, the following edges are added to $M$.
For each $w\in \tilde U \cup \tilde W$, choose the set of
alternating edges from path $R_w$ that cover $w,p^w_{1}, p^w_{2},\dots, p^w_{L}$.\\
Now observe that the only nodes in $H^\prime$ not belonging to $P$ and being
still unmatched by $M$ are all but the first nodes  $\{p^w_{1}, p^w_{2},\dots,
p^w_{L}\}$ of a path $R_w$ that is associated with a node $w\in (U\cup W) \setminus
(\tilde{U}\cup\tilde{W}\cup\{s,t\})$, i.e.\ with a node $w$ in $H$
that is either contained in $V[P]$ or $w$ is a node that is
part of a forbidden pair and that is covered by an edge added to $M$
in the first step.
These nodes can be covered by first adding, for each $w\in (U\cup W) \setminus
(\tilde{U}\cup\tilde{W}\cup\{s,t\})$,  the edges $\{p^w_{1}, p^w_{2}\}$,
$\{p^w_3,p^w_4\}, \ldots$, $\{p^w_{L-2},p^w_{L-1}\}$ from $R_w$ to
$M$.\\
This still leaves all end nodes $p^w_{L}$ of the paths $R_w$ with a node $w\in (U\cup W) \setminus
(\tilde{U}\cup\tilde{W}\cup\{s,t\})$ to be unmatched by $M$. Let
 $$ T^\prime:=\{ p^w_L \mid w\in (U\cup W) \setminus
(\tilde{U}\cup\tilde{W}\cup\{s,t\})\} \subseteq T
$$ 
be the set of all these end nodes. 
It remains to extend $M$ to a matching also covering $T^\prime$.
Recap that $T^\prime$ is,  as a subset of $T$, a part of the
balanced bipartite subgraph $H_T=(T,E_T)$ constructed in step {\small (S4)}. 
Furthermore, it holds that $|T^\prime \cap U| =
\tfrac{1}{2}|T^\prime|= |T^\prime\cap W|$.
Therefore, $H_T$ contains a matching only covering nodes
from $T^\prime$. 
After adding one such matching on $T^\prime$ to $M$, the set $M$
becomes a matching in $H^\prime$ that covers all nodes of $H^\prime$
not contained in $P$. This shows that $P$ is a nice $s$-$t$-path in
$H^\prime$. As $P$ is completely contained in the subgraph $H$, its
node set $V[P]$ can only consists of at most $|U\dcup W|< |U\dcup W|+1=:L$ nodes,
proving that the length $\ell$ of $P$ is strictly less than $L$.  
\end{proof}


\section{Conclusion and Future Work}

This paper studies a novel practically relevant robust variant of the assignment problem (\rap).
Tight connections between \rap\ and classical notions in matching
theory, including matching-covered graphs and ear decompositions, have
been highlighted and used to obtain asymptotically tight approximation results for \rap. 
The approximation algorithm presented for the general variant of
\rap\ combines classical results for matching-covered graphs with \lp\ randomized rounding techniques.

Some ongoing and future work includes the following lines of research. 
Study a version of \rap\ with node failures, or with a combination of node and edge failures. 
This problem has many potential applications beyond the ones listed
here. Study the variant of \rap\ where each scenario consists of at
most $k$ edges, for some input parameter $k > 1$. This paper treats the case $k=1$. 
Besides, it is interesting to study the complexity of \rap\ in general graphs.

\bigskip
\paragraph*{Acknowledgement. }
The work of the second author and the third author is part of the Research Training Group \emph{``Discrete Optimization of Technical Systems under Uncertainty''} (RTG 1855).
Financial support through RTG 1855 of the second author by the German Research Foundation (DFG) is gratefully appreciated.
 
{

\begin{thebibliography}{57}
\providecommand{\natexlab}[1]{#1}
\providecommand{\url}[1]{\texttt{#1}}
\expandafter\ifx\csname urlstyle\endcsname\relax
  \providecommand{\doi}[1]{doi: #1}\else
  \providecommand{\doi}{doi: \begingroup \urlstyle{rm}\Url}\fi

\bibitem[Adjiashvili(2012)]{thesis_adjiashvili}
David Adjiashvili.
\newblock \emph{Structural Robustness in Combinatorial Optimization}.
\newblock PhD thesis, ETH Z\"urich, Z\"urich, Switzerland, 2012.
\newblock URL \url{http://dx.doi.org/10.3929/ethz-a-007579109}.

\bibitem[Adjiashvili(2015)]{adjiashvili_bulk_planar}
David Adjiashvili.
\newblock {Non-Uniform Robust Network Design in Planar Graphs}.
\newblock In Naveen Garg, Klaus Jansen, Anup Rao, and Jos{\'e} D.~P. Rolim,
  editors, \emph{Approximation, Randomization, and Combinatorial Optimization.
  Algorithms and Techniques (APPROX/RANDOM 2015)}, volume~40 of \emph{Leibniz
  International Proceedings in Informatics (LIPIcs)}, pages 61--77, Dagstuhl,
  Germany, 2015. Schloss Dagstuhl--Leibniz-Zentrum fuer Informatik.
\newblock URL \url{http://drops.dagstuhl.de/opus/volltexte/2015/5294}.

\bibitem[Adjiashvili et~al.(2015)Adjiashvili, Stiller, and
  Zenklusen]{AdjishviliStillerZenklusen2014}
David Adjiashvili, Sebastian Stiller, and Rico Zenklusen.
\newblock Bulk-robust combinatorial optimization.
\newblock \emph{Mathematical Programming}, 149\penalty0 (1):\penalty0 361--390,
  2015.
\newblock URL \url{http://dx.doi.org/10.1007/s10107-014-0760-6}.

\bibitem[Aissi et~al.(2005)Aissi, Bazgan, and
  Vanderpooten]{aissi2005complexity}
Hassene Aissi, Cristina Bazgan, and Daniel Vanderpooten.
\newblock Complexity of the min–max and min–max regret assignment problems.
\newblock \emph{Operations Research Letters}, 33\penalty0 (6):\penalty0 634 --
  640, 2005.
\newblock URL
  \url{http://www.sciencedirect.com/science/article/pii/S0167637704001658}.

\bibitem[Aissi et~al.(2009)Aissi, Bazgan, and
  Vanderpooten]{Survey_AissiBazgenVanderpooten}
Hassene Aissi, Cristina Bazgan, and Daniel Vanderpooten.
\newblock Min–max and min–max regret versions of combinatorial optimization
  problems: A survey.
\newblock \emph{European Journal of Operational Research}, 197\penalty0
  (2):\penalty0 427 -- 438, 2009.
\newblock URL
  \url{http://www.sciencedirect.com/science/article/pii/S0377221708007625}.

\bibitem[Aldred and Plummer(1999)]{aldred_plummer_99}
Robert E.~L. Aldred and Michael~D. Plummer.
\newblock On matching extensions with prescribed and proscribed edge sets {II}.
\newblock \emph{Discrete Mathematics}, 197–198:\penalty0 29--40, 1999.
\newblock URL
  \url{http://www.sciencedirect.com/science/article/pii/S0012365X99900353}.

\bibitem[Aldred et~al.(1999)Aldred, Holton, Porteous, and
  Plummer]{aldred_et_al_99}
Robert E.~L. Aldred, Derek~A. Holton, M.~I. Porteous, and Michael~D. Plummer.
\newblock Two results on matching extensions with prescribed and proscribed
  edge sets.
\newblock \emph{Discrete Mathematics}, 206\penalty0 (1–3):\penalty0 35--43,
  1999.
\newblock URL
  \url{http://www.sciencedirect.com/science/article/pii/S0012365X98004051}.

\bibitem[Alimonti and Kann(1997)]{alimonti_viggo_97}
Paola Alimonti and Viggo Kann.
\newblock Hardness of approximating problems on cubic graphs.
\newblock In Giancarlo Bongiovanni, Daniel~Pierre Bovet, and Giuseppe
  Di~Battista, editors, \emph{{Algorithms and Complexity: Third Italian
  Conference, CIAC '97 Rome, Italy, March 12--14, 1997 Proceedings}}, pages
  288--298. Springer Berlin Heidelberg, Berlin, Heidelberg, 1997.
\newblock URL \url{http://dx.doi.org/10.1007/3-540-62592-5_80}.

\bibitem[Bertsimas et~al.(2011)Bertsimas, Brown, and
  Caramanis]{BertsimasBrownCaramanis}
Dimitris Bertsimas, David~B. Brown, and Constantine Caramanis.
\newblock Theory and applications of robust optimization.
\newblock \emph{SIAM review}, 53:\penalty0 464--501, 2011.
\newblock URL \url{http://dx.doi.org/10.1137/080734510}.

\bibitem[Brigham et~al.(2005)Brigham, Harary, Violin, and
  Yellen]{brigham_et_al_05}
Robert~C. Brigham, Frank Harary, Elizabeth~C. Violin, and Jay Yellen.
\newblock Perfect-matching preclusion.
\newblock \emph{Congressus Numerantium}, 174:\penalty0 185--192, 2005.

\bibitem[Burkard et~al.(2012)Burkard, Dell'Amico, and
  Martello]{burkard_book_12}
Rainer~E. Burkard, Mauro Dell'Amico, and Silvano Martello.
\newblock \emph{Assignment Problems, Revised Reprint:}.
\newblock Society for Industrial and Applied Mathematics, 2012.

\bibitem[Carello and Lanzarone(2014)]{carello_lanzarone_14}
Giuliana Carello and Ettore Lanzarone.
\newblock A cardinality-constrained robust model for the assignment problem in
  {Home} {Care} services.
\newblock \emph{European Journal of Operational Research}, 236\penalty0
  (2):\penalty0 748--762, 2014.
\newblock URL
  \url{http://www.sciencedirect.com/science/article/pii/S0377221714000113}.

\bibitem[Chechik and Peleg(2010)]{chechik2010robust}
Shiri Chechik and David Peleg.
\newblock {Robust Fault Tolerant Uncapacitated Facility Location}.
\newblock In Jean-Yves Marion and Thomas Schwentick, editors, \emph{27th
  International Symposium on Theoretical Aspects of Computer Science}, volume~5
  of \emph{Leibniz International Proceedings in Informatics (LIPIcs)}, pages
  191--202, Dagstuhl, Germany, 2010. Schloss Dagstuhl--Leibniz-Zentrum fuer
  Informatik.
\newblock URL \url{http://drops.dagstuhl.de/opus/volltexte/2010/2454}.

\bibitem[Chechik et~al.(2009)Chechik, Langberg, Peleg, and
  Roditty]{chechik2009fault}
Shiri Chechik, Michael Langberg, David Peleg, and Liam Roditty.
\newblock Fault-tolerant spanners for general graphs.
\newblock In \emph{Proceedings of the 41st Annual {ACM} Symposium on Theory of
  Computing, {STOC} 2009, Bethesda, MD, USA, May 31 - June 2, 2009}, pages
  435--444, 2009.
\newblock URL \url{http://doi.acm.org/10.1145/1536414.1536475}.

\bibitem[Chegireddy and Hamacher(1987)]{chegireddy_k_best_matchings_1987}
Chandra~R. Chegireddy and Horst~W. Hamacher.
\newblock Algorithms for finding k-best perfect matchings.
\newblock \emph{Discrete Applied Mathematics}, 18\penalty0 (2):\penalty0 155 --
  165, 1987.
\newblock URL
  \url{http://www.sciencedirect.com/science/article/pii/0166218X87900175}.

\bibitem[Chekuri(2007)]{chekuri2007routing}
Chandra Chekuri.
\newblock Routing and network design with robustness to changing or uncertain
  traffic demands.
\newblock \emph{SIGACT News}, 38\penalty0 (3):\penalty0 106--129, 2007.
\newblock URL \url{http://doi.acm.org/10.1145/1324215.1324236}.

\bibitem[Cheng et~al.(2009)Cheng, Lesniak, Lipman, and Liptak]{cheng_et_al_09}
Eddie Cheng, Linda Lesniak, Marc~J. Lipman, and Laszlo Liptak.
\newblock Conditional matching preclusion sets.
\newblock \emph{Information Sciences}, 179\penalty0 (8):\penalty0 1092--1101,
  2009.
\newblock URL
  \url{http://www.sciencedirect.com/science/article/pii/S0020025508004374}.

\bibitem[Cheriyan et~al.(2001)Cheriyan, Seb\H{o}, and
  Szigeti]{cheriyan_et_al_01}
Joseph Cheriyan, Andr{\'a}s Seb\H{o}, and Zolt{\'a}n Szigeti.
\newblock Improving on the 1.5-{Approximation} of a {Smallest} 2-{Edge}
  {Connected} {Spanning} {Subgraph}.
\newblock \emph{SIAM Journal on Discrete Mathematics}, 14\penalty0
  (2):\penalty0 170--180, 2001.
\newblock URL \url{http://epubs.siam.org/doi/abs/10.1137/S0895480199362071}.

\bibitem[Chou et~al.(2010)Chou, Chua, Teo, and Zheng]{chou2010design}
Mabel~C. Chou, Geoffrey~A. Chua, Chung-Piaw Teo, and Huan Zheng.
\newblock Design for process flexibility: Efficiency of the long chain and
  sparse structure.
\newblock \emph{Operations Research}, 58\penalty0 (1):\penalty0 43--58, 2010.
\newblock URL \url{http://dx.doi.org/10.1287/opre.1080.0664}.

\bibitem[Costa(1994)]{costa_94}
Marie-Christine Costa.
\newblock Persistency in maximum cardinality bipartite matchings.
\newblock \emph{Operations Research Letters}, 15\penalty0 (3):\penalty0 143 --
  149, 1994.
\newblock URL
  \url{http://www.sciencedirect.com/science/article/pii/0167637794900493}.

\bibitem[Darmann et~al.(2011)Darmann, Pferschy, Schauer, and
  Woeginger]{darmann_et_al_11}
Andreas Darmann, Ulrich Pferschy, Joachim Schauer, and Gerhard~J. Woeginger.
\newblock Paths, trees and matchings under disjunctive constraints.
\newblock \emph{Discrete Applied Mathematics}, 159\penalty0 (16):\penalty0
  1726--1735, 2011.
\newblock URL
  \url{http://www.sciencedirect.com/science/article/pii/S0166218X1000435X}.

\bibitem[de~Carvalho and Cheriyan(2005)]{carvalho_cheriyan_05}
Marcelo~H. de~Carvalho and Joseph Cheriyan.
\newblock An {O(VE)} algorithm for ear decompositions of matching-covered
  graphs.
\newblock In \emph{Proceedings of the Sixteenth Annual ACM-SIAM Symposium on
  Discrete Algorithms}, SODA '05, pages 415--423, Philadelphia, PA, USA, 2005.
  Society for Industrial and Applied Mathematics.
\newblock URL \url{http://dl.acm.org/citation.cfm?id=1070432.1070490}.

\bibitem[Deineko and Woeginger(2006)]{deineko2006robust}
Vladimir~G. Deineko and Gerhard~J. Woeginger.
\newblock On the robust assignment problem under a fixed number of cost
  scenarios.
\newblock \emph{Operations Research Letters}, 34\penalty0 (2):\penalty0 175 --
  179, 2006.
\newblock URL
  \url{http://www.sciencedirect.com/science/article/pii/S0167637705000465}.

\bibitem[D{\'e}sir et~al.(2016)D{\'e}sir, Goyal, Wei, and
  Zhang]{desir2016sparse}
Antoine D{\'e}sir, Vineet Goyal, Yehua Wei, and Jiawei Zhang.
\newblock Sparse process flexibility designs: Is the long chain really optimal?
\newblock \emph{Operations Research}, 64\penalty0 (2):\penalty0 416--431, 2016.
\newblock URL \url{http://dx.doi.org/10.1287/opre.2016.1482}.

\bibitem[Dinitz and Krauthgamer(2011)]{dinitz2011fault}
Michael Dinitz and Robert Krauthgamer.
\newblock Fault-tolerant spanners: Better and simpler.
\newblock In \emph{Proceedings of the 30th Annual ACM SIGACT-SIGOPS Symposium
  on Principles of Distributed Computing}, PODC '11, pages 169--178, New York,
  NY, USA, 2011. ACM.
\newblock URL \url{http://doi.acm.org/10.1145/1993806.1993830}.

\bibitem[Dourado et~al.(2015)Dourado, Meierling, Penso, Rautenbach, Protti, and
  de~Almeida]{dourado_et_al_15}
Mitre~Costa Dourado, Dirk Meierling, Lucia~D. Penso, Dieter Rautenbach, Fabio
  Protti, and Aline~Ribeiro de~Almeida.
\newblock Robust recoverable perfect matchings.
\newblock \emph{Networks}, 66\penalty0 (3):\penalty0 210--213, 2015.
\newblock URL \url{http://dx.doi.org/10.1002/net.21624}.

\bibitem[Feige(1998)]{FeigeSetCover}
Uriel Feige.
\newblock A threshold of ln n for approximating set cover.
\newblock \emph{Journal of the ACM}, 45\penalty0 (4):\penalty0 634--652, 1998.
\newblock URL \url{http://doi.acm.org/10.1145/285055.285059}.

\bibitem[Fujita et~al.(2010)Fujita, Kobayashi, and Makino]{fujita2010robust}
Ryo Fujita, Yusuke Kobayashi, and Kazuhisa Makino.
\newblock Robust matchings and matroid intersections.
\newblock In Mark de~Berg and Ulrich Meyer, editors, \emph{{Algorithms -- ESA
  2010: 18th Annual European Symposium, Liverpool, UK, September 6-8, 2010.
  Proceedings, Part II}}, pages 123--134, Berlin, Heidelberg, 2010. Springer
  Berlin Heidelberg.
\newblock URL \url{http://dx.doi.org/10.1007/978-3-642-15781-3_11}.

\bibitem[Gabow et~al.(1976)Gabow, Maheshwari, and Osterweil]{gabow_et_al_76}
Harold~N. Gabow, Shachindra~N. Maheshwari, and Leon~J. Osterweil.
\newblock On {Two} {Problems} in the {Generation} of {Program} {Test} {Paths}.
\newblock \emph{IEEE Transactions on Software Engineering}, SE-2\penalty0
  (3):\penalty0 227--231, 1976.
\newblock URL \url{http://dx.doi.org/10.1109/TSE.1976.233819}.

\bibitem[Gabow et~al.(2005)Gabow, Goemans, Tardos, and
  Williamson]{GabowGoemansTardosWilliamson}
Harold~N. Gabow, Michel~X. Goemans, \'{E}va Tardos, and David~P. Williamson.
\newblock Approximating the smallest {K}-edge connected spanning subgraph by
  {LP}-rounding.
\newblock In \emph{Proceedings of the Sixteenth Annual ACM-SIAM Symposium on
  Discrete Algorithms}, SODA '05, pages 562--571, Philadelphia, PA, USA, 2005.
  Society for Industrial and Applied Mathematics.
\newblock URL \url{http://dl.acm.org/citation.cfm?id=1070432.1070511}.

\bibitem[Garey and Johnson(1979)]{garey_johnson_79}
Michael~R. Garey and David~S. Johnson.
\newblock \emph{Computers and Intractability: A Guide to the Theory of
  NP-Completeness}.
\newblock A Series of books in the mathematical sciences. W. H. Freeman, 1979.

\bibitem[Grandoni et~al.(2014)Grandoni, Ravi, Singh, and
  Zenklusen]{grandoni2014new}
Fabrizio Grandoni, R.~Ravi, Mohit Singh, and Rico Zenklusen.
\newblock New approaches to multi-objective optimization.
\newblock \emph{Mathematical Programming}, 146\penalty0 (1):\penalty0 525--554,
  2014.
\newblock URL \url{http://dx.doi.org/10.1007/s10107-013-0703-7}.

\bibitem[Gr{\"o}tschel et~al.(1993)Gr{\"o}tschel, Lov{\'a}sz, and
  Schrijver]{groetschel_et_al_book}
Martin Gr{\"o}tschel, L{\'a}szl{\'o} Lov{\'a}sz, and Alexander Schrijver.
\newblock \emph{Geometric Algorithms and Combinatorial Optimization}.
\newblock Algorithms and Combinatorics. Springer Berlin Heidelberg, 1993.

\bibitem[Hajiaghayi et~al.(2003)Hajiaghayi, Immorlica, and
  Mirrokni]{hajiaghayi2003power}
MohammadTaghi Hajiaghayi, Nicole Immorlica, and Vahab~S. Mirrokni.
\newblock Power optimization in fault-tolerant topology control algorithms for
  wireless multi-hop networks.
\newblock In \emph{Proceedings of the 9th Annual International Conference on
  Mobile Computing and Networking}, MobiCom '03, pages 300--312, New York, NY,
  USA, 2003. ACM.
\newblock URL \url{http://doi.acm.org/10.1145/938985.939016}.

\bibitem[Hassin and Rubinstein(2002)]{hassin2002robust}
Refael Hassin and Shlomi Rubinstein.
\newblock Robust matchings.
\newblock \emph{SIAM Journal on Discrete Mathematics}, 15\penalty0
  (4):\penalty0 530--537, 2002.
\newblock URL \url{http://dx.doi.org/10.1137/S0895480198332156}.

\bibitem[Herroelen and Roel(2005)]{herroelen2005project}
Willy Herroelen and Leus Roel.
\newblock Project scheduling under uncertainty: Survey and research potentials.
\newblock \emph{European Journal of Operational Research}, 165\penalty0
  (2):\penalty0 289 -- 306, 2005.
\newblock URL
  \url{http://www.sciencedirect.com/science/article/pii/S0377221704002401}.

\bibitem[Hung et~al.(1993)Hung, Hsu, and Sung]{hung_et_al_93}
Chun-Nan Hung, Lih-Hsing Hsu, and Ting-Yi Sung.
\newblock The most vital edges of matching in a bipartite graph.
\newblock \emph{Networks}, 23\penalty0 (4):\penalty0 309--313, 1993.
\newblock URL \url{http://dx.doi.org/10.1002/net.3230230413}.

\bibitem[Jain(2001)]{Jain}
Kamal Jain.
\newblock A factor 2 approximation algorithm for the generalized steiner
  network problem.
\newblock \emph{Combinatorica}, 21\penalty0 (1):\penalty0 39--60, 2001.
\newblock URL \url{http://dx.doi.org/10.1007/s004930170004}.

\bibitem[Jain and Vazirani(2000)]{jain2000approximation}
Kamal Jain and Vijay~V. Vazirani.
\newblock An approximation algorithm for the fault tolerant metric facility
  location problem.
\newblock In Klaus Jansen and Samir Khuller, editors, \emph{Approximation
  Algorithms for Combinatorial Optimization: Third International Workshop,
  APPROX 2000 Saarbr{\"u}cken, Germany, September 5--8, 2000 Proceedings},
  pages 177--182, Berlin, Heidelberg, 2000. Springer Berlin Heidelberg.
\newblock URL \url{http://dx.doi.org/10.1007/3-540-44436-X_18}.

\bibitem[Kouvelis and Yu(1997)]{kouvelis_yu_97}
Panos Kouvelis and Gang Yu.
\newblock \emph{Robust discrete optimization and its applications}, volume~14
  of \emph{Nonconvex Optimization and Its Application}.
\newblock Kluwer Academic Publishers, 1997.

\bibitem[Laroche et~al.(2014)Laroche, Marchetti, Martin, and
  Róka]{laroche_et_al_14}
Pierre Laroche, Franc Marchetti, S{\'e}bastien Martin, and Zsuzsanna Róka.
\newblock Bipartite complete matching vertex interdiction problem: Application
  to robust nurse assignment.
\newblock In \emph{Control, Decision and Information Technologies (CoDIT), 2014
  International Conference on}, pages 182--187, 2014.
\newblock URL \url{http://dx.doi.org/10.1109/CoDIT.2014.6996890}.

\bibitem[Lov{\'a}sz and Plummer(1986)]{lovasz_plummer_book_86}
L{\'a}szl{\'o} Lov{\'a}sz and Michael~D Plummer.
\newblock \emph{Matching theory}.
\newblock Volume 121 of North-Holland Mathematics Studies. North-Holland
  Publishing Co., Amsterdam, 1986.

\bibitem[Olver(2010)]{thesis_olver}
Neil~K. Olver.
\newblock \emph{Robust network design}.
\newblock PhD thesis, Department of Mathematics and Statistics, McGill
  University, Montréal Québec, Canada, 2010.

\bibitem[\"Oncan et~al.(2013)\"Oncan, Zhang, and Punnen]{oncan_et_al_13}
Temel \"Oncan, Ruonan Zhang, and Abraham~P. Punnen.
\newblock The minimum cost perfect matching problem with conflict pair
  constraints.
\newblock \emph{Computers \& Operations Research}, 40\penalty0 (4):\penalty0
  920--930, 2013.
\newblock URL
  \url{http://www.sciencedirect.com/science/article/pii/S0305054812002481}.

\bibitem[Plesn{\'\i}k(1972)]{plesnik_72}
J{\'a}n Plesn{\'\i}k.
\newblock Connectivity of regular graphs and the existence of 1-factors.
\newblock \emph{Matematick{\`y} {\v{c}}asopis}, 22\penalty0 (4):\penalty0
  310--318, 1972.

\bibitem[Porteous and Aldred(1996)]{porteous_aldred_96}
M.I. Porteous and Robert~E.L. Aldred.
\newblock Matching extensions with prescribed and forbidden edges.
\newblock \emph{Australasian Journal Of Combinatorics}, 13:\penalty0 163--174,
  1996.

\bibitem[R{\'e}gin(1994)]{regin_94}
Jean-Charles R{\'e}gin.
\newblock A filtering algorithm for constraints of difference in {CSP}s.
\newblock In \emph{Proceedings of the Twelfth National Conference on Artificial
  Intelligence (Vol. 1)}, AAAI '94, pages 362--367, Menlo Park, CA, USA, 1994.
  American Association for Artificial Intelligence.
\newblock URL \url{http://dl.acm.org/citation.cfm?id=199288.178024}.

\bibitem[Sebő and Vygen(2014)]{sebHo2014shorter}
András Sebő and Jens Vygen.
\newblock Shorter tours by nicer ears: 7/5-approximation for the graph-tsp, 3/2
  for the path version, and 4/3 for two-edge-connected subgraphs.
\newblock \emph{Combinatorica}, 34\penalty0 (5):\penalty0 597--629, 2014.
\newblock URL \url{http://dx.doi.org/10.1007/s00493-014-2960-3}.

\bibitem[Simchi-Levi and Wei(2012)]{simchi2012understanding}
David Simchi-Levi and Yehua Wei.
\newblock Understanding the performance of the long chain and sparse designs in
  process flexibility.
\newblock \emph{Operations Research}, 60\penalty0 (5):\penalty0 1125--1141,
  2012.
\newblock URL \url{http://dx.doi.org/10.1287/opre.1120.1081}.

\bibitem[Simchi-Levi and Wei(2015)]{simchi2015worst}
David Simchi-Levi and Yehua Wei.
\newblock Worst-case analysis of process flexibility designs.
\newblock \emph{Operations Research}, 63\penalty0 (1):\penalty0 166--185, 2015.
\newblock URL \url{http://dx.doi.org/10.1287/opre.2014.1334}.

\bibitem[Swamy and Shmoys(2003)]{swamy2003fault}
Chaitanya Swamy and David~B. Shmoys.
\newblock Fault-tolerant facility location.
\newblock In \emph{Proceedings of the Fourteenth Annual ACM-SIAM Symposium on
  Discrete Algorithms}, SODA '03, pages 735--736, Philadelphia, PA, USA, 2003.
  Society for Industrial and Applied Mathematics.
\newblock URL \url{http://dl.acm.org/citation.cfm?id=644108.644228}.

\bibitem[Tang(2006)]{tang2006robust}
Christopher~S. Tang.
\newblock Robust strategies for mitigating supply chain disruptions.
\newblock \emph{International Journal of Logistics Research and Applications},
  9\penalty0 (1):\penalty0 33--45, 2006.
\newblock URL
  \url{http://www.tandfonline.com/doi/abs/10.1080/13675560500405584}.

\bibitem[Tassa(2012)]{tassa_12}
Tamir Tassa.
\newblock Finding all maximally-matchable edges in a bipartite graph.
\newblock \emph{Theoretical Computer Science}, 423:\penalty0 50--58, 2012.
\newblock URL
  \url{http://www.sciencedirect.com/science/article/pii/S0304397511010474}.

\bibitem[Valencia and Vargas(2016)]{valencia_vargas_16}
Carlos~E. Valencia and Marcos~C. Vargas.
\newblock Optimum matchings in weighted bipartite graphs.
\newblock \emph{Boletín de la Sociedad Matemática Mexicana}, 22\penalty0
  (1):\penalty0 1--12, 2016.
\newblock URL \url{http://link.springer.com/10.1007/s40590-015-0065-7}.

\bibitem[Wang et~al.(2009)Wang, Yuan, and Zhou]{wang_et_al_09}
Xiumei Wang, Jinjiang Yuan, and Sujing Zhou.
\newblock Edge-deletable {IM}-extendable graphs with minimum number of edges.
\newblock \emph{Discrete Mathematics}, 309\penalty0 (16):\penalty0 5242 --
  5247, 2009.
\newblock URL
  \url{http://www.sciencedirect.com/science/article/pii/S0012365X09001794}.

\bibitem[Yu and Yang(1998)]{yu_yang_98}
Gang Yu and Jian Yang.
\newblock On the robust shortest path problem.
\newblock \emph{Computers \& Operations Research}, 25\penalty0 (6):\penalty0
  457 -- 468, 1998.
\newblock URL
  \url{http://www.sciencedirect.com/science/article/pii/S0305054897000853}.

\bibitem[Zenklusen(2010)]{zenklusen2010matching}
Rico Zenklusen.
\newblock Matching interdiction.
\newblock \emph{Discrete Applied Mathematics}, 158\penalty0 (15):\penalty0 1676
  -- 1690, 2010.
\newblock URL
  \url{http://www.sciencedirect.com/science/article/pii/S0166218X10002143}.

\end{thebibliography}

  \bibliographystyle{plainnat}
}  
\end{document}